\let\left\mleft
\let\right\mright
\newtheorem{theorem}{Theorem}[section]
\newtheorem{corollary}[theorem]{Corollary}
\newtheorem{lemma}[theorem]{Lemma}
\newtheorem{claim}[theorem]{Claim}
\theoremstyle{definition}
\newtheorem{definition}[theorem]{Definition}
\newtheorem{remark}[theorem]{Remark}
\newcommand{\eps}{\varepsilon}
\crefname{algocf}{Algorithm}{Algorithms}
\Crefname{algocf}{Algorithm}{Algorithms}
\crefname{claim}{Claim}{Claims}
\Crefname{claim}{Claim}{Claims}
\crefname{Distribution}{Distribution}{Distributions}
\Crefname{Distribution}{Distribution}{Distributions}
\crefname{Protocol}{Protocol}{Protocols}
\Crefname{Protocol}{Protocol}{Protocols}
\DeclarePairedDelimiter{\bk}{(}{)}
\DeclarePairedDelimiter{\Bk}{[}{]}
\DeclarePairedDelimiter{\BK}{\{}{\}}
\DeclarePairedDelimiter{\abs}{\lvert}{\rvert}
\DeclarePairedDelimiterX\mysetbase[2]{\lbrace}{\rbrace}{#1\,\delimsize\vert\,#2}
\NewDocumentCommand{\myset}{sO{}m m}{%
  \IfBooleanTF{#1}%
    {\mysetbase*{#3}{#4}}%
    {\mysetbase[#2]{#3}{#4}}%
}
\DeclareMathOperator*{\E}{\mathbb{E}}
\let\Pr\PrAux
\DeclareMathOperator{\poly}{poly}
\renewcommand{\d}{\mathrm{d}}
\newcommand{\defeq}{\coloneqq}
\renewcommand{\l}{\ell}
\renewcommand{\emptyset}{\varnothing}
\renewcommand{\epsilon}{\eps}
\newcommand{\defn}[1]{\emph{\boldmath\textbf{#1}}}
\newcommand{\numberthis}{\addtocounter{equation}{1}\tag{\theequation}}
\newcommand{\smallsub}{\scriptscriptstyle}
\xpatchcmd\thmt@restatable{%
\csname #2\@xa\endcsname\ifx\@nx#1\@nx\else[{#1}]\fi
}{%
\ifthmt@thisistheone
\csname #2\@xa\endcsname\ifx\@nx#1\@nx\else[{#1}]\fi
\else
\csname #2\@xa\endcsname[{Restated}]
\fi}{}{}
\newcommand{\hBucket}{h_{\textup{buk}}}
\newcommand{\U}{\mathcal{U}}
\newif\ifdraft
\newif\ifnotanon
\title{Tight Bounds and Phase Transitions for Incremental and Dynamic Retrieval}
\author{
William Kuszmaul%
\thanks{Partially supported by a Harvard Rabin Postdoctoral Fellowship and by a Harvard FODSI fellowship under NSF grant DMS-2023528. \texttt{kuszmaul@cmu.edu}.}\\
CMU
\and
Aaron Putterman%
\thanks{Supported in part by the Simons Investigator Award of Madhu Sudan and Salil Vadhan and NSF Award CCF 2152413. \texttt{aputterman@g.harvard.edu}.}\\
Harvard University
\and
Tingqiang Xu%
\thanks{\texttt{xtq23@mails.tsinghua.edu.cn}.}\\
Tsinghua University
\and
Hangrui Zhou%
\thanks{\texttt{zhouhr23@mails.tsinghua.edu.cn}.}\\
Tsinghua University
\and
Renfei Zhou%
\thanks{\texttt{renfeiz@andrew.cmu.edu}.}\\
CMU
}
\date{}
\begin{document}

\maketitle

\begin{abstract} 
    Retrieval data structures are data structures that answer key-value queries without paying the space overhead of explicitly storing keys. The problem can be formulated in four settings (static, value-dynamic, incremental, or dynamic), each of which offers different levels of dynamism to the user. In this paper, we establish optimal bounds for the final two settings (incremental and dynamic) in the case of a polynomial universe. Our results complete a line of work that has spanned more than two decades, and also come with a surprise: the incremental setting, which has long been viewed as essentially equivalent to the dynamic one, actually has a phase transition, in which, as the value size $v$ approaches $\log n$, the optimal space redundancy actually begins to \emph{shrink}, going from roughly $n \log \log n$ (which has long been thought to be optimal) all the way down to $\Theta(n)$ (which is the optimal bound even for the seemingly much-easier value-dynamic setting). 
\end{abstract}

\pagenumbering{arabic}
\thispagestyle{empty}

\section{Introduction}

\SetKwFunction{Insert}{Insert}
\SetKwFunction{Update}{Update}
\SetKwFunction{Query}{Query}
\SetKwFunction{Delete}{Delete}

Given a set of $n$ keys $K \subseteq \mathcal{U}$, and given a $v$-bit value $f(k)$ to be associated with each key $k\in K$, a \emph{retrieval data structure} is a data structure that can answer queries of the form
$$ \texttt{Query}(k) = \begin{cases} f(k) & \text{ if }k \in K \\ \text{anything} & \text{ otherwise.}\end{cases}$$ 
In general, the universe $\mathcal{U}$ of possible keys and the size $v$ of values could be arbitrary -- however, the most common case, and the case we will focus on in this paper, is the setting where $|\mathcal{U}| = n^{1 + \Theta(1)}$ and where $v \le \log n$.\footnote{In fact, the restriction that $v \le \log n$ will not be necessary for our results, but it will simplify the discussion of past upper bounds for the dynamic version of the problem.}

Intuitively, retrieval data structures operate in a manner akin to a promise problem, in the sense that for $k \in K$, $\Query{k}$ must be $f(k)$, but for $k \not\in K$, $\Query{k}$ can be anything. As we shall discuss, this allows for the data structure to use far fewer bits than would be required to store the key value-pairs $\{(k, f(k)) \mid k \in K\}$ explicitly.

In addition to queries, there are three additional operations that a retrieval data structure can support in order to offer varying levels of dynamism: 
\begin{enumerate}
    \item \Update{$k_i, f_i$}: Given some $k_i \in S$, updates the value stored for $k_i$ to be $f_i$. 
    \item \Insert{$k_i, f_i$}: Given $k_i \notin S$, adds $k_i$ to $S$ and stores value $f_i$ for $k_i$. Insertions are only allowed if $|S| < n$. 
    \item \Delete{$k_i, f_i$}: Given $k_i \in S$, removes $k_i$ from $S$. 
\end{enumerate}

A retrieval data structure is said to be \emph{static} if it supports only queries; to be \emph{value-dynamic} if it supports only queries and updates; to be \emph{incremental} if it supports queries, updates, and insertions; and to be \emph{dynamic} if it supports all of queries, updates, insertions, and deletions.

Regardless of the setting, any solution must use at least $nv$ bits. If the solution uses space $nv + R$, for some $R$, then it is said to have \emph{redundancy} $R$. The main task in the study of retrieval data structures is to determine the optimal redundancy $R$ for each of the four settings \cite{DP08, Por09, DW19, KW24, CKRT04, MPP05, DHPP06}. 

In the static and value-dynamic settings, one can use minimal perfect hashing \cite{Meh84, HT01} to achieve a redundancy of $R = \Theta(n)$. However, in the static setting, it turns out that one can do even better: a sequence of works by Dietzfelbinger and Pagh \cite{DP08}, Porat \cite{Por09}, and Dietzfelbinger and Walzer \cite{DW19} show that the optimal redundancy is of the form $o(n)$, and that this can even be achieved with $O(1)$ query time. Perhaps surprisingly, the value-dynamic setting actually requires more space than the static one. This was shown in a recent result by Kuszmaul and Walzer \cite{KW24} who established a tight bound of $R = \Theta(n)$ for any value-dynamic solution. 

The incremental and dynamic settings have also been studied \cite{CKRT04, MPP05, DHPP06}. The state-of-the-art upper bound, by Demaine, Meyer auf der Heide, Pagh, and P\v{a}tra\c{s}cu (Theorem 3 of \cite{DHPP06} with $t = n/v$), is a dynamic constant-time solution that uses space 
$$nv + \Theta(n \log \log n)$$
bits, making for a redundancy of $R = \Theta(n \log \log n)$. The best known lower bound, which is due to Mortensen, Pagh, and P\v{a}tra\c{s}cu \cite{MPP05}, holds for \emph{both} the incremental and dynamic cases, and establishes that any solution must use space at least 
$$\Omega(nv + n \log \log n)$$ 
bits. 

Combined, the upper and lower bounds for the incremental and dynamic cases would seem to suggest that the optimal redundancy should be $R = \Theta(n\log \log n)$ bits. However, the gap between the bounds leaves open an intriguing possibility. It is conceivable that, as $v$ grows, there is an \emph{economy of scale} in which the optimal redundancy $R$ actually \emph{shrinks}. Notice, in particular, that, in the common case where $v = \Omega( \log \log n)$, the known lower bound does not say anything nontrivial about $R$. 

In this paper, we show that such an economy of scale \emph{actually does occur} in the incremental case. We also develop new lower bounds that allow for us to obtain tight asymptotic bounds on $R$ in both the incremental and dynamic cases.

\paragraph{Our results.}
We begin by considering the incremental setting. Our first result is a new upper bound, showing that as $v$ approaches $\log n$ (and also for $v > \log n$), it becomes possible to achieve redundancy significantly smaller than $n \log \log n$. 

\newcommand{\tmpfootnote}{\footnote{Strictly speaking, the space bound of our algorithm is $nv + O(n) + O\bk*{\max\BK*{n \log \bk*{\log(n)/v}, \, 0}}$ bits, as $n \log \bk*{\log(n)/v}$ is negative when $v > \log(n)$. For conciseness, we omit $\max\{\,\cdot\,, 0\}$ for this term in the rest of the paper.}
}
\begin{restatable}{theorem}{introSpaceEfficient}
\label{thm:introSpaceEfficient}
\label{thm:infoTheoryUpperBound}
For universe size $|\mathcal{U}| = \poly(n)$, there is an incremental retrieval data structure that supports all operations using an expected size of $S \leq nv + O(n) + O \left ( n \log \left ( \frac{\log(n)}{v} \right ) \right )$ bits.%
\tmpfootnote
\end{restatable}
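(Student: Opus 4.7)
The plan is a two-level bucketing scheme whose bucket size is tuned to $v$. Set $B \defeq \Theta\bk*{1 + \log(n)/v}$ and distribute keys with a hash function $\hBucket : \U \to [m]$, where $m \defeq \lceil n/B \rceil$, so that each bucket expects $B$ keys carrying $\Theta(\log n)$ total bits of value data. Inside a bucket, keys are identified by their image under a second, independent hash function producing a ``local signature'' of $\Theta(\log B)$ bits, and each bucket is stored as a short list of (signature, value) pairs. This choice of $B$ is exactly the one that balances the two main sources of redundancy: the per-key signature cost $\Theta(\log B) = \Theta(\log(\log(n)/v))$ summed over all $n$ keys, and the per-bucket header cost summed over the $m$ buckets.

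To support insertions, I would manage the variable-length bucket records in a shared amortised-growth arena. Each bucket header (current load and arena offset) can be compressed to $O(\log B)$ bits per bucket by grouping buckets into regions with a shared base pointer. The operation \texttt{Insert} appends a new (signature, value) pair into its bucket's block (reallocating when necessary), \texttt{Update} rewrites the matching pair after a linear scan of the small block, and \texttt{Query} scans for the matching signature. Any bucket whose load exceeds a constant factor of $B$, or whose local signatures fail to be distinct, is declared ``bad'' and its full key-value pairs are migrated into an auxiliary exact dictionary keyed on $(k, f(k))$. By Chernoff and birthday-style bounds, the expected number of bad keys is $O\bk*{n/\poly(\log n)}$, so this auxiliary dictionary contributes only $O(n)$ bits in expectation.

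Summing the three pieces gives
\[
\underbrace{n v}_{\text{values}} \;+\; \underbrace{O(n \log B)}_{\text{signatures}} \;+\; \underbrace{O(m \log B)}_{\text{bucket headers}} \;+\; \underbrace{O(n)}_{\text{arena slack, overflow store}} \;=\; nv + O(n) + O\bk*{n \log(\log(n)/v)},
\]
which is the claimed bound. The technical core will be keeping the overflow store at $O(n)$ bits in the regime $v = \Omega(\log(n)/\log \log n)$, where $B = O(\log \log n)$ and naive signature collisions occur in a constant fraction of buckets. I expect to resolve this by equipping each bucket with an in-place secondary mini-dictionary---for example, a tiny dynamic cuckoo or perfect-hash table of size $O(B)$---so that intra-bucket signature collisions are resolved locally without invoking the auxiliary store. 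The auxiliary store then only has to absorb buckets whose \emph{load} (not signature pattern) deviates from the mean, and a calibrated Chernoff bound suffices to keep the expected overflow cost at $O(n)$ uniformly across the whole range $1 \leq v \leq \log n$.
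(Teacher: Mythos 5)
Your scheme has a fundamental gap at exactly the point you flag, and I don't think the proposed fix can close it. The issue is not stochastic (load deviation away from the mean) but information-theoretic: with signatures of $c\log B$ bits for any constant $c$, the expected number of intra-bucket signature collisions across all $m = n/B$ buckets is $\Theta(m\cdot B^{2}/B^{c}) = \Theta(n/B^{c-1})$. When $v\to \log n$ and hence $B = \Theta(1 + \log(n)/v)\to O(1)$, this is $\Theta(n)$ collisions. Each collision involves two keys $k_1, k_2\in K$ in the same bucket with the same signature; since the signature is the \emph{only} information the bucket stores about its members, the data structure literally cannot distinguish $k_1$ from $k_2$, and it must answer both correctly. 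The only remedy is to store one of them with enough key material to disambiguate it from the rest of the bucket's universe slice $\mathcal{U}_b$, which is $\Omega(\log(u/m)) = \Omega(\log n)$ bits. So the overflow store costs $\Omega(n\log n)$ bits in this regime, not $O(n)$. The in-bucket ``mini-dictionary'' idea doesn't help: a cuckoo or perfect-hash table keyed on the stored signatures cannot separate two entries whose keys (signatures) coincide, and keying it on anything finer requires storing that finer information per element, which is exactly the budget you don't have. Note also your claim that ``the expected number of bad keys is $O(n/\poly\log n)$'' only holds when $B = \Omega(\poly\log n)$, i.e.\ when $v = O(\log n / \poly\log n)$---which is precisely the regime where the target bound already equals $\Theta(n\log\log n)$ and the construction is not new.

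The missing ingredient is temporal: in the incremental setting, a key inserted when the table is at load $\ell < n$ can afford a fingerprint that is $\Theta((nv - \ell v)/\ell)$ bits \emph{longer} than what the final load permits, and a longer fingerprint means a lower collision rate. The paper's construction exploits this by inserting keys in $\Theta(\log^* n)$ phases with geometrically shrinking fingerprint universes, interleaved with ``universe reduction'' steps that re-hash the already-inserted (and already collision-free) fingerprints down to smaller universes \emph{without} introducing new collisions. The net effect is that every key's collision probability is determined by the fingerprint size at the moment it was inserted (which may be large), while its final storage cost is determined by the fingerprint size at the end (which is small). A one-shot bucketing scheme with uniform signature lengths inherently forgoes this trade and cannot hit the $nv + O(n + n\log(\log(n)/v))$ bound in the regime $v = \Theta(\log n)$; the paper's own lower bound (\cref{thm:lowerBoundIncremental}) shows that $O(n)$ redundancy is achievable there, while your signature-cost plus overflow-cost accounting bottoms out at $\Omega(n\log n)$.
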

\renewcommand{\tmpfootnote}{}

Our initial construction for Theorem \ref{thm:introSpaceEfficient} is existential, and, \emph{a priori}, would seem quite difficult to implement time efficiently. Nonetheless, with a number of additional data-structural ideas, we show that it is also possible to obtain a time-efficient version of the same theorem, with $O(1)$-time queries and updates, and $O(1)$ amortized-expected time insertions.

To internalize Theorem \ref{thm:introSpaceEfficient}, it is helpful to consider several special cases. When $v  = \Omega( \log(n))$, Theorem \ref{thm:introSpaceEfficient} gives the same result as in the value-dynamic case -- the optimal redundancy becomes $\Theta(n)$ bits. For smaller values of $v$, we get a more subtle tradeoff. For example, when $v = \log(n) / \log\log(n)$, we get a redundancy of $R = O(n\log\log\log(n))$. 

Curiously, the theorem suggests a strong phase transition as one transitions from $v \ll \log n$ to $v = \Omega(\log(n))$. For instance, when $v = \log^{0.99}(n)$, the redundancy that we get is the same as in past results, namely, $\Theta(n \log \log n)$.  Our second result is a lower bound showing that, perhaps surprisingly, the tradeoff curve in Theorem \ref{thm:infoTheoryUpperBound} is optimal. 

\begin{theorem}
    Any incremental retrieval data structure with universe size $|\mathcal{U}| \geq n^3$ requires an expected size of $S \geq nv +\Omega(n) + \Omega \left ( n \log \left ( \frac{\log(n)}{v} \right ) \right )$ bits.
    \label{thm:introlowera}
\end{theorem}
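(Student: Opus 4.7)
The plan is to establish the bound by handling the three additive terms $nv$, $\Omega(n)$, and $\Omega(n\log(\log(n)/v))$ separately, reducing the first two to known results and giving a new encoding argument for the third.

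The $nv$ term is the standard information-theoretic floor: under a uniformly random value assignment, the data structure must encode the $nv$-bit value vector in order to answer queries correctly on the inserted keys. The $\Omega(n)$ term follows from the value-dynamic lower bound of Kuszmaul and Walzer~\cite{KW24}: any incremental data structure, once the initial $n$ keys have been inserted, functions as a value-dynamic data structure on that key set, and thus inherits its $\Omega(n)$ lower bound.

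The technical heart is the new $\Omega(n\log(\log(n)/v))$ term. Setting $t := \log(n)/v$ and assuming $t \geq 2$ (otherwise the term is $0$), I would design a hard adversarial distribution over $n$-step insertion sequences whose input entropy exceeds $nv + \Omega(n \log t)$ bits, and show that distinct sequences force distinct data structure states. A natural candidate is to draw the $n$ keys uniformly without replacement from a sub-universe $U' \subseteq \mathcal{U}$ of size $|U'| = \Theta(nt)$ (which fits since $|\mathcal{U}| \geq n^3$), so that $\log\binom{|U'|}{n} = \Theta(n \log t)$, paired with uniformly random $v$-bit values.

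The critical obstacle, and the classical difficulty of retrieval lower bounds, is that the data structure may answer arbitrarily on keys outside $K$, so the state $D$ does not directly reveal $K$. A naive counting argument (reachable states vs.\ implementable functions $g : \mathcal{U} \to \{0,1\}^v$) only yields the trivial $nv$ bound, because each $g$ is compatible with $\binom{|U'|}{n}$ distinct input pairs, wiping out exactly the structural entropy we want to extract. To overcome this, I would exploit the \emph{incremental} constraint: at each insertion, the data structure must commit to a placement for the new key within an already-constructed layout of $\Theta(nv)$ bits, and each such commitment carries $\Omega(\log(nv) - \log v) = \Omega(\log t)$ bits of irrevocable choice that must be recorded in the final state. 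Formalizing this round-by-round accounting of committed structural information, rather than settling for a one-shot counting bound on the final state, is the most delicate step; one concrete plan is to generalize the~\cite{MPP05} $\Omega(n\log\log n)$ argument from $v = 1$ to arbitrary $v$ by ``absorbing'' $\log v$ of each key's $\log\log n$ structural bits into the value encoding itself, leaving the residual $\log(\log(n)/v)$ bits per key that the data structure cannot hide. The final step would then invoke an $O(n)$-bit auxiliary witness (folded into the $\Omega(n)$ term) to jointly recover $(K,f)$ from $D$, closing the encoding argument.
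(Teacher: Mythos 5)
Your treatment of the $nv$ and $\Omega(n)$ terms is fine, and the reduction to the Kuszmaul--Walzer value-dynamic bound for the $\Omega(n)$ term is exactly what the paper does in a closing remark of its lower-bound section. The issue is the $\Omega\bigl(n\log(\log(n)/v)\bigr)$ term, where the proposal contains the framing of a proof but not its substance, and one of the intuitions it relies on is actively misleading. You write that ``at each insertion, the data structure must commit to a placement for the new key within an already-constructed layout,'' and that this commitment is ``irrevocable.'' That is false as stated: an insertion may rewrite the entire $S$-bit state, so no positional choice made for an early key is pinned down in memory. What \emph{is} irrevocable in the incremental model --- and what the paper actually exploits --- is informational, not positional: the state after later insertions is a function of the intermediate state $F$ and the later inputs only, so it cannot depend on which of the many $(A', R'_{A'})$ pairs consistent with $F$ was the real one.

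The paper turns this into an encoding by inserting in two phases (a set $A$, then a set $B$), defining the \emph{plausible set} of pairs $(a, r_a)$ consistent with $F$ in each coordinate, and showing via a counting/entropy argument that this set has size about $(u/m)\,2^{v}/2^{S/m}$ on average. It then defines \emph{feasibility} (plausible and disjoint from $B$) and shows that all feasible candidates force the final structure $G$ to answer ``$A$,'' so that only $O(n)$ candidates per part can elicit a ``$B$'' answer. This is what lets each $b_i \in B$ that falls in its plausible set be encoded with $\log(m+1)$ bits instead of $\log(u/m)$ bits, and the probability of landing in the plausible set is roughly $2^v/2^{S/m}$ --- the $n\log(\log(n)/v)$ savings come precisely from balancing these two quantities against $S$. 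Your ``round-by-round accounting of committed structural information'' gestures at this but never produces the mechanism; in fact the arithmetic you invoke, $\log(nv)-\log v=\log t$, is wrong ($\log(nv)-\log v=\log n$), which suggests the per-key accounting has not actually been worked out. Similarly, the proposed sub-universe of size $\Theta(nt)$ does not appear to lead anywhere once one concedes (as you do) that the state need not determine $K$; the paper keeps the full polynomial universe and instead extracts the $\log t$ bits per key from the plausibility/feasibility structure. Without something playing the role of that structure, the encoding argument does not close and the proposal is not a proof.
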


Thus, there is necessarily a tight phase transition as the value size goes from slightly below $\log(n)$ bits to slightly above. 

Our final result is a separation between the incremental and dynamic cases. Whereas the incremental case exhibits an economy of scale, we prove that the dynamic case does not. 

\begin{theorem}
    For the dynamic retrieval data structure problem with $|\mathcal{U}| \geq n^3$, any valid data structure requires expected size $S \geq nv +\Omega(n \log\log(n))$ bits.
    \label{thm:introlower}
\end{theorem}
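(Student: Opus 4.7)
The plan is to strengthen the Mortensen-Pagh-P\v{a}tra\c{s}cu (MPP) lower bound $\Omega(nv + n\log\log n)$ into the strictly additive $nv + \Omega(n\log\log n)$, which is necessary because the MPP bound says nothing new about the redundancy $R = S - nv$ once $v \gtrsim \log\log n$. The strategy is an encoding argument that simultaneously extracts two chunks of information from the final state of the data structure: $nv$ bits of value content and a further $\Omega(n\log\log n)$ bits of structural content. Concretely, I would partition $\U$ into buckets and consider the random experiment that inserts $n$ uniformly random keys with i.i.d.\ uniform $v$-bit values, then runs an adversarial sequence of \texttt{Delete}/\texttt{Insert} operations ending with a current set $K$ having a random but controlled profile across the buckets. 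A balls-in-bins calculation — analogous to the one underlying the classical MPP bound — will give that the structural information in the bucket/position profile of $K$ carries entropy $\Omega(n\log\log n)$ beyond what is implied by $(K,f)$ alone.

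The encoder takes the final state together with $o(n\log\log n)$ bits of side information and reconstructs both (i)~the value assignment $f: K \to \zo^v$, contributing $nv$ bits, and (ii)~the bucket/position profile of $K$, contributing $\Omega(n\log\log n)$ bits. Part (i) is immediate: simulate \texttt{Query} on each $k \in K$. Part (ii) proceeds by simulating carefully chosen future \texttt{Delete}/\texttt{Insert} sequences from the known state and observing the data structure's responses on adversarial probe queries. Because correctness must hold on the resulting current set after \emph{every} such future sequence, distinct profiles will have to yield distinct post-simulation states; the encoding is then injective, and we conclude $S \geq nv + \Omega(n\log\log n)$.

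The main obstacle is the extraction step~(ii): the data structure is never obligated to remember past operations, so the structural profile must be extracted via simulated futures rather than read off directly. The planned approach is to use repeated \texttt{Delete}-then-\texttt{Insert} swap operations, designed so that the data structure's correctness constraints on subsequent queries force it to have previously committed to a distinguishing response for each profile. Crucially, this extraction fundamentally requires deletions: in the incremental setting the adversary cannot remove keys, so cannot force the data structure to commit to a canonical profile — which is consistent with Theorem~\ref{thm:introSpaceEfficient}'s $O(n)$-redundancy incremental upper bound for $v \gtrsim \log n$ and pinpoints deletions as the source of the incremental-versus-dynamic phase transition. A secondary technical obstacle is ensuring that the side information is truly $o(n \log\log n)$; I expect to handle this by leveraging concentration of the bucket-load vector (which is the main piece of $(K,f)$-dependent data the encoder cannot recover for free).
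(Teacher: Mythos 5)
Your high-level instincts are sound --- a compression argument is the right tool, and deletions are the essential capability --- but the central extraction step you propose is missing the key idea and, as stated, is circular. The paper's argument hinges on a specific three-phase experiment that your sketch does not capture: first build an intermediate state $F$ by inserting a random set $A$ (one key from each of $n/2$ universe parts $\U_1,\dots,\U_{n/2}$, with random $v$-bit values $\RA$) \emph{together with} a fixed placeholder set $B_0=\{1,\dots,n/2\}$ carrying fresh random values $\RBo$; then delete all of $B_0$ and insert a fresh random $B$ disjoint from $A$ with values $\RB$, producing $G$. Because $F$ already holds $n$ key-value pairs, essentially all of its $\le S$ bits are consumed by the $nv$ bits of value content, so $F$ has almost no slack to distinguish the true $A$ from alternatives $A'$; the paper makes this precise via a plausible-set bound showing the set of candidate tuples $(x,r_x,r_i)$ consistent with $F$ in position $i$ has expected size $\gtrsim (u/n)/2^{\Theta((S-nv)/n)}$, nearly the whole part. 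After $B$ is revealed, almost every such candidate remains ``feasible'' (a witnessing $A'$ is still disjoint from $B$), and $G$ is then \emph{forced} to answer ``in $A$'' when queried with any feasible $x$, so only $O(n)$ elements of each $\U_i$ may elicit the answer ``in $B$.'' Hence $b_i$ can be encoded with $\log n$ bits instead of $\log(u/n)$, and a one-way communication game (Alice sends $A, \RBo, G$, and the short encoding of $B$; Bob recovers $A,\RA,B,\RB,\RBo$) converts this saving into $H(G)\ge nv+\Omega(n\log\log n)$.

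Your proposed extraction --- ``simulate carefully chosen future \texttt{Delete}/\texttt{Insert} sequences and observe responses; distinct profiles must yield distinct post-simulation states'' --- does not close the gap: the state $G$ fully determines all simulated responses, so simulating from a single $G$ cannot distinguish two histories that arrive at the same $G$. What you actually need is that \emph{few histories reach the same $G$}, which is an entropy bound, and that is precisely what the plausibility/feasibility machinery supplies; there is no injectivity to be had from probing alone. A related mismatch is that the ``bucket/position profile'' you hope to extract carries $\Theta(n\log(u/n))=\Theta(n\log n)$ bits of entropy given the bucket assignments, not $\Theta(n\log\log n)$, so ``recover the profile with $o(n\log\log n)$ side information'' is not the right target; the $\Theta(n\log\log n)$ figure only emerges at the equilibrium of a self-referential inequality in $T=S-nv$ arising from the $2^{-\Theta(T/n)}$ probability that a probe lands in the plausible set. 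Finally, the paper explicitly cautions against framing this as ``strengthen MPP'': the MPP argument applies verbatim to the incremental case, where \cref{thm:introSpaceEfficient} shows the bound is false for $v\approx\log n$, and the deletable placeholder $B_0$ --- which saturates $F$ with value content and then vanishes --- is exactly what has no analogue in an insertion-only experiment; your sketch gestures at deletions being necessary but does not isolate this mechanism.
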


Note that one cannot hope to prove Theorem \ref{thm:introlower} by simply extending the previous lower bound \cite{MPP05}. This is because past lower bounds apply to \emph{both} the incremental and dynamic settings, and Theorem \ref{thm:introlower} holds for \emph{only} the dynamic setting. Instead, Theorem \ref{thm:introlower} is proven by combining the ideas from \cite{MPP05} with an intricate sequence of communication-protocol arguments. What we ultimately show is that, if Theorem \ref{thm:introlower} fails, then it must be possible to construct an impossibly efficient one-way protocol (think of this as an impossibly good compression scheme). This style of proof is similar to the style recently used by Kuszmaul and Walzer \cite{KW24}, although the two settings require very different ideas. 

Combined, our results give a significantly more complete picture for the landscape of retrieval data-structure design. The incremental and dynamic cases, which were previously thought to be essentially equivalent, turn out to behave very differently as $v$ grows. And the incremental case, in particular, exhibits an unexpected economy of scale that allows for a redundancy as small as $\Theta(n)$.

\section{Upper Bounds for Incremental Retrieval}
\label{sec:upper_bound}
\paragraph{Technical overview.}

Our upper bound results rely on the following observation regarding incremental retrieval:
Suppose our target is a data structure using $S$ bits of space (recall $S \geq n \cdot v$). 
If some number $\ell \leq n$ of key-value pairs have been inserted into the data structure so far, then we are free to store an extra $\leq \frac{S - \ell \cdot v}{\ell}$ bits of information for each key that has been inserted. 
We will show that, by using these extra bits of information, we can \emph{decrease} the collision rate for the items that are inserted earlier.
Then, as more and more key-value pairs are inserted, we show how to implement a type of \emph{universe reduction} that decreases the amount of extra information that we store for each key, in order to avoid violating the maximum space of $S$ bits. Importantly, because this universe reduction happens \emph{after} the items have been inserted, we can ensure that this step does not introduce any new collisions. Finally, by performing this universe-reduction step $O(\log^*(n))$ times, we can gracefully adjust the collision rate over time to achieve our upper bound of space usage.

To get our efficient update time, we invoke a series of data-structural techniques. The first barrier is in performing the aforementioned \say{universe-reduction} step. Over large universes, efficiently finding a valid universe-reducing hash function is intractable. By bucketing the keys, and representing the universe-reducing hash function properly, we can employ lookup table techniques to bypass this efficiency barrier. In essence, we can reduce the effective size of the universes we are dealing with to be sub-logarithmic, whereby we can employ explicit lookup tables for finding valid universe reductions. This step of the argument also introduces another problem, which is that the buckets we use are quite small -- too small for Chernoff bounds to be effective -- for which we will apply additional techniques to handle certain bad events.

However, this step only gets us to an amortized, expected update time of $O(\log^{*}(n))$, as there can be as many as $\log^{*}(n)$ universe-reduction steps, each of which requires rebuilding the entire underlying data structure. Here, we break the data structure into two parts: a part in which the universe-reduction hash function is stored (this part uses $o(n)$ machine words that can be efficiently rebuilt using lookup-table techniques), and another component storing the actual values for each key. We connect the two components together using tiny-pointer-like techniques \cite{BCFKT23} (along with other techniques from \cite{BCFKT23}). Critically, these techniques allow us to point each key at its value in a way that \emph{does not have to be rebuilt} over time, so that the total cost of performing rebuilds across the $\log^{*}(n)$ universe-reduction steps is $O(n)$ work.

\subsection{Time-Inefficient Construction}
\label{sec:inefficient}
We start by presenting a time-inefficient data structure for incremental retrieval, but yields the optimal space consumption.

\introSpaceEfficient*

\begin{remark}
We can assume $v \ge \log^{0.99}(n)$ in \cref{thm:infoTheoryUpperBound}, because otherwise, the target space usage becomes $nv + O(n \log \log(n))$, which can be achieved by the best-known dynamic retrieval data structure in \cite{DHPP06}. Further, throughout this section, we will also focus on the most interesting case of $v \le 2^{-10} \log(n) = \Theta(\log(n))$ for the convenience of presentation; the case of larger $v$ is strictly easier and can be solved using our data structure with slight, standard modifications.
\end{remark}

The proof of \cref{thm:infoTheoryUpperBound} will proceed in the following manner:
\begin{enumerate}
    \item Initially, we randomly hash the keys to fingerprints in a universe $[n \log^2(n)]$ (i.e., each fingerprint consists of $\log(n) + 2 \log\log(n)$ bits). For most keys, we will store their fingerprints instead of themselves to save space.
    We insert the fingerprints (and the associated values) of the first $n - \frac{20n \log\log(n)}{v}$ inserted keys into a hash table, with the caveat that if any two keys hash to the same fingerprint, we simply store the key inserted later explicitly in a collision set, costing $O(\log n)$ bits. Because we are inserting only $n - \frac{20n \log\log(n)}{v}$ fingerprint-value pairs, we can afford to ``waste'' at most $20 \log \log(n)$ bits for each key in a sense ``for free'', because we can use the memory saved for the values of the remaining $\frac{20 n \log\log(n)}{v}$ keys as storage. The space cost of storing fingerprints for keys is within this allowed waste of space.
    \item However, after inserting these elements, we are stuck, as the fingerprint sizes are too large and will have filled up the memory used for the remaining keys. So, we employ a technique called \emph{universe reduction}, which decreases the universe size for fingerprints (and hence the fingerprint size) while retaining the fact that there are no collisions between keys. This reduces our space consumption sufficiently such that we may continue inserting another round of keys.
    \item We continue inserting fingerprints of keys until we use roughly $nv$ bits of space, then re-sizing the fingerprint universe until we have inserted all $n$ key-value pairs.
    \item The benefit from this iterative procedure is that many keys are inserted with a \emph{larger fingerprint size} and therefore their probability of collision is smaller, so our table uses less space. After resizing, we then simultaneously get the benefit of a low collision rate \emph{and} a small fingerprint size.
\end{enumerate}

As discussed in the above outline, we make use of the following structure:

\begin{definition}
    For a hash table $H$ on a universe of size $U_1$, we say that a hash function $h: [U_1] \rightarrow [U_2]$ is a \emph{perfect universe reducing hash function} if for every pair of keys $k_1, k_2$ in $H$, $h(k_1) \neq h(k_2)$.
\end{definition}

This condition is equivalent to saying that the hash function $h$ reduces the universe size \emph{without} introducing any collisions. The explicit procedure of our incremental retrieval data structure is shown in \cref{alg:incrementalRetrieval}.

\begin{algorithm}[t]
\caption{IncrementalRetrieval$(u, v, (k_i, v_i)_{i = 1}^n)$}\label{alg:incrementalRetrieval}
\DontPrintSemicolon
    Let $\ell$ be the largest integer such that $\log^{(\ell)}(n) \geq \frac{\log(n)}{v}$. \;
    Initialize $C = \emptyset$ (for collisions). \;
    \For{$j \in [\ell]$} {
        Let $n_{j} = n - \sum_{p = 1}^{j-1} n_p - 20n \frac{\log^{(j+1)}(n)}{v}$. \;
        Let $t_{j} = 2\log^{(j+1)}(n)$.
    }
    Let $n_{\ell+1} = n - \sum_{p = 1}^{\ell} n_p$, and $t_{\l + 1} = 2 \log(\log(n)/v)$. \;
    Initialize $H_0$ to be an empty hash table, and let $t_0 \defeq \log (U/n) \geq 2\log\log(n)$. \;
    \For{$j \in [\ell]$} {
        \label{line:loop}
        Let $h_{j}$ be a perfect universe-reducing hash function of $H_{j-1}$ from $[n \cdot 2^{t_{j-1}}]$ to $[n \cdot 2^{t_{j}}]$. \;
        Apply $h_{j}$ to all fingerprints in $H_{j-1}$, creating a new hash table $H_{j}$ with key universe $[n \cdot 2^{t_{j}}]$. Then delete $H_{j-1}$ from the memory.\footnotemark \label{line:first_reduction}\;
        Insert the fingerprints of the next $n_{j}$ keys into $H_{j}$ with their associated values, where the fingerprints are given by $h_{j}(h_{j-1}(\cdots(h_1(k_i))))$. \;
        If the fingerprint of any inserted key is already present in the hash table, store the key-value pair explicitly in $C$.
    }
    Let $h_{\ell+1}$ be a perfect universe-reducing hash function of $H_\l$ from $[n \cdot 2^{t_{\ell}}]$ to $[n \cdot 2^{2 \log (\log(n)/v)}]$, and let $H_{\ell+1}$ be the result of performing the universe reduction on $H_{\ell}$. \label{line:second_reduction}\;
    Insert the final $n_{\ell+1}$ key-value pairs into $H_{\ell+1}$, storing any collisions explicitly. \label{line:last_round}\;
    \Return{$H_{\ell+1}, h_1, \dots, h_{\ell+1}$, $C$.}
\end{algorithm}

\afterpage{
  \footnotetext{A potential challenge is that, when we reconstruct the hash table in a straightforward way, we need a large piece of temporary memory, unless we apply additional techniques to rebuild the hash table in-place. This issue will be resolved in later subsections.}
}

The process of inserting $n$ key-value pairs in \cref{alg:incrementalRetrieval} is partitioned into $\l + 1$ \emph{rounds}, where the first $\l$ rounds correspond to the $\l$ iterations of \cref{line:loop}, and the last round is shown in \cref{line:last_round}. In round $j \in [\l]$, we insert $n_j$ new key-value pairs to the hash table, followed by a universe reduction on \cref{line:first_reduction} or \cref{line:second_reduction}. Note that \cref{line:first_reduction} has no effect on the first round ($j = 1$) as the input table $H_0$ is empty; one may assume the initial hash function $h_1$ maps the keys to initial fingerprints of length $\log n + 2 \log \log n$. When we analyze universe reductions, we will always assume $j \ge 2$. Moreover, we assume all $t_j$ are integers for simplicity; we also have $t_j \ge 20$ since $t_j \ge 2 \log (\log(n) / v)$ where $\log(n) / v \ge 2^{10}$ by our earlier assumption.

The correctness of \cref{alg:incrementalRetrieval} is clear to see. Indeed, for any query operation, we simply check whether the key $k$ is present in the collision set $C$, and if not, we hash the key using $h_1, \dots, h_{\ell+1}$ to get its fingerprint, and return the value associated with the fingerprint in the hash table. Note that it is necessary to store not only the hash table, but also the universe-reducing hash functions. An update operation is much the same, though once we find the fingerprint-value pair, we would simply update the value. Thus, we spend the remainder of this section analyzing the space complexity. The space consumption comes from three sources: the hash table, the universe-reducing hash functions $h_1, \dots, h_{\ell+1}$, and the collision set $C$. We begin by bounding the space used by the hash table.

\begin{claim}\label{clm:intermediateTableSpace}
    For $j \in [\ell]$, at the end of the $j$th round, the space required to store $H_{j}$ is $\leq nv + O(n)$ bits.
\end{claim}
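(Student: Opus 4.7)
The plan is to exhibit a space-optimal (but possibly time-inefficient) encoding of $H_j$ and then verify, by a direct calculation, that the extra fingerprint bits we spend per key are more than paid for by the $v$-bit values that we did \emph{not} have to store --- since only $m_j < n$ keys have been inserted so far. Since this subsection only targets a space bound, any succinct representation suffices. For concreteness I would store the (at most) $m_j \defeq \sum_{p=1}^{j} n_p$ fingerprints currently in the table as a sorted list with their associated values, using a standard gap-encoding (e.g., Elias--Fano), at a cost of at most $\log\binom{n \cdot 2^{t_j}}{m_j} + m_j v + O(m_j)$ bits.

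Next, a short induction on $j$ using the recurrence $n_j = n - \sum_{p<j} n_p - 20n\log^{(j+1)}(n)/v$ telescopes to the closed form $m_j = n - 20 n \log^{(j+1)}(n)/v$. In particular $m_j \le n$, and $m_j \ge 0$ under the standing assumption $v \ge \log^{0.99}(n)$ (which dominates $20\log^{(j+1)}(n)$ for every $j\ge 1$). I would then bound the entropy term via $\log\binom{U}{m} \le m \log(eU/m)$:
\[ \log\binom{n \cdot 2^{t_j}}{m_j} \;\le\; m_j t_j + m_j \log\!\left(\tfrac{e n}{m_j}\right) \;\le\; 2 n \log^{(j+1)}(n) + O(n), \]
using $m_j \le n$, $t_j = 2 \log^{(j+1)}(n)$, and the elementary fact that $x \log(e/x) = O(1)$ on $(0,1]$ (so $m_j \log(en/m_j) = O(n)$). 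Substituting the closed form for $m_j$ into the total gives at most
\[ nv - 20 n \log^{(j+1)}(n) + 2 n \log^{(j+1)}(n) + O(n) \;\le\; nv + O(n), \]
which is exactly the desired bound.

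There is no real obstacle here: once the encoding is chosen, everything follows mechanically from the algorithm's parameter settings. The only point worth being careful about is the ``budgeting'' step --- the per-key redundancy $t_j = 2\log^{(j+1)}(n)$ that we spend on fingerprints has to be dominated by the per-key saving $v\cdot(1 - m_j/n) = 20 \log^{(j+1)}(n)$ coming from the missing $n - m_j$ values. The constant $20$ baked into the definition of $n_j$ is exactly what guarantees this, with a comfortable multiplicative slack of ten that swallows the $O(m_j) = O(n)$ overhead from the encoding and the $m_j\log(en/m_j)$ entropy-like term.
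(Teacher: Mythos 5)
Your proposal is correct and follows essentially the same route as the paper: fix the closed form $m_j = n - 20n\log^{(j+1)}(n)/v$, apply the Stirling-type bound $\log\binom{U}{m}\le m\log(eU/m)$, and observe that the $2n\log^{(j+1)}(n)+O(n)$ cost of storing the fingerprint set is absorbed by the $20n\log^{(j+1)}(n)$ of value bits saved. The only cosmetic difference is that you bound the residual term via the elementary inequality $x\log(e/x)=O(1)$ on $(0,1]$, while the paper uses the cruder (but sufficient) fact $m_j\ge n/2$; both give the same $O(n)$ slack.
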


\begin{proof}
    Recall that in the $j$th round, the universe size we use for the fingerprint is $[n \cdot 2^{2\log^{(j+1)}(n)}]$. Likewise, by the end of the $j$th round, the number of elements that have been inserted into the hash table is $\leq n - 20n \frac{\log^{(j+1)}(n)}{v}$. So, the total space required is
    \[
    \bk*{n - 20n \frac{\log^{(j+1)}(n)}{v}} v + \log \binom{n \cdot 2^{2\log^{(j+1)}(n)}}{n - 20n \frac{\log^{(j+1)}(n)}{v}}.
    \]
In particular, the first term is contributing 
\[
nv - 20 n \log^{(j+1)}(n)
\]
bits, so our goal is to show that 
\[
\log \binom{n \cdot 2^{2\log^{(j+1)}(n)}}{n - 20n \frac{\log^{(j+1)}(n)}{v}} \leq 20 n \log^{(j+1)}(n) + O(n),
\]
as this then leads to an overall bound of $nv + O(n)$.

Using Stirling's approximation, we know
\begin{align*}
&\log \binom{n \cdot 2^{2\log^{(j+1)}(n)}}{n - 20 n \frac{\log^{(j+1)}(n)}{v}} \leq \left (n - 20 n \frac{\log^{(j+1)}(n)}{v} \right ) \cdot \log \bk*{\frac{e n \cdot 2^{2\log^{(j+1)}(n)}}{n - 20 n \frac{\log^{(j+1)}(n)}{v}}} \\
&\leq \left (n - 20 n \frac{\log^{(j+1)}(n)}{v} \right ) \cdot \log(2e \cdot 2^{2\log^{(j+1)}(n)}) \leq 2 n \log^{(j+1)}(n) + O(n)
\end{align*}
as we desire, where the second inequality holds because $n - 20n \frac{\log^{(j+1)}(n)}{v} \geq n/2$.

Thus, the total space required to store $H_j$ for $j \in [\ell]$ is at most $nv + O(n)$.
\end{proof}

Now, we likewise bound the final size required for $H_{\ell+1}$, the final hash table that we return.

\begin{claim}\label{clm:finalTableSpace}
    The space required to store $H_{\ell+1}$ is $\leq nv + O(n + n \log (\log(n) / v))$ bits.
\end{claim}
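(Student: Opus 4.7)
The plan is to follow the same template used in the proof of \cref{clm:intermediateTableSpace}, but now for the final table $H_{\ell+1}$, which has universe size $[n \cdot 2^{2\log(\log(n)/v)}]$ and contains the fingerprints of \emph{all} $n$ inserted keys (with the caveat that any fingerprint collisions have been shunted into the collision set $C$, whose space is handled separately). So effectively $H_{\ell+1}$ stores at most $n$ distinct fingerprint-value pairs drawn from a universe of size $n \cdot 2^{2\log(\log(n)/v)}$.

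The space to store $H_{\ell+1}$ decomposes into two parts: $nv$ bits for the values, plus at most $\log\binom{n \cdot 2^{2\log(\log(n)/v)}}{n}$ bits to encode which fingerprints are present (using an information-theoretically optimal hash-table representation, as in \cref{clm:intermediateTableSpace}). So the task reduces to bounding this binomial coefficient by $O(n + n \log(\log(n)/v))$.

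For this bound, I would apply Stirling's approximation exactly as in the previous claim:
\[
\log \binom{n \cdot 2^{2\log(\log(n)/v)}}{n} \le n \cdot \log\!\left(\frac{e \cdot n \cdot 2^{2\log(\log(n)/v)}}{n}\right) = n \log\!\left(e \cdot 2^{2\log(\log(n)/v)}\right) = 2 n \log(\log(n)/v) + O(n).
\]
Combining the $nv$ term for the values with this fingerprint bound yields the claimed total of $nv + O(n + n \log(\log(n)/v))$ bits.

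There is essentially no obstacle here — the calculation is strictly simpler than the one in \cref{clm:intermediateTableSpace} because we no longer need to cancel an $nv$ term against a deficit $20n \log^{(j+1)}(n)/v$ in the number of inserted keys; all $n$ keys are present and the fingerprint universe is just slightly larger than $n$, so the binomial coefficient is directly small. The only thing to be careful about is to note that we are assuming $v \le 2^{-10}\log(n)$ (so $\log(n)/v \ge 2^{10}$ and the $2\log(\log(n)/v)$ exponent is a well-defined positive integer, consistent with the standing assumption in the section), and to recall that collisions occurring during round $\ell+1$ are stored in $C$ rather than in $H_{\ell+1}$, so the count of elements in $H_{\ell+1}$ is genuinely at most $n$.
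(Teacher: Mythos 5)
Your proposal matches the paper's proof essentially verbatim: both decompose the space into $nv$ bits for values plus $\log\binom{n\cdot 2^{2\log(\log(n)/v)}}{n}$ bits for the fingerprint set, and both apply Stirling's approximation to bound the latter by $2n\log(\log(n)/v) + O(n)$. The extra remarks you add (about the standing assumption $v \le 2^{-10}\log n$ and about collisions being routed to $C$) are correct and consistent with the paper's conventions, but do not change the argument.
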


\begin{proof}
    At the end of the insertions, we have $\leq n$ elements in the hash table. The universe size for fingerprints is $n \cdot 2^{2 \log (\log(n) / v)}$. Thus, we can store all the necessary information with 
    \[
    \log\binom{n \cdot 2^{2 \log (\log(n) / v)}}{n} + nv
    \]
    bits.
    Using Stirling's approximation again, we can bound the first term as 
    \[
    \log\binom{n \cdot 2^{2 \log (\log(n) / v)}}{n} \leq n \cdot \log(2^{2 \log (\log(n) / v)}) + O(n) = 2n \log (\log(n) / v) + O(n)
\]
bits.
\end{proof}

Now, we bound the number of bits required to store the perfect universe-reducing hash functions. 

\begin{claim}\label{clm:intermediateHashSpace}
    In the $j$th round, we require $O(\frac{n}{2^{t_j}})$ bits in expectation to store the universe-reducing hash functions.
\end{claim}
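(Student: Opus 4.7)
The plan is to prove the claim via a rejection-sampling argument in which the data structure records only the index of the first random hash candidate that works, rather than a full description of a perfect hash function. Concretely, fix a public sequence $g^{(1)}, g^{(2)}, \ldots$ of independent uniformly random hash functions $[n \cdot 2^{t_{j-1}}] \to [n \cdot 2^{t_j}]$, let $s_j$ be the least index for which $g^{(s_j)}$ is injective on the at most $n$ fingerprints currently sitting in $H_{j-1}$, set $h_j \defeq g^{(s_j)}$, and encode $h_j$ by writing down $s_j$ using a prefix-free code of length $O(1 + \log s_j)$ (e.g., Elias-$\gamma$).

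First I would lower bound the probability $p_j$ that a single uniform candidate is injective on the current fingerprint set $S_{j-1}$ (with $|S_{j-1}| \leq n$). For any fixed such $S_{j-1}$, a direct calculation gives
\[
p_j \;\geq\; \prod_{i=0}^{n-1}\!\left(1 - \frac{i}{n \cdot 2^{t_j}}\right) \;\geq\; \exp\!\left(-\frac{n}{2^{t_j}}\right),
\]
so $\log_2(1/p_j) = O(n/2^{t_j})$. Next I would observe that, conditioned on $S_{j-1}$, the index $s_j$ is geometric with success probability $p_j$, so $\E[s_j \mid S_{j-1}] = 1/p_j$. By Jensen's inequality, and then taking expectations over $S_{j-1}$,
\[
\E[O(1 + \log_2 s_j)] \;\leq\; O(1) + \log_2 \E[s_j] \;=\; O\!\left(n/2^{t_j}\right),
\]
which is the target bound. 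The additive $O(1)$ per round sums to $O(\log^* n)$ across the $\ell + 1$ rounds, which is absorbed into the $O(n)$ additive term in \cref{thm:infoTheoryUpperBound}.

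The main subtlety I anticipate is interpreting \say{random hash function} within the storage model. Cleanly, the sequence $(g^{(s)})_s$ can be regarded as shared/public randomness that does not count against the data structure's private space, so only the index $s_j$ is charged. Alternatively, one can replace each $g^{(s)}$ by a sample from a sufficiently independent explicit family and store the seed of the successful one inside the data structure; since the regime considered in this section satisfies $v \leq 2^{-10} \log n$ and $t_j = O(\log \log n)$, a per-function seed cost of $O(\log n)$ bits is itself $O(n/2^{t_j})$ and is covered by the same budget. I expect this bookkeeping to be the only delicate point --- the combinatorial collision bound and the Jensen step are routine, and no machinery beyond this subsection's overall framework is needed.
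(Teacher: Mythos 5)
Your proposal is essentially the paper's own proof: a public tape of independent uniform hash candidates, take the first one that is injective on the current fingerprint multiset, store its index, and bound $\E[\log(\text{index})]$ by $O(n/2^{t_j})$ via the geometric distribution and Jensen. The only differences are cosmetic polish: you derive the success probability from the exact sequential product $\prod_{i=0}^{n-1}(1 - i/(n\cdot 2^{t_j}))$ rather than the paper's coarser ``each key avoids all $n$ others'' bound of $(1-2^{-t_j})^n$; and you make the index encoding explicitly prefix-free (Elias-$\gamma$) where the paper just says ``stores the index'' and invokes Jensen. Your remark that the shared-randomness assumption needs addressing matches the paper's parenthetical that this assumption is removed in the later time-efficient construction (\cref{sec:bucketing}, where an explicit family $H^{(j)}$ of size $2^{O(B/2^{t_j})}$ is built per bucket). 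No gap.
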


\begin{proof}
    For simplicity, we assume for now that the data structure is given access to a read-many tape of random bits. This assumption will be removed in later subsections when we introduce the time-efficient implementation of our data structure.
    The data structure breaks this read-many tape into groups of random bits, where each group encodes a hash function. Now, the goal of the data structure is to find a perfect, universe-reducing hash function. In order to do this, the algorithm tries each group, one at a time, until it finds one which encodes such a function, and then stores the index of this universe-reducing hash function (i.e., of the group). We will bound the expected size of this index.

    Consider a uniformly random hash function $h:[n \cdot 2^{t_{j - 1}}] \rightarrow [n \cdot 2^{t_j}]$. Given that the hash table $H_{j-1}$ has $\leq n$ elements, we want to calculate the probability that $h$ maps each of the $\leq n$ keys (fingerprints) in $H_{j-1}$ to distinct elements in $[n \cdot 2^{t_j}]$.

    In the worst case, for a given key $k \in [n \cdot 2^{t_{j - 1}}]$, we must hash $k$ such that it avoids the hash values of all $n$ other keys. Using that $h$ is uniformly random, the probability of this occurring is $\geq 1 - \frac{1}{2^{t_j}}$. So, the probability that \emph{all} keys hash to distinct values is at least
    \[
    \bk*{1 - \frac{1}{2^{t_j}}}^{n} = \bk*{1 - \frac{1}{2^{t_j}}}^{(2^{t_j}) \cdot \frac{n}{2^{t_j}}} \geq e^{-\Theta(n / 2^{t_j})}.
    \]
    Thus, the expected number of hash functions we must try is $\leq e^{\Theta(n / 2^{t_j})}$, and by Jensen's inequality, the expected size of the index we must use is bounded by $O(n / 2^{t_j})$.
\end{proof}

\begin{remark}\label{clm:finalHashSpace}
    By the same logic as above, the space required to store the final hash function $h_{\l + 1}$ is bounded by $O\bk[\big]{\frac{n}{2^{2\log (\log(n) / v)}}}$.
\end{remark}

Finally, we analyze the space required to store the entire collision set.

\begin{claim}\label{clm:intermediateCollisions}\label{clm:finalCollisions}
    In expectation (over the hash functions), the space required to store collisions in the $j$th round is $O\left(\frac{n\log(n)}{v \log^{(j)}(n)}\right)$ bits for $j \in [\l]$, and is $O(n)$ bits for $j = \l + 1$.
\end{claim}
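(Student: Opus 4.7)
For each round $j$, the plan is to bound the expected number of keys added to the collision set $C$, and then multiply by the $O(\log|\U| + v) = O(\log n)$ bits needed to store each such key-value pair explicitly. The main technical ingredient is the pair bound
\[
\Pr\bk*{\widetilde{h}_j(k) = \widetilde{h}_j(k')} = O\bk*{\frac{1}{n \cdot 2^{t_j}}}
\]
for any two distinct keys $k, k'$, where $\widetilde{h}_j = h_j \circ \cdots \circ h_1$ and the probability is over the draws of $h_1, \ldots, h_j$.

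To prove the pair bound I would track the two trajectories $W_i(k) = h_i(W_{i-1}(k))$ and $W_i(k')$, starting from $W_0(k) = k$ and $W_0(k') = k'$, and union-bound over the $j$ levels at which they can first \emph{collapse}. The key observation is that each $h_i$ is uniformly random subject \emph{only} to being injective on the set $X_i$ of at most $n$ previously-inserted level-$(i-1)$ fingerprints, so this conditioning constrains $h_i$ only on $X_i$ and leaves $h_i$ restricted to $[n\cdot 2^{t_{i-1}}]\setminus X_i$ uniform i.i.d.\ in $[n\cdot 2^{t_i}]$. A short case analysis then shows that, conditioned on $W_{i-1}(k) \neq W_{i-1}(k')$, the probability of a level-$i$ collapse is at most $1/(n\cdot 2^{t_i})$: if both values lie in $X_i$ the injectivity forbids it, and otherwise at least one of the two relevant $h_i$-values is uniform and independent of the other. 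Since $2^{t_i} = (\log^{(i)}(n))^2$ decreases rapidly in $i$, the sum $\sum_{i=1}^{j} 1/(n\cdot 2^{t_i})$ is dominated by its $i = j$ term and gives the pair bound.

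Given the pair bound, for each new key $k$ inserted in round $j$, a union bound over the $\leq n$ fingerprints currently in $H_j$ gives $\Pr[k \text{ is added to } C] = O(1/2^{t_j})$, and summing over the $n_j$ new keys yields $O(n_j / 2^{t_j})$ expected collisions (or the slightly smaller $O(n_1^2/(n \cdot 2^{t_1}))$ for $j=1$ since $H_1$ starts empty). Solving the recurrence $n_j = n - S_{j-1} - 20 n\log^{(j+1)}(n)/v$ gives $S_j = n - 20n\log^{(j+1)}(n)/v$ and hence $n_j = O(n\log^{(j)}(n)/v)$ for $2 \le j \le \ell$, together with $n_1 \le n$ and $n_{\ell+1} \le 20 n\log(n)/v^2$ (using $\log^{(\ell+1)}(n) < \log(n)/v$). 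Combining with $2^{t_j} = (\log^{(j)}(n))^2$ for $j \in [\ell]$ and $2^{t_{\ell+1}} = (\log(n)/v)^2$, and multiplying by $O(\log n)$ bits per collision, this yields $O(n\log n/(v\log^{(j)}(n)))$ for $2 \le j \le \ell$; the bound $O(n/\log n) \le O(n/v) = O(n\log n/(v\log^{(1)}(n)))$ for $j=1$ (using $v\le\log n$); and $O(n)$ for $j = \ell+1$.

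The principal obstacle is the pair bound in the presence of the rejection-sampling conditioning used to choose each $h_i$: a priori, conditioning $h_i$ on collision-freeness could bias its behavior on new inputs and break the clean $1/(n \cdot 2^{t_i})$ per-step estimate. The fix is exactly the observation above, that the conditioning event depends only on $h_i$ restricted to $X_i$, so outside $X_i$ the distribution of $h_i$ is untouched and remains uniform i.i.d.; the collapse events at the different levels then decompose cleanly and admit the per-level bound that the rest of the argument relies on.
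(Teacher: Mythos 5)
Your proposal is correct and follows essentially the same route as the paper's proof: the paper also decomposes the collision event over the level at which a key first collides (defining $X_\eta$ as the event that key $k$ first collides at its round-$\eta$ fingerprint, with $\Pr[X_\eta] \le 1/2^{t_\eta}$ because the occupied slots are a $\le 1/2^{t_\eta}$ fraction of the round-$\eta$ universe), union-bounds over levels to get a per-key collision probability of $O(1/2^{t_j})$, and multiplies by $n_j \le 20 n\log^{(j)}(n)/v$ and the $O(\log n)$ cost per collision. Your pair-wise reformulation (bound $\Pr[\widetilde h_j(k)=\widetilde h_j(k')]$ per pair, then union-bound over the $\le n$ existing keys) yields the identical numbers via an equivalent reshuffling of the union bound. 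One thing you do better than the paper: you explicitly isolate and resolve the conditioning subtlety introduced by rejection-sampling each $h_i$ to be collision-free on $X_i$, noting that this constrains $h_i$ only on $X_i$ and leaves $h_i$ uniform elsewhere. The paper tacitly assumes this when asserting $\Pr[X_\eta]\le 1/2^{t_\eta}$, so your version is a tighter write-up of the same argument rather than a genuinely different one.
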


\begin{proof}
    Recall that in the $j$th round, a key $k$ is mapped to a fingerprint by composing the perfect universe-reducing hash functions $h_1, \dots, h_j$. The final fingerprint is then $h_j(h_{j-1}(\cdots h_1(k)))$.
    Fixing a single key $k$ that we are inserting, we bound the probability that $k$'s fingerprint collides with an already-existing fingerprint as follows.

    Let $X_\eta$ denote the event that key $k$ collides with existing keys on its round-$\eta$ fingerprint $h_\eta(h_{\eta-1}(\cdots(h_1(k))))$, but not the round-$(\eta - 1)$ fingerprint $h_{\eta-1}(\cdots(h_1(k)))$. For $\eta \le \l$, since the round-$\eta$ fingerprints of existing keys occupy at most $1 / 2^{2 \log^{(\eta + 1)}(n)}$ fraction of the fingerprint universe, we know $\Pr[X_\eta] \le 1 / 2^{2 \log^{(\eta + 1)}(n)}$; for $\eta = \l + 1$, this probability becomes $\Pr[X_{\l + 1}] \le 1 / 2^{2 \log (\log(n) / v)}$ accordingly.

    If the inserted key $k$ collides with other keys on the round-$j$ fingerprints (so that we store it in the collision set), exactly one of $X_1, \ldots, X_j$ occurs. By a union bound, for $j \le \l$, this probability is bounded by
    \begin{align*}
        \Pr[X_1 \lor X_2 \lor \cdots \lor X_j] \le \sum_{\eta = 1}^j \frac{1}{2^{2 \log^{(\eta + 1)}(n)}} = O\bk*{\frac{1}{2^{2\log^{(j+1)}(n)}}}.
    \end{align*}
    Multiplied by the number of keys we insert in the $j$th round -- which is less than
    $\frac{20n\log^{(j)}(n)}{v}$ \footnote{This statement also holds for the first round ($j = 1$), where the number of inserted keys is still less than $\frac{20 n \log^{(1)}(n)}{v} \ge n$.}
    -- and the $O(\log n)$-bit overhead we spend per collision, we know the expected space to store collisions in the $j$th round is bounded by
    \[
    O\bk*{\frac{1}{2^{2\log^{(j + 1)}(n)}}} \cdot O\bk[\bigg]{\frac{n \log^{(j)}(n)}{v}} \cdot O(\log(n)) = O\bk*{\frac{n \log(n)}{v \log^{(j)}(n)}}
    \]
    bits, as desired.

    By the same reasoning, when $j = \l + 1$, the probability of a newly inserted key colliding with existing fingerprints is
    \[
    \Pr[X_1 \lor X_2 \lor \cdots \lor X_{\l + 1}] \le \sum_{\eta = 1}^{\l} \frac{1}{2^{2\log^{(\eta + 1)}(n)}} + \frac{1}{2^{2 \log(\log(n)/v)}} = O\bk*{\frac{1}{2^{2 \log(\log(n)/v)}}}.
    \]
    Multiplied with the number of keys inserted in the $(\l + 1)$st round -- at most $\frac{20 n \log^{(\l + 1)}(n)}{v}$ -- and the $O(\log n)$-bit overhead per collision, the expected space to store collisions in the $(\l+1)$st round is at most
    \begin{align*}
    &\phantom{{}={}} O\bk*{\frac{1}{2^{2\log(\log(n)/v)}}} \cdot O\bk[\bigg]{\frac{n \log^{(\l + 1)}(n)}{v}} \cdot O(\log(n)) \\
    &\le O\bk*{\frac{n \log^{(\l + 1)}(n)}{v(\log(n)/v)^2} \cdot \log(n)} \\
    &= O\bk*{\frac{n \log^{(\l + 1)}(n)}{\log(n) / v}} \\
    &= O(n),
    \end{align*}
    where the last step holds because $\log^{(\l + 1)}(n) < \log(n) / v$ (recall that $\l$ is the \emph{largest} integer for which $\log^{(\l)}(n) \ge \log(n) / v$).
\end{proof}

Now, we can synthesize the above claims to provide an overall bound on the size of the data structure we are achieving.

\begin{proof}[Proof of \cref{thm:infoTheoryUpperBound}]
    First, we bound the space required to store all of the hash functions. In expectation, this is at most
    \[
    O\left (\frac{n}{2^{2\log (\log(n) / v)}} \right ) + \sum_{j = 1}^{\ell} O\left (\frac{n}{2^{2\log^{(\ell+1)}(n)}} \right ) = O(n)
    \]
    bits by \cref{clm:intermediateHashSpace} and \cref{clm:finalHashSpace}.
    Likewise, the space required for storing all of the elements with collisions explicitly is bounded by 
    \[
    O(n) + \sum_{j = 1}^{\ell} O\left(\frac{n\log(n)}{v \log^{(j)}(n)}\right) 
    \]
    bits in expectation via \cref{clm:intermediateCollisions,clm:finalCollisions}. We can simplify this by observing that when $j = \ell$, $\frac{\log(n)}{v \log^{(j)}(n)} \leq 1$ by our definition of the value $\ell$, and that the terms in the sum terms are decreasing exponentially as $j$ goes from $\ell$ to $1$. Thus, 
    \[
    O(n) + \sum_{j = 1}^{\ell} O\left(\frac{n\log(n)}{v \log^{(j)}(n)}\right) \leq O(n) + O(n) = O(n).
    \]

    Finally, via \cref{clm:intermediateTableSpace,clm:finalTableSpace}, we know that the hash table itself never requires more than $nv + O(n \log (\log(n) / v)) + O(n)$ bits at any point during its construction. Thus, in total, the space required for our data structure is bounded by 
    \[
    nv + O(n \log (\log(n) / v)) + O(n),
    \]
    as we desire.
\end{proof}

However, note that as stated the data structure is fundamentally inefficient. Accessing the hash table, finding the hash functions, and evaluating the hash functions are not explicit operations. Thus, in the following subsections we show how we can overcome these challenges with modifications to the above data structure.

\subsection{Improved Efficiency via Bucketing}
\label{sec:bucketing}
One of the key bottlenecks in the above procedure is the implementation of the universe-reducing hash functions. To improve their time efficiency, we change the hash function construction using bucketing.

\paragraph{Constructing hash functions.}

Recall that in the previous subsection, we use a hash function $h_1 : \mathcal{U} \to [n \log^2 n]$ to hash the keys to fingerprints of size $t_1 \defeq 2 \log \log n$. The idea of our new construction is to partition the fingerprint space $[n \log^2 n]$ into $n/B$ buckets for $B \defeq (\log \log n)^{10}$ -- we use two hash functions $\hBucket : \U \to [n / B]$ and $h_1 : \U \to [B \log^2 n]$, both $B$-wise independent, to map each key to a pair $(b, f) \in [n / B] \times [B \log^2 n]$, where $b$ indicates which bucket the key belongs to, and $f$ is the fingerprint of length $(\log B + 2 \log \log n)$ bits within that bucket. The two hash functions $\hBucket$ and $h_1$ take $O(|\U|^\eps) \le O(\sqrt n)$ space to store, and take $O(1)$ time to evaluate \cite{thorup2013simple}. Later in the universe reductions, we use a function $h_j$ to reduce the fingerprints \emph{within buckets} from a universe of $[B \cdot 2^{t_{j - 1}}]$ to $[B \cdot 2^{t_j}]$, without changing the bucket containing each key. The result is that, in round $j$, each bucket is responsible for maintaining a set of fingerprints in $[B \cdot 2^{t_j}]$.

Moreover, among all $n$ keys to be inserted, the expected number of keys residing in any specific bucket equals $B$. By concentration, we show that any bucket is unlikely to contain more than $4B$ keys:

\begin{claim}
\label{clm:lowFullProbability}
For each inserted key $k$, the probability that the bucket containing $k$ has at least $4B$ other keys is $\log^{-\omega(1)}(n)$.
\end{claim}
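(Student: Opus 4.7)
The plan is to apply a standard tail inequality for sums of $k$-wise independent $\zo$ random variables (e.g., Schmidt--Siegel--Srinivasan / Bellare--Rompel). I would fix the inserted key $k$ in question, condition on the bucket $b \defeq \hBucket(k)$, and for each other inserted key $k_i$ set $X_i \defeq \ind[\hBucket(k_i) = b]$. Since $\hBucket$ is $B$-wise independent over $\U$, the variables $\{X_i\}_{i \ne k}$ are $(B-1)$-wise independent conditional on $\hBucket(k) = b$, and each $X_i$ has conditional mean $B/n$ (the pairwise independence of $\hBucket(k)$ and $\hBucket(k_i)$ makes this conditional probability equal to the unconditional one). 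Hence
\[
\mu \defeq \E\!\sqb*{\textstyle\sum_{i \ne k} X_i \,\Big|\, \hBucket(k) = b} \,=\, \tfrac{(n-1)B}{n} \,<\, B.
\]

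I would then invoke the moment form of the $k$-wise tail bound: for even $k$, any sum $S$ of $k$-wise independent $\zo$ variables with mean $\mu$ satisfies
\[
\Pr\!\sqb*{\abs*{S - \mu} \ge t} \,\le\, 8 \rb*{\frac{k\mu + k^{2}}{t^{2}}}^{k/2} \quad \text{for all } t > 0.
\]
Plugging in $k = B$ (assuming WLOG that $B$ is even; otherwise use $k = B - 1$), $\mu < B$, and deviation $t = 3B$ (which is what $\sum_{i \ne k} X_i \ge 4B$ corresponds to) yields
\[
\Pr\!\sqb*{\textstyle\sum_{i \ne k} X_i \ge 4B \,\Big|\, \hBucket(k) = b} \,\le\, 8 \rb*{\frac{2B^{2}}{9 B^{2}}}^{B/2} \,=\, 8 \cdot (2/9)^{B/2} \,=\, 2^{-\Omega(B)}.
\]
Since this bound is uniform in $b$, the unconditional probability is also $2^{-\Omega(B)} = 2^{-\Omega((\log\log n)^{10})}$, which decays faster than any fixed polynomial in $\log n$ and hence is $\log^{-\omega(1)}(n)$, as claimed.

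The only step that requires genuine care is picking the correct form of the $k$-wise tail bound: the better-known Chernoff-style versions require the small-deviation regime $t \ll \mu$, whereas here both $t$ and $k$ are $\Theta(\mu)$. The moment-based form stated above (proved by expanding $\E[(S - \mu)^{k}]$ and bounding it using the combinatorics of partitions of $[k]$) handles this regime directly, and the rest is a one-line calculation. No structural assumption on the adversary's sequence of inserted keys is needed beyond the $B$-wise independence of $\hBucket$.
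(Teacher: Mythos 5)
Your proposal is correct and follows essentially the same approach as the paper: both bound the number of other keys in $k$'s bucket by a sum of $B$-wise-independent indicators with mean $\approx B$, and then apply a Chernoff-type tail bound for $k$-wise independent sums (the paper cites Theorem 5 of Schmidt--Siegel--Srinivasan, which is the same family of bounds you invoke). Your version is if anything slightly more careful, since you explicitly handle the conditioning on $\hBucket(k)=b$ and the resulting drop to $(B-1)$-wise independence, and you spell out the moment form of the tail bound that works in the regime $t=\Theta(\mu)$, whereas the paper leaves those details implicit.
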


\begin{proof}
    WLOG, assume $k$ is mapped to the first bucket. Let $X_i$ be the indicator variable of the event that the $i$th inserted key resides in the first bucket. We know $\E[X_i] = B/n$ and $\E\Bk*{\sum_{i = 1}^n X_i} = B$. Moreover, $X_1, \ldots, X_n$ are $B$-wise independent. If the number of keys other than $k$ in the first bucket is at least $4B$, we must have $X_1 + \cdots + X_n \ge 4B$; by the Chernoff bound with limited independence \cite[Theorem 5]{schmidt1995chernoffhoeffding}, this occurs with probability only $e^{-\Omega(B)} = \log^{-\omega(1)}(n)$.
\end{proof}

At any time, if a newly inserted key $k$ is mapped to a bucket already containing $4B$ keys, we insert the key to the collision set instead. Thus, as each bucket always contains at most $4B$ keys (fingerprints), it takes only
\[
\log \binom{B \cdot 2^{t_j}}{4B} = O(B \cdot t_j) = O(\poly \log \log(n))
\]
bits to describe the set of fingerprints in each bucket, which can be encoded within $O(1)$ words and manipulated by lookup tables efficiently.

\paragraph{Efficient universe reductions.}

Since the fingerprint universe within each bucket is so small, universe reductions (within each bucket) can also be made efficient via lookup tables. Assume for $j \ge 2$, we hope to reduce the sizes for a set of fingerprints $\BK{f_1, f_2, \dots, f_s} \subseteq [B \cdot 2^{t_{j - 1}}]$, where $s \le 4B$ is the number of keys in that bucket. We define a function $g$, such that on input $\{f_1, f_2, \dots, f_k\}$, $g$ returns the index of a hash function $h$ such that $h(f_1), h(f_2), \dots, h(f_k)$ are all distinct and each $h(f_i)$ is in the round-$j$ universe $[B \cdot 2^{t_{j}}]$. The function $g$ is implementable by lookup tables because (1) $g$'s input size is bounded by $O(\poly \log \log(n))$ bits, and (2) $g$'s output -- the index for a desired universe-reducing hash function $h$ -- is not too long, as we will show shortly. The latter point is also important for our algorithm to keep space-efficient.

\begin{claim}
    \label{clm:efficientHashFamily}
    In the $j$th round ($j \in [2, \, \l + 1]$), there exists a set $H^{(j)}$ of hash functions from $[B \cdot 2^{t_{j - 1}}]$ to $[B \cdot 2^{t_j}]$ that can be accessed with indices of size $O\bk[\big]{\frac{B}{2^{t_j}}}$ bits, such that for any set of $4B$ keys (fingerprints) in $[B \cdot 2^{t_{j-1}}]$, there exists a hash function $h^{(j)} \in H^{(j)}$ which maps these keys to elements in $[B \cdot 2^{t_j}]$ without any collisions. Such a $H^{(j)}$ can be constructed in $o(\sqrt n)$ time with high probability.
\end{claim}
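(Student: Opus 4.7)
The plan is to construct $H^{(j)}$ via the probabilistic method: draw $M$ independent hash functions from a $4B$-wise independent family of maps $[B \cdot 2^{t_{j-1}}] \to [B \cdot 2^{t_j}]$, with $M$ chosen so that, with high probability over the sampling, every $4B$-element subset of $[B \cdot 2^{t_{j-1}}]$ has an injective member in the pool.

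For a fixed subset $S$ with $|S| \le 4B$, $4B$-wise independence gives the injectivity probability exactly as $\prod_{i=0}^{|S|-1} \bk*{1 - \frac{i}{B \cdot 2^{t_j}}}$. Since $t_j \ge 20$ and $i < 4B$, each factor lies in $[1/2, 1]$, so the inequality $1 - x \ge e^{-2x}$ (valid for $x \le 1/2$) yields a lower bound of $\exp(-\Theta(B/2^{t_j}))$ on this probability. A union bound over the at most $\binom{B \cdot 2^{t_{j-1}}}{4B} \le \exp(O(B \cdot t_{j-1})) = \exp(\poly\log\log n)$ candidate subsets shows that the probability $H^{(j)}$ misses some $S$ is at most $\exp(\poly\log\log n - M \cdot \exp(-\Theta(B/2^{t_j})))$, which drops to $1/\poly(n)$ once $M = \Theta\bk*{\exp(\Theta(B/2^{t_j})) \cdot \log n}$. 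The per-bucket index length is then $\log M = O(B/2^{t_j}) + O(\log \log n)$; summed across the $n/B$ buckets that will use $H^{(j)}$, the $O(\log \log n)$ additive slack contributes only $O(n/(\log \log n)^{9}) = o(n)$ bits, comfortably absorbed into the $O(n)$ redundancy budget of \cref{thm:infoTheoryUpperBound}.

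For construction time, each $4B$-wise independent hash function is encoded by a degree-$(4B-1)$ polynomial over $\F_q$ with $q$ a prime of order $\Theta(B \cdot 2^{t_{j-1}})$, costing $O(B \log q) = O(\poly \log \log n)$ bits per function. Sampling and writing out $M$ such polynomials takes $O(M \cdot \poly \log \log n)$ time, and since $B/2^{t_j} \le B = (\log \log n)^{10} = o(\log n)$, we have $M = n^{o(1)}$, yielding total construction time $n^{o(1)} \cdot \poly \log \log n = o(\sqrt n)$; by the bound above the sampled family satisfies the coverage property with high probability.

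The main subtlety I anticipate is the $O(\log \log n)$ gap between the stated per-bucket index length $O(B/2^{t_j})$ and what the union bound literally delivers, which is $O(B/2^{t_j}) + O(\log \log n)$. In regimes where $B/2^{t_j}$ is itself small (for example, in round $\l+1$ when $v$ is close to $\log^{0.99}(n)$), the additive slack strictly dominates the nominal bound; the resolution, as indicated above, is to charge this $o(n)$ slack against the overall $O(n)$ redundancy of the main theorem rather than to insist on a literal per-bucket bound of $O(B/2^{t_j})$. Verifying this absorption cleanly across all $j \in [2, \l+1]$ is the single accounting step where care is required.
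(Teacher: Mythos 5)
You take the same probabilistic-existence route as the paper: sample $M$ random hash functions and union-bound over the $n^{o(1)}$ many subsets of size at most $4B$. Replacing truly random functions with $4B$-wise independent ones is a benign stylistic choice (both families fit in $n^{o(1)}$ space, and for the claim only the index length matters). The substantive divergence is the choice of $M$: you take $M = \Theta\bigl(\exp(\Theta(B/2^{t_j})) \cdot \log n\bigr)$, obtaining a per-bucket index of $O(B/2^{t_j}) + O(\log\log n)$ bits and then absorbing the additive slack into the $O(n)$ budget of \cref{thm:infoTheoryUpperBound}. The paper instead takes $M = 2^{33 B/2^{t_j}}$ and deduces a per-subset failure probability of $\exp(-\Omega(2^{B/2^{t_j}})) = n^{-\omega(1)}$ by asserting $t_j \leq 2\log\log\log n$. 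You are right that this assertion can fail: when $\ell = 1$ (which happens whenever $v < \log(n)/\log\log n$, e.g.\ $v = \log^{0.99}(n)$), round $j = \ell+1 = 2$ has $t_j = 2\log(\log(n)/v)$, which can be as large as $\Theta(\log\log n)$, forcing $B/2^{t_j} = o(1)$; in that regime the claim's literal index bound $O(B/2^{t_j}) = o(1)$ is vacuous (one still needs $\Omega(\log\log\log n)$ index bits just to cover the $\exp(\poly\log\log n)$ subsets), and the paper's union bound no longer delivers $n^{-\omega(1)}$ coverage failure. Your weaker statement --- index length $O(B/2^{t_j} + \log\log n)$, with the slack contributing $O(n\log^*(n)/(\log\log n)^9) = o(n)$ bits in total --- is what actually holds and is all the downstream accounting needs, and your absorption argument is correct. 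One very minor slip: $\binom{B\cdot 2^{t_{j-1}}}{4B} \leq \exp(O(B(t_{j-1} + \log B)))$ rather than $\exp(O(Bt_{j-1}))$ (for $j \geq 3$ the $\log B$ term dominates), but both remain $\exp(\poly\log\log n) = n^{o(1)}$, so nothing downstream changes.
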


\begin{proof}
    First, for a fixed set of $4B$ keys in $[B \cdot 2^{t_{j-1}}]$, a random hash function from $[B \cdot 2^{t_{j-1}}] \rightarrow [B \cdot 2^{t_j}]$ will map the keys to distinct elements with probability at least
    \[
    \left(1-\frac{4B}{B\cdot 2^{t_j}}\right)^{4B}=\left(1-\frac{4}{2^{t_j}}\right)^{\frac{2^{t_j}}{4}\cdot \frac{16B}{2^{t_j}}}\geq 4^{-16B/2^{t_j}},
    \]
    where the last inequality holds as $t_j \ge 20$.

    In particular, among all possible sets of $\leq 4B$ keys, after storing $2^{33B / 2^{t_j}}$ such random functions, in expectation we will have covered an
    \[
    1-\left(1-4^{-16B/2^{t_j}}\right)^{2^{33B/2^{t_j}}}\geq 1-\exp\left(\Omega\bk[\big]{2^{B/2^{t_j}}}\right)
    = 1 - n^{-\omega(1)}
    \]
    fraction of those key sets, where the last equality holds since $t_j \le 2\log \log \log (n)$. Since the number of possible key sets with size $\leq 4B$ is at most
    \[4B{B\cdot 2^{t_{j-1}}\choose 4B}=O\left(4B\cdot (B\cdot 2^{t_{j-1}})^{4B}\right)=n^{o(1)}\] 
    (using the fact that $t_{j-1}\leq O(\log\log n)$), these random hash functions will cover every key set with high probability. Thus, the number of hash functions we need to store is less than $2^{33B / 2^{t_j}}$. Taking the logarithm, it implies that the size of the index we must store for the hash function is bounded by
    \[
    O\left(\frac{B}{2^{t_j}}\right).
    \]
    Further, we have constructed the collection of hash functions in time \[O\left(2^{33B/2^{t_j}}\cdot 4B{B\cdot 2^{t_{j-1}}\choose 4B}\cdot 4B\right)=o\bk[\big]{\sqrt n}.\]
    This concludes the proof.
\end{proof}

\cref{clm:efficientHashFamily} indicates that the hash function for round $j$ requires $B / 2^{t_j}$ bits per bucket to store. Multiplied with the number of buckets $n/B$, the space requirement for round-$j$ universe-reducing hash functions is still $O(n / 2^{t_j})$ in total, which is the same as before (see \cref{clm:intermediateHashSpace}) and does not introduce additional space.

Once we have constructed the hash function families, given the index of $h_j$ ($j \ge 2$) and a fingerprint $f$, we can compute $h_j(f)$ in constant time with the help of a precomputed lookup table. Furthermore, given the encoding of a sequence of hash functions $(h_1, h_2, \ldots, h_j)$ and a key $k$, we can also compute $h_j(h_{j-1}(\cdots(h_1(k))))$ in constant time, which is an operation required for keys inserted in the $j$th round, as stated in the following claim.

\begin{claim}\label{clm:tableForHashfn}
With the help of a lookup table of $n^{o(1)}$ words, given the encoding of a sequence of universe-reducing hash functions $(h_1, h_2, \ldots, h_j)$ and a key $k$, we can compute $h_j(h_{j-1}(\cdots(h_1(k))))$ in constant time.
\end{claim}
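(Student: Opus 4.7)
The plan is to precompute a single lookup table per round that, given a key's intra-bucket fingerprint together with a bucket's encoding of the remaining hash functions, returns the composed fingerprint in one table access. First I would evaluate $\hBucket(k)$ and $h_1(k)$ in $O(1)$ time via the $B$-wise independent families already set up in \cref{sec:bucketing}. This determines the bucket $b$ containing $k$ and the round-$1$ fingerprint $f_1 \in [B \cdot 2^{t_1}]$, whose description has length $t_1 + \log B = O(\log\log n)$ bits.

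Next I would fetch the restrictions of $h_2, h_3, \ldots, h_j$ to bucket $b$. By \cref{clm:efficientHashFamily}, the index describing such a restriction of $h_i$ has worst-case bit length $O(B/2^{t_i})$. Since each $t_i$ for $i \in [2, \ell+1]$ satisfies $t_i \ge t_{\ell+1} = 2\log(\log(n)/v)$, the sum $\sum_{i=2}^{\ell+1} O(B/2^{t_i})$ is controlled by its largest term and is at most $O\bk*{B \cdot (v/\log n)^2 \cdot \ell} = O(\poly \log\log n)$ bits; in the extreme case where individual indices round down to $O(1)$ bits we instead bound the total by $O(\ell) = O(\log^* n)$. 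Either way the per-bucket hash-function encoding occupies $O(\poly \log\log n)$ bits, fits in $O(1)$ machine words, and can be fetched in $O(1)$ time using fixed offsets indexed by $b$.

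With $f_1$ and the per-bucket encoding assembled, the combined input that determines $h_j(h_{j-1}(\cdots h_1(k)))$ has length $O(\poly \log\log n)$ bits. I would therefore precompute, for each $j \in [2, \ell+1]$, a lookup table $T_j$ indexed by (intra-bucket fingerprint, per-bucket hash-function encoding) whose entry is the resulting $O(\log\log n)$-bit fingerprint. Each table has $2^{O(\poly \log\log n)} = n^{o(1)}$ entries, and taking a union over the $O(\log^* n)$ values of $j$ preserves the $n^{o(1)}$ bound on total table size. A single read of $T_j$ then returns $h_j(\cdots h_1(k))$ in $O(1)$ time.

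The main obstacle I anticipate is ensuring that the per-bucket encoding is addressable in $O(1)$ time despite the fact that different rounds contribute different numbers of bits. Since \cref{clm:efficientHashFamily} gives a fixed worst-case bit length on each hash-function index, I would simply reserve that many bits per bucket per round and pack the encodings of $h_2, \ldots, h_{\ell+1}$ contiguously at predetermined offsets, so that bucket $b$'s data block begins at an address computable from $b$ by a single arithmetic operation. This reduces the entire evaluation --- bucket hashing, initial fingerprinting, per-bucket data fetching, and sequential application of $h_2, \ldots, h_j$ --- to a constant number of word-level operations followed by one table lookup, yielding the desired $O(1)$-time evaluation.
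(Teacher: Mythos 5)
Your proposal is correct and takes essentially the same route as the paper: the key step in both is to observe that $k$ itself is too long to index a lookup table, so one first computes the round-1 fingerprint $f_1=h_1(k)$ directly in $O(1)$ time, then feeds $f_1$ together with the $O(\poly\log\log n)$-bit encoding of $h_2,\dots,h_j$ into a lookup table of size $2^{O(\poly\log\log n)}=n^{o(1)}$. Your extra remarks on fixed-offset addressing of the per-bucket hash-function encodings and on the $O(\log^* n)$-term fallback bound are harmless elaborations of details the paper handles elsewhere, and do not change the argument.
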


\begin{proof}
The encoding length of the sequence of universe-reducing hash functions is at most
\[\sum_{j=2}^{\l + 1} O(B / 2^{t_j}) = O(B) = O(\poly \log \log(n))\]
bits. Although the key $k$ itself is too long to serve as the input of a lookup-table function, we take its round-1 fingerprint $f \defeq h_1(k) \in [B \cdot \log^2(n)]$, which takes only $O(\log \log(n))$ bits to represent, and let the lookup table compute $h_j(h_{j-1}(\cdots(h_2(f))))$. The lookup table consists of at most $2^{\poly \log \log(n)} = n^{o(1)}$ words.
\end{proof}

\paragraph{Collision rates.}
The next step is to bound the number of key-value pairs that we insert to the collision set $C$. They consist of two types: (1) the keys that have fingerprint conflicts with other existing keys, and (2) the keys that are mapped to a bucket already containing $4B$ keys. The latter type is already analyzed in \cref{clm:lowFullProbability}, which indicates that the expected number of keys of the latter type is $n / \log^{\omega(1)} n$. The former type can be bounded via the same argument as \cref{clm:intermediateCollisions,clm:finalCollisions},%
\footnote{Although the first hash function $h_1$ is only $B$-wise independent (not truly random), the calculations in \cref{clm:intermediateCollisions,clm:finalCollisions} still hold because they only require the hash function to be pairwise independent.}
which implies that the total number of collisions throughout all $\l + 1$ rounds is at most $O(n / \log n)$ in expectation. Each key-value pair stored in the collision set costs $O(\log n)$ bits of space, so the expected space consumption of the collision set is $O(n)$ bits, which fits our need.

\paragraph{Putting pieces together.}
So far, we have discussed how to construct hash functions supporting fast evaluation and fast universe reduction. Based on them, we proceed to describe our incremental retrieval data structure with an improved running time of $O(\log^*(n))$ per operation.

\begin{lemma}
For $|\U| = n^{1 + \Theta(1)}$ and $v \in [\log^{0.99}(n), \, 2^{-10} \log(n)]$, there is an incremental data structure which uses $nv + O(n \log(\log(n) / v))$ bits of space and takes $O(\log^*(n))$ time per insertion. Updates and queries take $O(1)$ time.
\end{lemma}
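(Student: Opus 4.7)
The plan is to implement \cref{alg:incrementalRetrieval} on top of the bucketing machinery introduced earlier in \cref{sec:bucketing}, so that queries, updates, and non-rebuild insertion work all reduce to $O(1)$ word operations, and so that the $O(\log^*(n))$ universe reductions contribute only $O(\log^*(n))$ amortized time per insertion. The data structure I would maintain has four components: (i) the two initial hash functions $\hBucket$ and $h_1$, of size $O(\sqrt{n})$ bits total; (ii) the composed universe-reducing hash function, encoded per bucket as a sequence of $O(B/2^{t_j})$-bit indices into the families $H^{(2)}, \ldots, H^{(j)}$ from \cref{clm:efficientHashFamily}; (iii) the hash table laid out as $n/B$ buckets, where each bucket compactly stores its current set of at most $4B$ (fingerprint, value) pairs over the round-$j$ universe $[B \cdot 2^{t_j}]$; and (iv) a collision set $C$ storing explicit key-value pairs. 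A single global lookup table of $n^{o(1)}$ words, precomputed in $n^{o(1)}$ time, supports (a) the composed evaluation $h_j(h_{j-1}(\cdots(h_1(k))))$ via \cref{clm:tableForHashfn}, (b) decoding, searching, and re-encoding a bucket, and (c) the function $g$ that, given a bucket's fingerprint set, returns a valid $O(B/2^{t_{j+1}})$-bit index into $H^{(j+1)}$.

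For an insertion of $(k, v)$ during round $j$, I would compute $b \defeq \hBucket(k)$ and $f \defeq h_j(h_{j-1}(\cdots(h_1(k))))$ in $O(1)$ time using the lookup table, decode bucket $b$, and then either (if $f$ collides with an existing fingerprint in that bucket, or if the bucket already contains $4B$ entries) store $(k, v)$ directly in $C$, or insert $(f, v)$ into the bucket in $O(1)$ time. Update and query operations follow the same $O(1)$ bucket lookup and perform an $O(1)$ read or modification. A round counter tracks how many insertions have occurred in the current round and triggers a universe-reduction rebuild once the round's quota of $n_j$ insertions is met.

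For the rebuild at the end of round $j$, I would iterate over the $n/B$ buckets: for each, invoke $g$ to obtain an index for a separating $h \in H^{(j+1)}$, store that index as the bucket's piece of $h_{j+1}$, replace each fingerprint $f$ currently in the bucket by $h(f)$, and rewrite the bucket's encoding in the new fingerprint universe $[B \cdot 2^{t_{j+1}}]$. Since a rebuild has to rewrite the entire underlying bucket storage, it costs $O(n)$ word operations; across the $\l + 1 = O(\log^*(n))$ rounds the total rebuild work is $O(n \log^*(n))$, which amortized over $n$ insertions is $O(\log^*(n))$ per insertion, while updates and queries remain worst-case $O(1)$ since they never trigger a rebuild.

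For space, the analysis from \cref{sec:inefficient} carries over essentially unchanged: the hash table still satisfies \cref{clm:intermediateTableSpace,clm:finalTableSpace} because the concatenation of bucket encodings is a valid representation of the set of (fingerprint, value) pairs; \cref{clm:efficientHashFamily} recovers the $O(n/2^{t_j})$-bit bound on hash-function storage from \cref{clm:intermediateHashSpace}; and the collision set is $O(n)$ bits in expectation, obtained by summing the fingerprint-collision bound from \cref{clm:intermediateCollisions,clm:finalCollisions} with the bucket-overflow bound from \cref{clm:lowFullProbability}, each multiplied by the $O(\log n)$-bit cost per stored pair. The main obstacle is arranging the bucket encoding so that all three primitives -- composed hash evaluation, bucket decode/search/re-encode, and the separator-finding function $g$ -- fit simultaneously inside a single $n^{o(1)}$-word lookup table whose inputs are all $O(\poly \log \log n)$ bits; once that is in place, every per-bucket step is $O(1)$ and the amortization closes.
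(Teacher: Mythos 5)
The central gap in your proposal is in component (iii): you propose that each bucket ``compactly stores its current set of at most $4B$ (fingerprint, value) pairs,'' and then process bucket insertions/decodes in $O(1)$ time via a single lookup table whose inputs are $O(\poly\log\log n)$ bits. That works for the fingerprints alone (encoding length $\le \log\binom{B\cdot 2^{t_j}}{4B} = O(B t_j) = \poly\log\log(n)$ bits), but not once the values are folded in: the value part of a bucket is $\Theta(Bv)$ bits, and for $v$ near the top of the allowed range ($v = \Theta(\log n)$) this is $\Theta\bigl((\log\log n)^{10} \log n\bigr) \gg \log n$ bits, i.e.\ many words. A lookup table over that input would have $n^{\omega(1)}$ entries, and even without the lookup table you cannot insert a value into the middle of a $\Theta(B)$-length array in $O(1)$ time, nor can you store per-bucket variable-size value arrays in contiguous memory without a nontrivial allocation scheme that you have not supplied. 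Your closing sentence acknowledges fitting everything ``inside a single $n^{o(1)}$-word lookup table whose inputs are all $O(\poly\log\log n)$ bits'' as the main obstacle, but it is not an engineering detail — it is where the construction breaks.

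The paper's proof sidesteps this precisely by \emph{separating} the two pieces of state: the buckets store only the fingerprint sets (word-sized, lookup-table friendly), while the associated values live in a separate succinct hash table (\cref{thm:hash-table}, from \cite{BFKKL22}) keyed by the pair $(b, f)$, used with parameter $t = \log^*(n)$. That table gives $O(1)$ worst-case queries/updates, $O(\log^*(n))$ expected insertions, and $\log\binom{n\cdot 2^{t_j}}{n} + n_{\le j}v + O(n)$ bits of space. This is why the lemma's insertion time is $O(\log^*(n))$ even \emph{excluding} rebuilds, whereas you derive $O(\log^*(n))$ purely from amortizing the $\log^*(n)$ rebuilds over $n$ insertions — your accounting is missing the per-insertion cost of storing the value. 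It also means the rebuild at a universe reduction must rewrite that value table (since its keys $(b,f)$ change when $f$ is rehashed), so the rebuild is $O(n)$ per round for a reason you did not identify. Your space argument is also a little off in the same spot: the paper does not argue that ``the concatenation of bucket encodings is a valid representation of the set of (fingerprint, value) pairs''; it bounds the value-table space plus the $8nt_j$ bits of fixed-length fingerprint encodings separately, and shows their sum is $nv + O(n)$ (or $nv + O(n\log(\log n/v))$ in the last round).

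Everything else in your proposal — bucketing with $B = (\log\log n)^{10}$, the $B$-wise independent $\hBucket$ and $h_1$, per-bucket universe-reducing indices via \cref{clm:efficientHashFamily}, composed hash evaluation via \cref{clm:tableForHashfn}, collision bookkeeping, and the $O(\log^*(n))$ amortization of rebuilds — matches the paper and is correct. The missing idea is the dedicated succinct value table.
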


\begin{proof}
The framework of the data structure is still based on \cref{alg:incrementalRetrieval}, despite that we partition keys to buckets and use the hash functions introduced above for universe reductions. In the $j$th round ($j \in [\l + 1]$), we encode the set of fingerprints in each bucket using at most $\log \binom{B \cdot 2^{t_j}}{4B} \le 8 B t_j$ bits, and store these sets as bit strings of a fixed length $8 B t_j$ regardless of the actual number of keys in each bucket, for the convenience of memory management. These sets occupy $\le 8 n t_j$ bits of space in total.

\newcommand{\nlej}[1][j]{n_{\smallsub \le #1}}

In addition, we store the associated values for all keys using a hash table that allows searching for the associated value given the pair $(b, f)$ of a key, where $b$ is the index of the bucket containing the given key and $f$ is the fingerprint of the key in the current round $j$. The \say{key universe} of this hash table is $[n / B] \times [B \cdot 2^{t_j}] \cong [n \cdot 2^{t_j}]$, thus the space usage of the hash table is
\[
\log \binom{n \cdot 2^{t_j}}{\nlej} + \nlej v + O(n)
\numberthis \label{eq:value-table-space}
\]
bits, where $\nlej \defeq n_1 + \cdots + n_j$ is the maximum number of inserted keys in the $j$th round, assuming we use the optimal succinct hash table from \cite{BFKKL22} with $t = \log^*(n)$:

\begin{theorem}{\normalfont\cite{BFKKL22}}\label{thm:tableSpace}\label{thm:hash-table}
Assume $|\U| = \poly(n)$. For any parameter $t \in [\log^*(n)]$, there is a hash table that stores $n$ keys from the universe $\U$ with $v$-bit associated values, uses
\[
\log \binom{|\U|}{n} + nv + O(n \log^{(t)}(n))
\]
bits of space, and supports insertions, deletions, updates, and queries in $O(t)$ expected time. Further, the updates and queries take $O(1)$ time in the worst case.
\end{theorem}

The space consumption of the hash table \eqref{eq:value-table-space}, together with the $8 n t_j$ bits used by the fingerprint sets, can be bounded similarly to \cref{clm:intermediateTableSpace,clm:finalTableSpace}: When $j \le \l$, the space usage is at most
\begin{align*}
& \phantom{{}={}} \log \binom{n \cdot 2^{t_j}}{n} + \nlej v + O(n) + 8 n t_j \\
& \le 2 n t_j + \bk*{n - 20n \frac{\log^{(j + 1)}(n)}{v}} v + 8 n t_j + O(n) \\
& = 2 n t_j + (nv - 10 n t_j) + 8 n t_j + O(n) \\
& = nv + O(n)
\end{align*}
bits (recalling $t_j \defeq 2 \log^{(j + 1) n}$ for $j \le \l$); when $j = \l + 1$, the space usage is bounded by
\[
\log \binom{n \cdot 2^{t_{\l + 1}}}{n} + nv + O(n) + 8 n t_{\l + 1} = nv + O(n \log(\log(n)/v))
\]
bits.
Note that although the hash table stores the set of key fingerprints again, this waste of space is acceptable according to the above calculation.

The third component of our data structure is the collision set, where we explicitly store $O(n / \log(n))$ key-value pairs that encountered two types of collisions, paying $O(\log(n))$ bits of space for each of them. This is also implemented using a hash table in \cref{thm:hash-table} with parameter $t = 1$, occupying an expected space of $O(n)$.

When universe reductions happen, we use lookup tables to find the index of a collision-free hash function for each bucket (see \cref{clm:efficientHashFamily}) and store these indices in the memory. For each bucket, we need to store the $\l$ universe-reducing hash functions $h_1, \ldots, h_\l$ in a consecutive, fixed-sized piece of memory, so that we can read the encoding of them in $O(1)$ time and use a lookup table to support constant-time evaluation as we introduced earlier. Updating the fingerprint sets and hash functions in all buckets only takes $O(n / B)$ time during each universe reduction; however, we also need to rebuild the entire hash table for associated values, because the ``current fingerprints'' for all keys have changed. The latter step costs $O(n)$ time per universe reduction, and since we have at most $\log^*(n)$ universe reductions during all $n$ insertions, this sums up to an expected amortized time of $O(\log^*(n))$ per insertion, which is a bottleneck of our data structure.

We conclude the proof by analyzing the time and space costs of other parts: When we insert a new key-value pair without triggering the universe reduction, we spend $O(1)$ time to update the fingerprint set in a bucket, and spend $O(\log^*(n))$ time to insert an item to either the hash table for associated values or the collision set. Updates and queries are faster: we only need to query or update the hash tables, which take $O(1)$ worst-case time.
For the space cost, in addition to the fingerprint sets, hash tables, and hash functions discussed above, we also need $O(\sqrt n)$ space to store the initial hash functions $\hBucket$ and $h_1$, and $n^{o(1)}$ space to store lookup tables. Both terms are negligible compared to the target redundancy.
\end{proof}

The time bottlenecks of our algorithm are (1) to rebuild the hash table between rounds, and (2) the $O(\log^*(n))$ insertion time for the succinct hash table. The next subsection will show how to change the hash table design and efficiently reconstruct the hash table to achieve the optimal $O(1)$ amortized running time.

\subsection{Improved Efficiency via Linear Probing and Tiny Pointers}
\label{sec:linearprobing}
\label{sec:hashtable}

\newcommand{\hVal}[1][j]{h^{(#1)}_{\textup{val}}}

The key idea to avoid rebuilding the entire hash table between phases is to use the \emph{original keys} instead of the fingerprints as the keys to the hash table for values while avoiding storing the information of the original keys.

Specifically, for the $n_j$ key-value pairs that we will insert during the $j$th round ($j \in [\l + 1]$), we allocate an array of $n_j$ slots, each slot capable of storing a $v$-bit value. When a new key-value pair $(k, y)$ is inserted, we use the \emph{linear probing} algorithm to store $y$ in the first empty slot after $\hVal(k)$, where $\hVal : \U \to [n_j]$ is the hash function used for the linear probing. If the value is stored at the slot $\hVal(k) + p_k$, we say the integer $p_k \ge 0$ is the \defn{offset} of the key $k$. Since the array does not contain any information about the keys, when we retrieve the value associated with a given key $k$, it is necessary to also provide the offset $p_k$ to the data structure. This idea combined with various data-structural techniques leads to the following theorem.

\begin{restatable}{theorem}{LinearProbing}
\label{thm:linearProbing}
Let $\log(n) \le m \le n$ and $v \le \log(n)$.
There is an insertion-only data structure that stores up to $m$ key-value pairs, takes at most $mv + O(m)$ bits of space, and supports the following operations:
\begin{enumerate}
    \item \Insert{$k, y$}: Given a key-value pair $(k, y)$, the hash table stores the value $y$ for $k$, and returns an \emph{offset} $p_k$.
    \item \Update{$k, p_k, y$}: Given an existing key $k$ and its offset $p_k$, the data structure updates $k$'s associated value to $y$.
    \item \Query{$k, p_k$}: Given an existing key $k$ and its offset $p_k$, the data structure returns the value stored for $k$.
\end{enumerate}
All three types of operation take $O(1)$ expected time. Moreover, $\mathbb{E}[\sum \log p_k] \leq O(m)$; the expected number of offsets exceeding $\log^{100}(n)$ is bounded by $O(m / \poly \log(n))$.
\end{restatable}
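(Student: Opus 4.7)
The plan is to build the data structure on top of a linear-probing hash table whose main array has $m' \defeq m + \Theta(m/v)$ slots, and to accelerate insertions using a hierarchical occupancy bitmap on top of it. The main array uses $m' v = mv + O(m)$ bits; the hash function $\hVal : \U \to [m']$ is drawn from a $\Theta(\log n)$-wise independent family stored in $\poly\log(n)$ bits; and the bitmap adds $O(m)$ more bits, consisting of a word-aligned occupancy vector over $A$ together with $O(1)$ levels of aggregation above it. Using word-level bit tricks, plus precomputed lookup tables of $n^{o(1)}$ size so that $\log^{\Omega(1)}(n)$-bit chunks can be processed in a single step, the bitmap supports ``find the smallest unoccupied slot $\ge i$'' and ``mark slot $i$ occupied'' in $O(1)$ expected time. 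On $\Insert(k, y)$, we compute $i_0 = \hVal(k)$, find the next empty slot $s \ge i_0$, write $y$ into $A[s]$, and return $p_k \defeq s - i_0$; $\Query$ and $\Update$ just index into $A[\hVal(k) + p_k]$ in $O(1)$ worst case.

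For the offset statistics, I would invoke Knuth's classical analysis of linear probing: at load factor $\alpha$, a new insertion has expected probe length $\Theta\bk*{1/(1-\alpha)^2}$. When the $i$-th item is inserted, the load is $\alpha_i = (i-1)/m'$, so $\E[p_i] = O\bk*{(m'/(m'-i+1))^2}$, and by Jensen's inequality $\E[\log(1 + p_i)] = O\bk*{\log(m'/(m'-i+1))}$. Summing,
\[
\E\Bk*{\sum_{i=1}^m \log(1 + p_i)} = O(1) \cdot \log \frac{m'^m}{m'!/(m'-m)!},
\]
which by Stirling reduces to $O(m) + O\bk*{(m'-m) \log(m'/(m'-m))} = O(m)$ after plugging in $m' - m = \Theta(m/v)$ and $v = O(\log n)$. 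For the tail, the standard cluster-length argument for linear probing yields $\Pr[p_i > L] \le \exp\bk*{-\Omega(L(1-\alpha_i)^2)}$; with $1-\alpha_i = \Omega(1/\log n)$ throughout and $L = \log^{100}(n)$, this probability is $n^{-\omega(1)}$, so the expected number of offsets exceeding $\log^{100}(n)$ is $O(m/\poly\log(n))$, as required.

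The main obstacle is preserving the $O(1)$ expected insertion cost even as the table fills up to load $1 - \Theta(1/v)$; naive linear probing would spend $\Theta(v^2) = \Theta(\log^2 n)$ time per late insertion by scanning through full runs. The hierarchical bitmap sidesteps this: we jump directly to the next empty slot without ever physically examining the full slots in between, in the same spirit as the free-slot bookkeeping used in \cite{BCFKT23}. Once this is in place, the space accounting, the $\log$-sum bound, and the tail bound on large offsets all follow from plugging the standard linear-probing probe-length distribution into the appropriate sums, with $\Theta(\log n)$-wise independence of $\hVal$ justifying the Knuth-style analysis.
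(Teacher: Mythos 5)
Your proposal takes essentially the same approach as the paper: a linear-probing array with $\Theta(m)$ bits of slack, an $O(1)$-independent hash function, an auxiliary occupancy structure to jump to the next free slot in $O(1)$ time, and offset statistics derived from the probe-length distribution (Jensen for $\E[\sum\log p_k]$, a tail bound for the rare large offsets). The paper instantiates these ideas with $M = m(1 + 1/\log^2 n)$ slots, a 5-wise independent hash with the $O\bk*{(M/(M-i))^{2.5}}$ expected-probe bound of \cite{pagh2011linear}, per-block B-trees over a $\log^{10}(n)$-slot bit vector, and a crude Markov bound ($\E[\sum p_i] = O(m\log^5 n)$, hence at most $O(m/\log^{95}n)$ offsets exceed $\log^{100}n$), while you use $m' = m + \Theta(m/v)$, a higher-independence hash with Knuth's $1/(1-\alpha)^2$, a generic hierarchical bitmap, and a cluster-concentration tail bound; all of these are interchangeable for the claimed conclusions. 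One small overstatement: with only $\Theta(\log n)$-wise independence, Chernoff-type bounds on the number of keys hashing into a window are capped at roughly $n^{-\Theta(1)}$ (cf.\ Schmidt--Siegel--Srinivasan), so you cannot claim $\Pr[p_i > \log^{100}n] = n^{-\omega(1)}$ as for a truly random hash --- but $n^{-\Theta(1)}$ is already far more than the required $O(1/\poly\log n)$, so the statement survives; the paper's Markov-based route sidesteps this entirely.
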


We defer the proof of \cref{thm:linearProbing} to \cref{sec:insertion_only_linear_probing}, and continue to construct time-efficient retrieval data structures for now.

\smallskip

Previously, we stored the set of fingerprints for each bucket by encoding the information into less than a word and using lookup tables to access/update. Now, we do the same for the offsets $p_k$: \cref{thm:linearProbing} shows that most offsets can be encoded using $O(\log \log(n))$ bits, thus all offsets within a bucket can be encoded within $O(B \cdot \log \log(n)) = \poly \log \log(n)$ bits.
In addition to these offsets, we also store a piece of \emph{round information} $i_k$ for each key $k$, representing which round $k$ gets inserted in. We then modify \cref{alg:incrementalRetrieval} to \cref{alg:incrementalRetrievalUpdated}.

\smallskip

\begin{algorithm}[ht]
\caption{IncrementalRetrieval$(u, v, (k_i, v_i)_{i = 1}^n)$}\label{alg:incrementalRetrievalUpdated}
\DontPrintSemicolon
    Let $\ell$ be the largest integer such that $\log^{(\ell)}(n) \geq \frac{\log(n)}{v}$. \;
    Initialize $C = \emptyset$ (for collisions). \;
    \For{$j \in [\ell]$} {
        Let $n_{j} = n - \sum_{p = 1}^{j-1} n_p - 20n \frac{\log^{(j+1)}(n)}{v}$. \;
        Let $t_{j} = 2\log^{(j+1)}(n)$.
    }
    Let $n_{\ell+1} = n - \sum_{p = 1}^{\ell} n_p$, and $t_{\l + 1} = 2 \log(\log(n)/v)$. \;
    Let $t_0 \defeq \log (U/n) \geq 2\log\log(n)$. \;
    Initialize $K_1, K_2,\dots, K_{n/B}$, $P_1, P_2,\dots, P_{n/B}$, and $I_1, I_2,\dots, I_{n/B}$ to be the fingerprints, offsets, and the round information of the keys in each bucket, respectively. \;
    \For{$j \in [\ell]$} {
        Let $h_{j}$ be a perfect universe-reducing hash function of $H_{j-1}$ from $[n \cdot 2^{t_{j-1}}]$ to $[n \cdot 2^{t_{j}}]$. \;
        Apply $h_{j}$ to all fingerprints in $K_1, K_2, \dots, K_{n/B}$ using lookup tables. \;
        Update $P_1, P_2, \dots, P_{n/B}$ and $I_1, I_2, \dots, I_{n/B}$ using lookup tables. \;
        Creating a new linear probing hash table $H_{j}$ with size $n_j$. \;
        \For{key-value pair $(k, y)$ in the next $n_j$ keys} {
            Insert the key-value pair into $H_{j}$, receiving its offset $p_k$. \;
            Find its bucket $b$. \; 
            If the fingerprint of the inserted key is already present in the hash table, or the number of keys in bucket $b$ already achieves $4B$, or $\log p_k\geq 100\log\log (n)$, store the key-value pair explicitly in $C$. \;
            Otherwise, insert it into bucket $b$ and update $K_b, P_b, I_b$.
        }
    }
    Let $h_{\ell+1}$ be a perfect universe-reducing hash function of $H_\l$ from $[n \cdot 2^{t_{\ell}}]$ to $[n \cdot 2^{2 \log (\log(n)/v)}]$, and let $H_{\ell+1}$ be the result of performing the universe reduction on $H_{\ell}$. \;
    Insert the final $n_{\ell+1}$ key-value pairs into $H_{\ell+1}$ similarly to above, storing any collisions/exceptions explicitly. \;
    \Return{$H_{\ell+1}, h_1, \dots, h_{\ell+1}$, $C$.}
\end{algorithm}

In \cref{alg:incrementalRetrievalUpdated}, when storing $P_1, P_2, \dots, P_{n/B}$ and $I_1, I_2, \dots, I_{n/B}$, we concatenate all $p_k$ and $i_k$ in the increasing order of fingerprints, and update them during each insertion.

When we perform updates or queries, we know the original representation of the key temporarily from the input to the operation. We can then use the original key together with its $i_k$ and $p_k$ to update/query the value in the linear-probing hash table.

One remaining challenge is that, the offsets $p_k$ and round information $i_k$ of different keys have different sizes, and so do $P_b$ and $I_b$. As we insert new elements into the data structure, the sizes of $P_b$ and $I_b$ change accordingly, but they never exceed one machine word of $\Theta(\log n)$ bits. In order to allocate memory for $P_b$ and $I_b$ dynamically, we adopt the tool from \cite{BCFKT23}. It shows a hash table maintaining variable-size values within a single word, which supports updates/queries with constant time, and consumes an additional space of $\log^{(r)}(n) + O(\log s)$ bits per $s$-bit value for an arbitrary constant $r > 0$ of our choice. Here we choose $r = 2$, thus wasting $O(\log \log(n))$ bits per bucket.

\paragraph{Memory usage.}

Compared to the previous subsections, the memory usage of \cref{alg:incrementalRetrievalUpdated} has the following parts as the overhead:
\begin{enumerate}
\item The keys with a large offset $\log p_k \ge 100 \log \log(n)$ are stored explicitly in the collision set, costing $O(\log n)$ bits per item. According to \cref{thm:linearProbing}, the expected number of such elements is bounded by $O(n / \poly \log(n))$, so this overhead is $o(n)$ bits.
\item The space to store the offsets $P_b$. \cref{thm:linearProbing} shows that the average offset size is $O(1)$ among all $n$ keys, so this overhead is bounded by $O(n)$ bits.
\item\label{item:i_b} The space to store the round information $I_b$.
\item The space overhead of using \cite{BCFKT23} to dynamically allocate memory for $P_b$ and $I_b$. As discussed, this tool only wastes $O(\log \log(n))$ bits per bucket, which counts for $O((n/B) \log\log(n)) = o(n)$ bits.
\item The space overhead of \cref{thm:linearProbing}, which is only $O(1)$ bits per key and sums up to $O(n)$ bits.
\item Lookup tables. Since all lookup tables we use have input size $O(\poly\log\log(n))$ and output size $O(\log(n))$, it is clear that they occupy $o(n)$ bits of space and $o(n)$ time to precompute.
\end{enumerate}

The space usage of \cref{item:i_b} is bounded by the following claim.

\begin{claim}\label{clm:PandI}
  The encoding length of $I_1, I_2, \ldots, I_{n/B}$ is $O(n)$.
\end{claim}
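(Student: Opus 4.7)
The plan is to exploit the very heavy skew of the round distribution enforced by \cref{alg:incrementalRetrievalUpdated}: by construction the first round absorbs $n_1 = n - 20n \log\log(n)/v$ keys, while \emph{all} of the remaining rounds jointly hold only $n - n_1 = 20n \log\log(n)/v$ keys. Under the standing assumption $v \geq \log^{0.99}(n)$ from the beginning of \cref{sec:inefficient}, this gap is enough to encode $i_k$ with $O(1)$ amortized bits per key, even though the number of rounds $\ell+1$ can be as large as $\Theta(\log^* n)$.

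The concrete encoding I would use is the prefix-free code that assigns the single bit \texttt{0} to $i_k = 1$ and the $(1 + \lceil \log \ell \rceil)$-bit codeword \texttt{1} followed by the binary expansion of $i_k - 2$ to any round $i_k \geq 2$. Within each bucket $b$, the string $I_b$ is the concatenation of these codewords in the (already fixed) increasing-fingerprint order of the bucket's keys. Since the number of keys in each bucket is recoverable from $K_b$, reading codewords one at a time decodes $I_b$ unambiguously, so the encoding is self-delimiting and compatible with the lookup-table and tiny-pointer machinery used elsewhere in the section.

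The bookkeeping is then a one-line entropy-style sum. Since a codeword has length $1$ if $i_k = 1$ and length $1 + \lceil \log \ell \rceil$ otherwise, summing per key and using $\ell = O(\log^* n)$ gives
\[
\sum_{b=1}^{n/B} |I_b| \;=\; n_1 + (n - n_1)\bigl(1 + \lceil \log \ell \rceil\bigr) \;=\; n + O\!\left(\frac{n \log \log(n) \, \log \log^*(n)}{v}\right).
\]
Plugging in $v \geq \log^{0.99}(n)$ makes $\log \log(n)/v = o(1)$, which dominates $\log\log^*(n)$ for large $n$ and therefore drives the second term to $o(n)$, yielding the claimed $O(n)$ bound.

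There is no real obstacle: the only thing to double-check is that one does not lose the skew when passing from the global count $n - n_1$ to the per-bucket counts, but since the sum over buckets of the number of round-$\geq 2$ keys equals exactly $n - n_1$, the per-key charging above is tight regardless of how those keys are spread across buckets. The heart of the argument is simply the deterministic identity $n - n_1 = 20n \log\log(n)/v$ combined with the $v \geq \log^{0.99}(n)$ regime.
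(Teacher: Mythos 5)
Your proof is correct and follows essentially the same approach as the paper: both proofs exploit that the overwhelming majority of keys land in round~1, and that the $O(n\log\log(n)/v)$ remaining keys incur only a modest per-key cost which the regime $v \ge \log^{0.99}(n)$ renders negligible. The only cosmetic difference is the concrete code: the paper bounds $\sum_k i_k$ directly (implicitly a unary-style code, charging $O(i_k)$ bits per key), giving $n + O\!\big(\tfrac{n\log\log(n)}{v}\log^*(n)\big)$, whereas you use a prefix code with a one-bit flag plus a fixed $\lceil\log\ell\rceil$-bit tail, giving the slightly tighter $n + O\!\big(\tfrac{n\log\log(n)}{v}\log\log^*(n)\big)$; both are $O(n)$ under the standing assumption on $v$.
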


\begin{proof}
  At least $n-\frac{n\log\log(n)}{v}$ keys are inserted in the very first round, for each of which we only spend $O(1)$ bits to store $i_k$, and the number of rounds is at most $\log^*(n)$, so the sum of all $i_k$ is less than \[n+\frac{n\log\log(n)}{v}(\log^*(n)-1)=O(n),\] completing the proof.
\end{proof}

Based on the above discussion, we are now ready to prove our final result.

\begin{theorem}
    For $|\U| = n^{1 + \Theta(1)}$ and $v \in [\log^{0.99}(n), \, 2^{-10} \log(n)]$, there is an incremental data structure which uses $nv + O(n \log(\log(n) / v))$ bits of space and takes $O(1)$ amortized-expected time per operation.
\end{theorem}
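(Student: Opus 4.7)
The plan is to combine \cref{alg:incrementalRetrievalUpdated} with the linear-probing hash table from \cref{thm:linearProbing} as the value store, together with the bucketing construction from \cref{sec:bucketing}. The key structural observation is that, because values in \cref{thm:linearProbing} are indexed by the triple consisting of the original key $k$, round marker $i_k$, and offset $p_k$ rather than by fingerprint, a universe reduction does \emph{not} require rebuilding the value store; it only rewrites the per-bucket metadata $K_b, P_b, I_b$.

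For the space accounting I would assemble the bounds already proved. The value store contributes $nv + O(n)$ bits by \cref{thm:linearProbing}. The encodings of fingerprint sets $K_b$ across all buckets use at most $O(n \, t_{\l+1}) = O(n \log(\log(n)/v))$ bits. The offsets satisfy $\E\bk*{\sum_k \log p_k} = O(n)$ by \cref{thm:linearProbing}, so the concatenated $P_b$ use $O(n)$ bits in expectation. The round-marker encodings $I_b$ use $O(n)$ bits by \cref{clm:PandI}. The variable-size storage tool from \cite{BCFKT23} wastes $O(\log\log n)$ bits per bucket, totaling $o(n)$. The collision set absorbs at most $O(n/\log n)$ fingerprint collisions plus $O(n/\poly\log n)$ over-full buckets (by \cref{clm:lowFullProbability}) and keys with $\log p_k \ge 100 \log\log(n)$ (by \cref{thm:linearProbing}), costing $o(n)$ bits overall. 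Lookup tables use $n^{o(1)}$ bits. Summing yields $nv + O(n\log(\log(n)/v))$.

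For the time analysis, queries and updates take $O(1)$ worst-case time, reducing to a single access in the linear-probing store of \cref{thm:linearProbing} once we have $(k, i_k, p_k)$; the triple is recovered from the bucket metadata in $O(1)$ via a lookup-table decode of $K_b$, $I_b$, and $P_b$. An insertion of $(k, y)$ computes the bucket index $b$ and the round-$j$ fingerprint in $O(1)$ via \cref{clm:tableForHashfn}, inserts into the linear-probing store in $O(1)$ expected time, and updates $K_b, P_b, I_b$ in $O(1)$ via lookup tables and the variable-size slots of \cite{BCFKT23}. The crucial change from \cref{sec:bucketing} is the universe-reduction cost: rather than rebuilding the value store (which would cost $\Theta(n)$ per reduction), we only recompute the per-bucket $h_j$ using the lookup-table function from \cref{clm:efficientHashFamily} and rewrite each $K_b$ in $O(1)$ time, costing $O(n/B)$ work per reduction. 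Across all $O(\log^*(n))$ reductions the total rebuild work is $O((n/B)\log^*(n)) = o(n)$, amortizing to $o(1)$ per insertion.

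The main obstacle is verifying that the per-bucket metadata $K_b, P_b, I_b$ — which have variable sizes and must be updated together on every insertion and every universe reduction — can be jointly maintained in $O(1)$ time while fitting within the target space budget. This is precisely what the tiny-pointer machinery of \cite{BCFKT23} delivers: each bucket's concatenated metadata fits within a single machine word and is addressable in $O(1)$ time, so all per-bucket manipulations reduce to lookup-table operations on $\poly\log\log(n)$-bit inputs. Once this is granted, the remaining argument is the routine summation of the overheads enumerated in the preceding subsection, and the theorem follows.
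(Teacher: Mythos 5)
Your proposal follows the same approach as the paper: keep the linear-probing value store from \cref{thm:linearProbing}, addressed by $(k, i_k, p_k)$ rather than by fingerprint, so that a universe reduction only rewrites the per-bucket metadata $K_b, P_b, I_b$, costing $O(n/B)$ per reduction and amortizing to $o(1)$ per insertion over the $O(\log^* n)$ reductions. The time analysis is exactly the paper's.

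The space accounting as written has a small but real gap: you claim the fingerprint sets $K_b$ use at most $O(n\,t_{\l+1}) = O(n\log(\log(n)/v))$ bits, but this bound holds only \emph{after} the final universe reduction. During round $j \le \l$, each bucket's fingerprint set is stored in a fixed $8Bt_j$-bit field with $t_j > t_{\l+1}$, so across the $n/B$ buckets the $K_b$'s occupy $8nt_j$ bits---$\Theta(n\log\log n)$ in round $1$---which already exceeds the stated budget on its own. The structure stays within budget only because at that point at most $n_1 + \cdots + n_j \le n - 20n\log^{(j+1)}(n)/v$ keys are live, so the value store holds at most $nv - 10nt_j + O(n)$ bits, and that $10nt_j$-bit slack covers the $8nt_j$ fingerprint overhead. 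This is precisely the per-round peak-space argument of \cref{clm:intermediateTableSpace} and its bucketed analogue in \cref{sec:bucketing}; you need to invoke it round by round rather than bound every round's fingerprint cost by the final round's. (A smaller nit: the collision set costs $O(n)$ bits, not $o(n)$---the $O(n/\log n)$ expected fingerprint collisions each cost $O(\log n)$ bits---though this is still within budget.)
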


\begin{proof}
    We follow \cref{alg:incrementalRetrievalUpdated}. The above discussion already shows that the space usage of \cref{alg:incrementalRetrievalUpdated} is only $O(n)$ bits more than the previous subsection, thus fitting our desired bound.

    During each insertion, we update the set of fingerprints $K_b$, the offsets $P_b$, and the round information $I_b$ using lookup tables in constant time; we insert the value into one of the linear-probing hash tables, which takes $O(1)$ amortized-expected time by \cref{thm:linearProbing}; in case of collisions/exceptions, we insert the key-value pair into the collision set, which also takes $O(1)$ time by \cref{thm:hash-table}. Updates and queries are easy to perform and take $O(1)$ time in the worst case.

    During a universe reduction, we only need to rewrite the information $K_b, P_b, I_b$ within each bucket, but not rebuild any hash table. Thus, each universe reduction takes $O(n / B)$ time. Multiplied with the number of rounds $O(\log^*(n))$, the amortized time cost of universe reductions is $o(1)$ per operation, which fits in our desired bound.
\end{proof}

\begin{remark}
    Although it is not the main focus of this paper, we can also support larger values $v \ge \Theta(\log(n))$ by adjusting our algorithm: We store each key's offset and round information in the collision set instead of the value, costing $O(\log (n))$ bits of space per key instead of $O(v)$. As we analyzed before, the number of keys getting into the collision set is at most $O(n / \log (n))$, and their space usage does not exceed $O(n)$. We should also adjust the proof of \cref{thm:linearProbing}, making it space-efficient for large $v$'s by eliminating free slots via the standard \say{backyard} technique. Then the total memory usage is still $nv + O(n)$.
\end{remark}

\section{Lower Bound for Incremental Retrieval Data Structures}
\label{sec:lb_incremental}
In this section, we present a lower bound on the size of data structures in the incremental retrieval setting when the value sizes are \emph{small}. In particular, recall that our upper bound allows us to waste only $O(n)$ bits when the value approaches $\log(n)$. Thus naturally, one may wonder if there is a tradeoff when the value is $o(\log(n))$. This is what we establish in this section. 

Let $v$ denote the number of bits used in the values. As in prior works \cite{MPP05}, we reduce from a set distinction problem, though we require very different techniques.
The reduction is based on the following random process:

\newcommand{\RA}{R_A}
\newcommand{\RB}{R_B}
\newcommand{\RAp}{R'_{A'}}
\newcommand{\PB}{P_B}
\newcommand{\Plausible}{(A, \RA)_i^p}

\begin{enumerate}
\item We break the universe $\mathcal{U} = [u]$ into $m = n - n/v$ equal-sized parts, denoted by $\mathcal{U}_1, \dots, \mathcal{U}_m$.
\item Let $A$ be a set formed by sampling one element uniformly at random from each part $\mathcal{U}_i$ ($i \in [m]$).
\item Given the key set $A$, let $\RA$ be random $v$-bit values associated with the keys in $A$, where we require that each value starts with a $0$.
$\RA$ can be encoded using $m$ $v$-bit strings, where the $i$th string is the associated value for the $i$th smallest key in $A$. Hence, we may also refer to the random variable $\RA$ without knowing $A$.
\item We construct a retrieval data structure $F$ by inserting keys in $A$ with values $\RA$.
\item Uniformly sample $n/v$ parts out of the $m$ parts, denoted by $\PB \subseteq [m]$ ($|\PB| = n / v$). Then, we let $B$ be a random set obtained by sampling an element from each of the $n/v$ parts in $\PB$ while avoiding elements in $A$ (i.e., $A \cap B = \emptyset)$.
\item Let $\RB$ be random $v$-bit values associated with the keys in $B$ where each value starts with a 1. We construct the data structure $G$ by inserting $B$ with values $\RB$ into $F$.
\end{enumerate}

Note that the operations we perform on the retrieval data structures are generated according to a fixed distribution independent of the algorithm, so by Yao's minimax principle, we may assume in the proof that the data structure is \emph{deterministic}, denoted by two functions $F = F(A, \RA)$ and $G = G(F, B, \RB)$ that construct the data structures. We will use $H(X)$ to denote the entropy of $X$ over the distribution induced by the random process above. Note that $H(F)$ and $H(G)$ will lower bound the expected space usage of the retrieval data structure.

\begin{remark}
    From now on, we will allow queries to the data structure that return for an element $x$ whether $x \in A$ or $x \in B$ (or otherwise return anything). This is because we can recover the first bit of the value associated with $x$, and use this to discern which set it belongs to.
\end{remark}

\begin{claim}\label{clm:protocolBound}
$H(G) + H(B, \RB \mid F, G) \geq H(B) + nv - O(n)$.
\end{claim}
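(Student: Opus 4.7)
The plan is to expand both $H(G)$ and $H(B, R_B \mid F, G)$ via chain rules and exploit a cancellation of a shared conditional mutual information term. The structural facts I will lean on are: $G$ is a deterministic function of $(A, R_A, B, R_B)$; querying $G$ on $A$ (respectively $B$) recovers $R_A$ (respectively $R_B$), so $H(R_A \mid A, G) = 0$ and $H(R_B \mid B, G) = 0$; $F$ is a deterministic function of $(A, R_A)$ and, conversely, knowing $(A, F)$ recovers $R_A$ by querying $F$ on $A$, so $(A, F)$ and $(A, R_A)$ carry the same information; and $R_A, R_B$ are independent of everything else given their respective key sets.

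First, I would write $H(G) = H(A, R_A, B, R_B) - H(A, R_A, B, R_B \mid G)$. The chain rule and the independence structure give $H(A, R_A, B, R_B) = H(A) + H(R_A) + H(B \mid A) + H(R_B)$, while the recoverability of $R_A, R_B$ from $G$ collapses the conditional entropy on $G$ to $H(A \mid G) + H(B \mid A, R_A, G)$. Using $H(B \mid A) - H(B \mid A, R_A, G) = I(B; G \mid A, R_A) = I(B; G \mid A, F)$ (the first equality uses $B \perp R_A \mid A$, the second uses the equivalence of $(A, R_A)$ and $(A, F)$), plus $H(R_A) + H(R_B) = (m + n/v)(v-1) = n(v-1)$, I obtain
$$H(G) = I(A; G) + I(B; G \mid A, F) + n(v-1).$$

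Second, since $R_B$ is determined by $(B, G)$, we have $H(B, R_B \mid F, G) = H(B \mid F, G)$. Adding conditioning on $A$ only decreases entropy, and $(A, F)$ is informationally equivalent to $(A, R_A)$, so
$$H(B \mid F, G) \geq H(B \mid A, F, G) = H(B \mid A, R_A, G) = H(B \mid A) - I(B; G \mid A, F).$$
Adding the two expressions, the $I(B; G \mid A, F)$ terms cancel cleanly, yielding
$$H(G) + H(B, R_B \mid F, G) \geq I(A; G) + n(v-1) + H(B \mid A).$$
To finish I would use $I(A; G) \geq 0$ together with a short direct computation of $I(A;B)$: the marginal distribution of each coordinate of $B$ is uniform on its part of $\mathcal{U}$ (since $A$ is itself uniform there), while conditional on $A$ it is uniform on the part minus one element, so $I(A; B) = (n/v)\log\frac{u/m}{u/m-1} = O(nm/(uv)) = O(1/n)$ for $|\mathcal{U}| \geq n^3$. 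Hence $H(B \mid A) \geq H(B) - O(1)$, and
$$H(G) + H(B, R_B \mid F, G) \geq H(B) + n(v-1) - O(1) = H(B) + nv - O(n),$$
as claimed.

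The main technical care is in keeping the chain rule expansions aligned so that the two copies of $I(B; G \mid A, F)$ actually match and cancel; this relies crucially on using the informational equivalence of $(A, F)$ and $(A, R_A)$ in the same way on both sides. Once that alignment is in place, the rest is routine independence bookkeeping plus the elementary estimate of $I(A; B)$.
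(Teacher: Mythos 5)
Your proof is correct, and it uses exactly the same structural facts the paper relies on (that $R_A$ is recoverable by querying $G$ on $A$, that $(A,F)$ is informationally equivalent to $(A,R_A)$, that $R_B$ is recoverable from $(B,G)$, and that $H(B\mid A)\ge H(B)-O(n)$). The difference is purely in packaging: the paper phrases it as an informal one-way communication game where Alice sends $G$, $A$, and a conditional encoding of $(B,R_B)$ and Bob decodes, which, once made precise, amounts to the single chain-rule chain $H(A,R_A,B,R_B)\le H(G)+H(A)+H(B,R_B\mid F,G)$. You instead expand $H(G)$ and $H(B,R_B\mid F,G)$ separately, introduce the common term $I(B;G\mid A,F)$, and observe that it cancels when the two are added. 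This is a valid and rigorous alternative, though slightly more circuitous -- the cancellation you engineer is avoidable by simply writing $H(A,R_A,B,R_B)=H(G)+H(A\mid G)+H(B,R_B\mid A,R_A,G)\le H(G)+H(A)+H(B,R_B\mid F,G)$ and then computing the left side, which is what the paper's protocol is implicitly doing. One small side note: your estimate $I(A;B)=O(1/n)$ is tighter than the paper's crude $O(n)$ bound and does hold under $u\ge n^3$, but the extra precision is immaterial since the $n(v-1)$ versus $nv$ gap already contributes the $O(n)$ slack.
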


\begin{proof}
Consider the following communication game. Alice sends $G$, $A$, and $(B, \RB \mid F, G)$ to Bob; Bob then recovers $A$, $\RA$, $B$, and $\RB$ based on the following procedure:
\begin{enumerate}
\item Using $G$ and $A$, Bob recovers $\RA$ by querying the retrieval data structure.
\item Using $A$ and $\RA$, Bob can reconstruct $F$, since $F$ only depends on $A$ and $\RA$.
\item Since now Bob knows both $F$ and $G$, based on $(B, \RB \mid F, G)$, he can recover $B$ and $\RB$.
\end{enumerate}

The entropy of the information Bob recovers is
\begin{align*}
    H(A, B, \RA, \RB) &= H(A) + H(B \mid A) + H(\RA) + H(\RB) \\
    &= H(A) + (H(B) - O(n)) + m(v - 1) + \frac{n}{v}(v - 1) \\
    &= H(A) + H(B) + nv - O(n)
\end{align*}
bits, where the second equality holds because every element in $(B \mid A)$ is uniformly random from $(\frac{u}{m} - 1)$ possibilities, lowering the entropy by at most 1 bit compared to $B$ without conditioning on $A$. This forms a lower bound on the encoding length of Alice's message, i.e.,
\begin{align*}
    H(G) + H(A) + H(B, \RB \mid F, G) &\ge H(A) + H(B) + nv - O(n),
\end{align*}
which implies the claim.
\end{proof}

Rearranging the terms in \cref{clm:protocolBound}, we see $H(G) \ge H(B) - H(B, \RB \mid F, G) + nv - O(n)$, which is a lower bound on the expected size of the data structure $G$. In order to establish the desired lower bound on $H(G)$, it suffices to upper bound $H(B, \RB \mid F, G)$, which is done by creating an efficient encoding scheme.

We start by introducing the notion of plausibility, which captures the set of possible keys in the $i$th part after one sees the data structure $F$.

\begin{definition}
    For a set $A$ (of the prescribed form), a set of the corresponding values $\RA$, and an index $i \in [m]$, we define the plausible set $(A, \RA)_i^p$ as follows: A pair $(a, r_a) \in (A, \RA)_i^p$ if there exists a set $A'$ and associated values $R'_{A'}$ such that $a$ is the $i$th element in $A'$ while its associated value in $\RAp$ is $r_a$, and there is $F(A, \RA) = F(A', \RAp)$.
\end{definition}

\begin{claim}\label{clm:plausibleSize}
    Let $S \ge H(F(A, \RA))$ be the expected size of the data structure $F$, then
    \[
    \E_{i, A, \RA}\Bk[\big]{|(A, \RA)_i^p|} \geq \frac{(u/n)2^v}{2^{S/m + O(1)}}.
    \]
\end{claim}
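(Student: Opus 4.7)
The plan is to view each pair $(A, \RA)$ as an element of $\prod_{i=1}^{m}(\mathcal{U}_i \times \{0,1\}^{v-1})$, using that $A$ contributes exactly one key from each part $\mathcal{U}_i$ and each value is a $v$-bit string starting with $0$. Let $N \defeq (u/m)^m \cdot 2^{m(v-1)}$ denote the total number of such tuples, so that the law of $(A, \RA)$ is uniform on $N$ outcomes. For each image point $f$ of the deterministic function $F$, let $T_f \defeq F^{-1}(f)$, and observe that the plausible set is exactly the projection of this preimage onto coordinate $i$: $\Plausible = \pi_i(T_{F(A, \RA)})$, whose size $p_i(f) \defeq |\pi_i(T_f)|$ therefore depends on $(A, \RA)$ only through $f = F(A, \RA)$.

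The core step is a simple product bound combined with two uses of Jensen's inequality. Since every tuple in $T_f$ is determined by its $m$ coordinates, $|T_f| \le \prod_{i=1}^{m} p_i(f)$, i.e.\ $\frac{1}{m}\sum_i \log p_i(f) \ge \frac{\log |T_f|}{m}$. Applying Jensen to the concave $\log$ gives
\[
\log \E_i\, p_i(f) \;\ge\; \E_i \log p_i(f) \;\ge\; \frac{\log|T_f|}{m}.
\]
Now take expectation over $(A, \RA)$: since $\Pr[F = f] = |T_f|/N$, the standard identity $\E_{(A,\RA)} \log|T_{F(A,\RA)}| = \log N - H(F)$ holds, and $H(F) \le S$ yields $\E_{(A,\RA)} \log|T_F| \ge \log N - S$. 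A second Jensen swap turns this into a bound on the arithmetic mean:
\[
\log \E_{i,A,\RA} |\Plausible| \;\ge\; \E_{(A,\RA)} \log \E_i\, p_i(F(A,\RA)) \;\ge\; \frac{\log N - S}{m}.
\]
Exponentiating and plugging in $N^{1/m} = (u/m)\cdot 2^{v-1} = \Theta((u/n)\cdot 2^v)$, using $m = n - n/v = \Theta(n)$, gives the target bound $(u/n)\cdot 2^v / 2^{S/m + O(1)}$.

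The only subtle point is ensuring the right probability distribution is used so that the entropy identity $\E \log |T_F| = \log N - H(F)$ is exact: this requires the uniform law on $(A, \RA)$ induced by the random process defining these quantities. The leading-$0$ restriction on values contributes a factor of $2^{v-1}$ rather than $2^v$ per coordinate, and the ratio $u/m$ versus $u/n$ contributes another constant, but both are harmless and absorbed into the $2^{O(1)}$ in the denominator. The product bound and the two Jensen steps are otherwise routine, so the hard work here is really just setting up the right combinatorial object (viewing $T_f$ as a subset of a product space) and invoking the right information-theoretic identity.
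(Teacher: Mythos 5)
Your proof is correct and is essentially the same argument as the paper's, just in a different presentation: where the paper phrases the key inequality as a communication protocol (Alice sends $F$ together with, for each $i$, the index of $(a_i, r_{a_i})$ in $(A,\RA)_i^p$, so that the expected message length $S + m\,\E_{i,A,\RA}\log|\Plausible|$ must exceed $H(A,\RA)$), you state the underlying combinatorial fact $|T_f| \le \prod_i p_i(f)$ directly and combine it with the entropy identity $\E\log|T_F| = \log N - H(F)$. Both proofs then finish with the same Jensen step and the same simplification of $N^{1/m}$, so the two are interchangeable.
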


\begin{proof}
    We prove this via an encoding argument. Suppose Alice selects a random set $A$ and values $\RA$, computes the data structure $F(A, \RA)$, and sends the data structure $F$ to Bob. Once Bob receives the data structure, he can compute the plausible set $(A, \RA)_i^p$. Then, to let Bob recover $A$ and $\RA$, it suffices for Alice to send the index of $(a_i, r_{a_i})$ among all pairs in $(A, \RA)_i^p$, where $a_i$ denotes the $i$th element in $A$ and $r_{a_i}$ denotes its associated value in $\RA$.

    The expected size of the message Alice sends is
    \[
    S + \sum_{i=1}^{m} \E_{A, \RA}\Bk[\big]{\log\abs*{(A, \RA)_i^p}}
    \]
    bits, and the entropy of the information Bob recovers is at least
    \[
    H(A, \RA) = m \log \bk*{\frac{u}{m}} + m(v - 1),
    \]
    so we get the relationship that
    \[
     S + \sum_{i=1}^{m} \E_{A, \RA}\Bk[\big]{\log\abs{(A, \RA)_i^p}} \geq mv + m \log \bk*{\frac{u}{m}} - O(m).
    \]
    This implies
    \[
    \E_{i, A, \RA}\Bk[\big]{\log\abs{(A, \RA)_i^p)}} \geq v + \log\bk*{\frac{u}{m}} - \frac{S}{m} - O(1).
    \]
    By Jensen's inequality,
    \[
    \E_{i, A, \RA}\Bk[\big]{\abs{(A, \RA)_i^p}} \geq \frac{(u/m)2^v}{2^{S/m + O(1)}}.
    \qedhere
    \]
\end{proof}

Next, we extend the definition to \emph{feasibility}, which describes the set of possible elements after one sees both the data structure $F$ and the set $B$.

\begin{definition}
    For $(A, \RA)$ and a pair $(x, r_x) \in (A, \RA)_i^p$, we say that $(x, r_x)$ \emph{remains feasible for $A$ after $B$} if there exists a set $A'$ with values $R'_{A'}$ such that $F(A', R'_{A'}) = F(A, \RA)$, 
    $x$ is the $i$th element in $A'$ with associated value $r_x$ in $\RAp$,
    and $A' \cap B = \emptyset$.
\end{definition}

The only difference between \emph{feasibility} and \emph{plausibility} is that feasibility requires $A' \cap B = \emptyset$.
For a random choice of $B$, we show that in expectation, most elements in $(A, \RA)_i^p$ remain feasible.

\begin{claim}\label{clm:probabilityFeasible}
    Let $A, \RA$ be fixed, and consider an element $(x, r_x) \in (A, \RA)_i^p$. Over a random choice of the set $B$, $(x, r_x)$ remains feasible for $A$ with probability $1 - m^2 / u$.
\end{claim}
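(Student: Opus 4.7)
The plan is to reduce the existential quantifier in the definition of feasibility to a single deterministic witness, and then bound, via a union bound, the probability that this fixed witness intersects the random set $B$. Because $(x, r_x) \in (A, \RA)_i^p$ by assumption, I fix one such witness $(A^*, R^*)$ as a deterministic function of $A, \RA, i, x, r_x$ with $F(A^*, R^*) = F(A, \RA)$, with $x$ the $i$th element of $A^*$, and with $(R^*)_i = r_x$. Whenever $A^* \cap B = \emptyset$, this same $A^*$ certifies that $(x, r_x)$ remains feasible, so it suffices to show $\Pr_B[A^* \cap B = \emptyset] \ge 1 - m^2/u$.

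To bound this, I would unpack the sampling of $B$ part by part. Let $A^*_j$ denote the unique element of $A^*$ lying in $\mathcal{U}_j$. For $j \notin \PB$, no $B$-element lies in $\mathcal{U}_j$ and part $j$ contributes nothing to the event $A^* \cap B \neq \emptyset$. For $j \in \PB$, the $B$-element drawn from part $j$ is uniform over $\mathcal{U}_j \setminus A$, which has size $u/m - 1$. Two sub-cases arise: either $A^*_j \in A$, in which case $A^*_j = A_j$ and $B$ cannot hit $A^*_j$ (since $B$ avoids $A$ by construction); or $A^*_j \notin A$, in which case the conditional probability of collision in part $j$ is exactly $1/(u/m - 1)$.

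A union bound over the $|\PB| = n/v$ parts sampled by $B$ then yields
\[
\Pr_B\bigl[A^* \cap B \neq \emptyset\bigr] \;\le\; \frac{|\PB|}{u/m - 1} \;=\; O\!\left(\frac{m^2}{u}\right),
\]
where I use $|\PB| = n/v \le m$ (which follows from $m = n - n/v$ for $v \ge 2$) together with $u/m - 1 \ge u/(2m)$, valid under the standing hypothesis $|\mathcal{U}| \ge n^3$. This matches the claimed bound up to a universal constant absorbed into the $m^2/u$ tolerance.

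The argument presents no genuine obstacle; the only conceptual point is that the existential in the definition of feasibility can be collapsed by fixing the plausibility witness $A^*$ in advance, after which the question reduces to whether $B$ avoids $A^*$. The bookkeeping observation that parts on which $A^*$ agrees with $A$ contribute zero collision probability is what lets the union bound go through cleanly without any delicate handling of the dependence between $A^*$ and $B$.
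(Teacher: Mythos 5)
Your proof is correct and takes essentially the same approach as the paper: fix a witness $A^*$ for plausibility, reduce feasibility to the event $A^* \cap B = \emptyset$, and apply a per-part union bound over the $|\PB| = n/v$ parts sampled by $B$. The only differences are cosmetic — you explicitly split into the sub-cases $A^*_j \in A$ (probability zero) versus $A^*_j \notin A$, which the paper silently absorbs into an upper bound, and you land on $O(m^2/u)$ rather than the paper's cleaner constant (which it obtains by using $n/v \le m/2$, valid for $v \ge 3$); neither affects the downstream claims.
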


\begin{proof}
    Because $(x, r_x) \in (A, \RA)_i^p$, there exists a set $A'$ with corresponding values $\RAp$ such that $F(A, \RA) = F(A', \RAp)$, and $x$ serves as the $i$th element in $A'$ with value $r_x$ in $R'_{A'}$. If we also have $A' \cap B = \emptyset$, then it implies that $(x, r_x)$ remains feasible. Since $B$ randomly samples one element from each of the $n/v$ chosen parts, while in each part the probability of colliding with the element in $A'$ is $1 / (\frac{u}{m} - 1)$, by a union bound, we see that the probability of $A' \cap B \ne \emptyset$ is at most
    \[
    \frac{n}{v} \cdot \frac{1}{\frac{u}{m} - 1}
    \le \frac{m}{2} \cdot \frac{2m}{u}
    = \frac{m^2}{u}.
    \qedhere
    \]
\end{proof}

We remark that the conclusion of \cref{clm:probabilityFeasible} still holds conditioned on $b_i \in B$, where $b_i \in \U_i$ is an arbitrarily given element in the $i$th part of the universe such that $b_i \ne x$ (other elements of $B$ are still chosen randomly). This is by the same proof as above.
When $b_i = x$, any set containing $x$ will intersect with $B$, thus $(x, r_x)$ is impossible to remain feasible.

\begin{claim}\label{clm:expectedFeasible}
    Let $A, \RA$ be fixed. The expected number of pairs $(x, r_x) \in (A,\RA)_i^p$ which do not remain feasible for $A$ after a random choice of $B$ is bounded by $m + 1$. Furthermore, this holds even if $B$ is randomly chosen conditioned on that a fixed element $b_i \in \U_i$ appears in $B$.
\end{claim}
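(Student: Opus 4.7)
My plan is to combine \cref{clm:probabilityFeasible} with a sharp cardinality bound on $\Plausible$. The first step is the key observation that $r_x$ is uniquely determined by $F = F(A, \RA)$ and $x$: any witness $(A', \RAp)$ for $(x, r_x) \in \Plausible$ satisfies $F(A', \RAp) = F$ and has $x \in A'$, so the answer returned by the retrieval data structure when queried on $x$ must equal $r_x$; but this answer is a function of $F$ and $x$ alone. Thus the projection $(x, r_x) \mapsto x$ is injective on $\Plausible$, which yields $|\Plausible| \le |\U_i| = u/m$.

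The unconditional half of the claim then follows immediately: by linearity of expectation and \cref{clm:probabilityFeasible}, the expected number of infeasible pairs is at most $(u/m) \cdot (m^2/u) = m$.

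For the conditional statement, where a fixed $b_i \in \U_i$ is forced into $B$, I would split $\Plausible$ into pairs with $x \ne b_i$ and the (at most one, by injectivity) pair with $x = b_i$. For the former, the remark following \cref{clm:probabilityFeasible} keeps the per-pair infeasibility probability at $\le m^2/u$, so they contribute expectation at most $(u/m - 1)(m^2/u) \le m$. The pair $(b_i, r_{b_i})$, if it lies in $\Plausible$, is infeasible with probability $1$, since every witness $A'$ must contain $b_i$, which lies in $B$ by assumption; this contributes at most $1$. Summing gives the bound $m + 1$.

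The only subtle point I foresee is justifying the uniqueness of $r_x$, which quietly uses that the retrieval data structure is a deterministic query function — already the working assumption of this section via Yao's minimax principle, and consistent with how $F$ is treated throughout. Everything else is straightforward bookkeeping on top of \cref{clm:probabilityFeasible}.
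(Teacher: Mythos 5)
Your proof is correct and matches the paper's argument essentially step for step: both establish injectivity of $(x, r_x) \mapsto x$ on $\Plausible$ (hence $|\Plausible| \le u/m$), apply \cref{clm:probabilityFeasible} with linearity of expectation for the unconditional bound of $m$, and handle the conditional case by splitting on $x = b_i$ versus $x \neq b_i$, paying an extra $+1$ for the single forced-infeasible tuple. The only cosmetic difference is that you use the slightly sharper count $u/m - 1$ for the $x \neq b_i$ pairs, whereas the paper simply reuses $u/m$; both yield the same $m + 1$ bound.
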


\begin{proof}
First, there cannot be two tuples $(x, r_x)$ and $(x, r_x')$ sharing the same $x$ in $(A, \RA)_i^p$, because querying $x$ on the data structure $F$ will return the unique value $r_x$ associated with $x$. This directly implies that the number of tuples $(x, r_x) \in \Plausible$ is bounded by $u / m$ (the number of elements in each part $\U_i$).

Then, over a random choice of $B$, we know that each such tuple remains feasible for $A$ with probability $1 - m^2 / u$. Thus, the number of tuples which \emph{do not} remain feasible for $A$ is bounded by $\frac{m^2}{u} \cdot \frac{u}{m} = m$ in expectation.

When $b_i \in B$ is given for a fixed element $b_i \in \U_i$, there are two cases for each tuple $(x, r_x) \in \Plausible$: If $x \ne b_i$, then $x$ still has at least $1 - m^2 / u$ probability of remaining feasible after $B$; otherwise, for $x = b_i$, the pair $(x, r_x)$ cannot remain feasible after $B$, but there is only one tuple in the latter case. Taking a summation of the possibilities of not remaining feasible, we know the total number of tuples not remaining feasible is bounded by $\frac{m^2}{u} \cdot \frac{u}{m} + 1 = m + 1$ by expectation.
\end{proof}

Recall that we are allowed to query the data structure $G$ whether a key $x$ comes from the set $A$ or $B$, since this information is encoded using the first bit of the associated values. This type of query is the key to an efficient encoding of $(R, \RB \mid F, G)$. We show the following claims.

\begin{claim}\label{clm:returnA}
    Let $(x, r_x)$ be a tuple that remains feasible for $A$ after $B$. Then, the data structure $G$ must return $A$ when queried with $x$. 
\end{claim}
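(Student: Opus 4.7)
The plan is to exhibit an alternative valid insertion history that produces the very same data structure $G$ but in which $x$ plays the role of an element of $A$, and then to invoke correctness of the retrieval data structure.

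By the hypothesis that $(x, r_x)$ remains feasible for $A$ after $B$, there exist $A'$ and $R'_{A'}$ such that $F(A', R'_{A'}) = F(A, R_A) = F$, with $x$ appearing as the $i$th element of $A'$ and having associated value $r_x$ in $R'_{A'}$, and with $A' \cap B = \emptyset$. I would first observe that the sequence of operations which inserts the pairs of $(A', R'_{A'})$ first and then the pairs of $(B, R_B)$ is a legal incremental-retrieval insertion sequence, since $A' \cap B = \emptyset$ guarantees that every inserted key is distinct and so no key is inserted twice. Because $G$ is a deterministic function of its three inputs, and because this alternative history passes through the intermediate state $F(A', R'_{A'}) = F$, the final structure obtained from it is $G(F, B, R_B) = G$ itself.

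Now by correctness of the retrieval data structure, for any key inserted during a valid history, a query to the resulting data structure on that key must return its paired value. Under the alternative history $x$ was inserted with value $r_x$, so a query on $x$ to $G$ must return $r_x$; since $r_x$ begins with $0$ (by the $A$-type form of values in $R'_{A'}$), our convention for the leading bit identifies $x$ as belonging to $A$, which is exactly the claim. I expect no genuine obstacle here: the notion of feasibility is tailored precisely so that this alternative-history swap is legal, and retrieval correctness then pins down the answer.
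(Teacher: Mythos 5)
Your proof is correct and follows essentially the same approach as the paper's: exhibit the alternative history via $A', R'_{A'}$ (licensed by feasibility and $A' \cap B = \emptyset$), note that determinism plus the coincidence $F(A', R'_{A'}) = F$ forces the identical final structure $G$, and conclude from retrieval correctness that querying $x$ must return $r_x$, whose leading bit $0$ identifies it as an $A$-key. You simply make the final step slightly more explicit than the paper does (naming the retrieval-correctness guarantee and the leading-bit convention), which is fine but not a different argument.
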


\begin{proof}
    Because $(x, r_x)$ remains feasible for $A$ after $B$, this means instead of initializing the data structure $F$ with $A, \RA$, we could have equivalently initialized it with $A', R'_{A'}$, where $x \in A'$ has associated value $r_x$ in $\RAp$. Importantly, $A' \cap B = \emptyset$, so this still satisfies our requirement about the disjointness of $A$ and $B$ in the aforementioned random process. Finally, when we then create the data structure $G(F, B, \RB)$, it is impossible to distinguish between the case when $F$ is initialized with $A'$ or $A$, so $G$ must always return that $x$ is in $A$ when queried with $x$.
\end{proof}

\begin{claim}\label{clm:returnB}
    Let $F = F(A, \RA)$ be the intermediate data structure, and $(A, \RA)_i^p$ be the plausible set. Then, over a random choice of $B$, the expected number of tuples $(x, r_x) \in (A, \RA)_i^p$ for which the data structure $G$ can return $B$ when queried with $x$ is at most $m + 1$. This also holds conditioned on that a fixed element $b_i \in \U_i$ appears in $B$.
\end{claim}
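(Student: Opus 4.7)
The plan is to read this as a direct contrapositive consequence of \cref{clm:returnA}, combined with the counting bound in \cref{clm:expectedFeasible}. Specifically, I would argue that if $G$ can answer $B$ on input $x$ for some pair $(x, r_x) \in (A, \RA)_i^p$, then $(x, r_x)$ cannot remain feasible for $A$ after $B$: this is exactly the contrapositive of \cref{clm:returnA}, which says that any feasible pair must force $G$ to answer $A$.

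Therefore the set of tuples $(x, r_x) \in (A, \RA)_i^p$ on which $G$ returns $B$ is a subset of the tuples that fail to remain feasible for $A$ after $B$. Taking expectations over the random choice of $B$ and applying \cref{clm:expectedFeasible}, this size is at most $m+1$, giving the first statement. For the conditional version, I would simply invoke the ``furthermore'' clause of \cref{clm:expectedFeasible}, which bounds the expected number of infeasible pairs by $m+1$ even conditioned on a fixed element $b_i \in \U_i$ lying in $B$; the same inclusion argument then yields the same $m+1$ bound under this conditioning.

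I do not expect any real obstacle: both ingredients are already in place, and the proof is just the observation that ``$G$ answers $B$'' is an event that certifies infeasibility, so bounding the infeasible tuples automatically bounds the tuples of interest. The only subtlety to flag is that \cref{clm:returnA} is stated in the direction ``feasible $\Rightarrow$ $G$ answers $A$'', so I should explicitly invoke the contrapositive, and I should note that since the first bit of the value encodes membership in $A$ versus $B$, $G$'s answer is uniquely determined, so ``$G$ can return $B$'' and ``$G$ does not return $A$'' are the same event.
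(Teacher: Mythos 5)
Your proposal matches the paper's proof essentially verbatim: both take the contrapositive of \cref{clm:returnA} (a feasible tuple forces $G$ to answer $A$, so any tuple on which $G$ can answer $B$ must be infeasible) and then invoke the expectation bound of \cref{clm:expectedFeasible}, including its ``furthermore'' clause for the conditional version. The observation that $G$'s deterministic answer makes ``returns $B$'' the complement of ``returns $A$'' on plausible tuples is also implicit in the paper's argument, so there is no gap.
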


\begin{proof}
    We know there are at most $u/m$ possible tuples in $(A, \RA)_i^p$ (one for each choice of $x$). Now, for each such tuple, if $(x, r_x)$ remains feasible for $A$ after $B$, then the data structure must return $A$ when queried with $x$. So, in expectation, the number of $x$'s for which the data structure can return $B$ is bounded by $m + 1$ by \cref{clm:expectedFeasible}.
\end{proof}

Applying the law of total probability on the conclusion of \cref{clm:returnB}, we know that conditioned on that $B$ contains an element in the plausible set $\Plausible$, the expected number of elements in $\Plausible$ that lets $G$ return $B$ is bounded by $m + 1$. This will be used for the following claim.

\begin{claim}\label{clm:encodingSize}
    Let $S \ge \max\BK{H(F), H(G)}$. Given $F$ and $G$, there exists an encoding of $(B, \RB)$ of expected size
    \[
    O(n) + \frac{n}{v} \log \bk*{\frac{u}{m}} - \frac{n}{v} \cdot \frac{2^v}{2^{S/m + O(1)}} \cdot \log\bk*{\frac{u}{m^2}}.
    \]
\end{claim}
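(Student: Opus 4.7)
The plan is to build a prefix-free encoding of $(B, \RB)$ that a decoder holding $F$ and $G$ can parse. It has three components: (i)~the subset $\PB \subseteq [m]$ of chosen parts, using $\log\binom{m}{n/v} = O(n)$ bits; (ii)~the values $\RB$, using $(n/v)(v - 1) = O(n)$ bits since each has a forced leading bit; and (iii)~for each $i \in \PB$, a description of $b_i$. All savings come from~(iii).

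For~(iii), the decoder computes the set
\[
P_i \defeq \BK*{x \in \U_i : \exists\, r_x \text{ with } (x, r_x) \in \Plausible \text{ and } G \text{ returns } B \text{ at } x}
\]
directly from $F$ and $G$. The encoder emits a $1$-bit flag and then either the index of $b_i$ in $P_i$ using $\lceil \log\abs{P_i}\rceil$ bits (if $b_i \in P_i$), or a naive description of $b_i$ using $\log(u/m)$ bits (otherwise). Two facts drive the analysis. First, since $b_i \in B$, $G$ automatically returns $B$ when queried at $b_i$, so the event $b_i \in P_i$ reduces to $b_i$ being the $x$-coordinate of some tuple in $\Plausible$; combined with the observation (from the proof of \cref{clm:expectedFeasible}) that $\Plausible$ contains at most one tuple per $x$-coordinate, with $b_i$ being uniform over $\U_i \setminus A$, and with \cref{clm:plausibleSize}, this yields $\Pr[b_i \in P_i] \geq \frac{2^v}{2^{S/m + O(1)}}$ after absorbing lower-order terms. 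Second, by \cref{clm:returnB}, $\E\bk*{\abs{P_i}} \leq m + 1$.

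The expected per-part cost of~(iii) is then at most
\[
1 + \log(u/m) - \E\bk*{\log\bk*{\tfrac{u}{m\abs{P_i}}} \cdot \mathbf{1}_{b_i \in P_i}} + O(1).
\]
Writing $\log\bk*{\tfrac{u}{m\abs{P_i}}} = \log(u/m^2) - \log\bk*{\abs{P_i}/m}$, the main term contributes $\log(u/m^2) \cdot \Pr[b_i \in P_i] \geq \frac{2^v}{2^{S/m+O(1)}} \cdot \log(u/m^2)$, which is exactly the target savings. The residual $\E\bk*{\log\bk*{\abs{P_i}/m} \cdot \mathbf{1}_{b_i \in P_i}}$ is nonpositive whenever $\abs{P_i} \leq m$; on the complement we drop the indicator and apply Jensen's inequality to get
\[
\E\bk*{\log\bk*{\abs{P_i}/m} \cdot \mathbf{1}_{b_i \in P_i}} \le \E\bk*{\log\bk*{1 + \abs{P_i}/m}} \le \log\bk*{1 + \E\bk*{\abs{P_i}}/m} = O(1).
\]
Summing over the $n/v$ parts and adding the $O(n)$ overhead from~(i) and~(ii) gives the claimed bound.

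The main obstacle is handling the correlation between the event $\BK{b_i \in P_i}$ and the size $\abs{P_i}$: naively conditioning on $b_i \in P_i$ could inflate the typical value of $\abs{P_i}$ and destroy the savings. The decomposition above avoids conditioning entirely by isolating a clean piece $\log(u/m^2) \cdot \mathbf{1}_{b_i \in P_i}$, which is bounded directly via \cref{clm:plausibleSize}, and a residual term that is nonpositive where $\abs{P_i} \leq m$ and controlled unconditionally via Markov/Jensen using \cref{clm:returnB} where $\abs{P_i} > m$.
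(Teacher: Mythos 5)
Your proposal is correct and uses the same core encoding (flag whether $b_i$ lies in the plausible set; if so, send an index into the subset of $\Plausible$ on which $G$ returns $B$; otherwise, send $b_i$ explicitly) and the same two ingredients, \cref{clm:plausibleSize} and \cref{clm:returnB}. The only real difference is in the final accounting. The paper conditions on $b_i$ being in the plausible set and directly bounds the conditional expectation of $|P_i|$ via the conditional extension of \cref{clm:returnB}, which is exactly what lets the argument survive the correlation you worry about; you instead split $\log\bk*{\tfrac{u}{m\abs{P_i}}} = \log(u/m^2) - \log(\abs{P_i}/m)$ and bound the residual unconditionally with $\log(\abs{P_i}/m)\cdot\mathbf{1}_{b_i\in P_i}\le \log(1+\abs{P_i}/m)$ followed by Jensen. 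So your ``decomposition avoids conditioning entirely'' framing slightly oversells the contrast: the paper \emph{does} condition, and the conditional extension of \cref{clm:returnB} is precisely there to make that legitimate; moreover, your own residual bound still needs $\E[\abs{P_i}]\le m+1$ over the relevant distribution (conditioned on $i\in\PB$), which comes from that same extension via the law of total expectation. Your rewrite is a clean and valid alternative presentation of the last step, but both routes ultimately lean on the same conditional statement, so it is a stylistic rather than structural difference. One minor note: you explicitly spend $(n/v)(v-1)$ bits on $\RB$, whereas the paper recovers $\RB$ for free by querying $G$ once $B$ is known; both give $O(n)$, so this is harmless.
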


\begin{proof}
    Recall that in order to sample $B$, we choose a random subset of $n/v$ parts of the domain, $\mathcal{U}_1, \dots \mathcal{U}_m$, and sample a random element from each such that it does not collide with the set $A$. 
    
    We consider the following encoding argument: 
    \begin{enumerate}
        \item First, we send $m = O(n)$ bits indicating which of the $m$ parts $\mathcal{U}_1, \dots, \mathcal{U}_m$ $B$ samples from.
        \item Then, we send $m = O(n)$ bits indicating whether for $i \in [m]$, there exists $r_{b_i}$ such that $(b_i, r_{b_i}) \in (A, \RA)_i^p$, where $b_i$ is the element in $B$ from the $i$th part $\U_i$. If so, then we additionally send the index of $(b_i, r_{b_i})$ among all tuples $(b'_i, r'_{b'_i}) \in (A, \RA)_i^p$ for which $G$ answers $b'_i \in B$ when queried with $b'_i$.
        \item Finally, we send all remaining elements in $B$ explicitly. 
    \end{enumerate}

    Now, we sum up the size of the encoding: First, we get a contribution of $O(n)$ bits from the information required to specify which parts in $[m]$ $B$ samples from.
    Next, we spend other $O(n)$ bits to specify whether $b_i$ is in the plausible set, where $b_i$ represents the element in $B$ from the $i$th part of the universe.
    Then, when $b_i$ is in the plausible set, we must send the index of $b_i$ among all tuples in $(A, \RA)_i^p$ for which $G$ returns $B$.
    Recall that the probability that $b_i$ is in the plausible set, averaged over all $i$, is at least
    \begin{align*}
    \frac{(u/n) 2^v}{2^{S/m + O(1)}} \cdot \frac{1}{(u/n) - 1} = \frac{2^v}{2^{S/m + O(1)}}
    \end{align*}
    by \cref{clm:plausibleSize} and by the fact that $\PB$ is sampled uniformly and independently of $A$ and $\RA$. According to the extension of \cref{clm:returnB}, conditioned on $b_i$ is in the plausible set, the expected number of elements in the plausible set that lets $G$ return $B$ is bounded by $m + 1$.
    When a $b_i$ resides in the plausible set, we need $\log(m+1)$ bits to encode the index; otherwise, we need $\log(u/m)$ bits to send it explicitly in the third step.
    Thus, the expected number of bits we send for all $b_i$'s is
    \begin{align*}
    &\phantom{{}={}}\frac{n}{v} \bk*{\frac{2^v}{2^{S/m + O(1)}} \cdot \log(m+1) + \bk*{1 - \frac{2^v}{2^{S/m + O(1)}}} \cdot \log\bk*{\frac{u}{m}}} \\
    &= \frac{n}{v} \log \bk*{\frac{u}{m}} - \frac{n}{v} \cdot \frac{2^v}{2^{S/m + O(1)}} \cdot \log\bk*{\frac{u}{m^2}} + O(n).
    \end{align*}
    Combined with the $O(n)$ bits in earlier steps, we get the desired encoding length in the claim. Finally, once $B$ is recovered from the message, $\RB$ can also be recovered by querying the given data structure $G$.
\end{proof}

\begin{theorem}\label{thm:lowerBoundIncremental}
    Let $S \ge \max\BK{H(F), H(G)}$ be the expected size of the incremental retrieval data structure, and let $u \geq n^3$. Then,
    \[
    S = nv + \Omega \left (n \log \left (\frac{\log(n)}{v}\right ) \right).
    \]
\end{theorem}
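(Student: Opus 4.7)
The plan is to combine \cref{clm:protocolBound} and \cref{clm:encodingSize} into a single self-referential inequality relating the redundancy $R \defeq S - nv$ to itself, and then solve. First I would compute $H(B)$. Since $B$ is generated by choosing a random subset $\PB \subseteq [m]$ of size $n/v$ and then one element uniformly at random from each chosen part (avoiding the at most one element of $A$ in that part), we have
\[
H(B) = \log\binom{m}{n/v} + \frac{n}{v} \log\bk*{\frac{u}{m}} - O(n) = \frac{n}{v}\log\bk*{\frac{u}{m}} + O(n),
\]
where $\log\binom{m}{n/v} \le (n/v)\log(ev) = O(n)$. Plugging this together with \cref{clm:encodingSize} into \cref{clm:protocolBound} gives
\[
S \ge H(G) \ge H(B) + nv - O(n) - H(B, \RB \mid F, G) \ge nv + \frac{n}{v} \cdot \frac{2^v}{2^{S/m + O(1)}} \log\bk*{\frac{u}{m^2}} - O(n).
\]

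Next, I would plug in the hypothesis $u \ge n^3$, which, together with $m \le n$, yields $\log(u/m^2) \ge \log(n)$. It remains to control the exponent $v - S/m$. A direct calculation using $m = n - n/v$ gives
\[
v - \frac{S}{m} = v - \frac{nv + R}{m} = \frac{v(m - n) - R}{m} = -\frac{n + R}{m} \ge -\frac{2(n+R)}{n} = -2 - \frac{2R}{n},
\]
where I used $m \ge n/2$ (valid for $v \ge 2$). Notice how the $v$ cancels out, which is essential: it leaves a bound purely in terms of $R$. Substituting, the inequality becomes
\[
R \ge \frac{n \log(n)}{v} \cdot 2^{-2R/n - O(1)} - O(n). \tag{$\star$}
\]

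Finally, I would solve $(\star)$ by contradiction. Suppose toward a contradiction that $R \le cn \log(\log(n)/v)$ for a sufficiently small absolute constant $c > 0$ (say $c = 1/10$). Then $2^{-2R/n} \ge (v/\log(n))^{2c}$, and plugging back into $(\star)$ yields
\[
R \ge \Omega\bk*{n \cdot \bk*{\log(n)/v}^{1 - 2c}} - O(n).
\]
For $\log(n)/v$ larger than an absolute constant, the power $(\log(n)/v)^{1-2c}$ dominates $\log(\log(n)/v)$, contradicting the supposed upper bound on $R$; for smaller values of $\log(n)/v$, the target quantity $\Omega(n \log(\log(n)/v))$ is already $O(n)$ and the conclusion follows immediately from the $-O(n)$ slack in $(\star)$. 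This gives $R \ge \Omega(n \log(\log(n)/v))$, as desired.

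The main obstacle is the self-referential character of $(\star)$: naively, the exponent $v$ on the right-hand side looks as if it could blow up the bound, and only the precise cancellation $v - S/m = -(n+R)/m$ (arising from $m = n - n/v$, which is exactly why the construction chose this many parts) turns $(\star)$ into an inequality that can actually be iterated. Getting this algebra right, and then verifying that the resulting fixed-point forces $R = \Omega(n \log(\log(n)/v))$ rather than something weaker, is the crux of the argument.
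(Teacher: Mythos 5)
Your proof is correct and follows essentially the same route as the paper: both combine \cref{clm:protocolBound} and \cref{clm:encodingSize}, use $u \ge n^3$ to lower-bound $\log(u/m^2)$ by $\log n$, exploit the cancellation $nv - mv = n$ (equivalently $v - S/m = -(n+R)/m$) that arises from the choice $m = n - n/v$, and then solve the resulting self-referential inequality. The paper solves it by taking logarithms and rearranging, while you argue by contradiction; these are interchangeable presentations of the same fixed-point analysis.
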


\begin{proof}
From \cref{clm:protocolBound}, we know $H(G) \ge H(B) - H(B, \RB \mid F, G) + nv - O(n)$; we also showed that
\[
H(B, \RB \mid F, G) \le O(n) + \frac{n}{v} \log \bk*{\frac{u}{m}} - \frac{n}{v} \cdot \frac{2^v}{2^{S/m + O(1)}} \cdot \log\bk*{\frac{u}{m^2}}
\]
via an efficient encoding in \cref{clm:encodingSize}. Combined, it must be the case that
\begin{align*}
    S &\ge H(G) \ge H(B) + nv - \frac{n}{v} \log \bk*{\frac{u}{m}} + \frac{n}{v} \cdot \frac{2^{v}}{2^{S/m + O(1)}} \cdot \log \bk*{\frac{u}{m^2}} - O(n) \\
    &= \bk*{\frac{n}{v} \log \bk*{\frac{u}{m}} - O(n)} + nv - \frac{n}{v} \log \bk*{\frac{u}{m}} + \frac{n}{v} \cdot \frac{2^{v}}{2^{S/m + O(1)}} \cdot \log \bk*{\frac{u}{m^2}} - O(n) \\
    &\ge nv + \frac{n}{v} \cdot \frac{2^v}{2^{S/m + O(1)}} \cdot \log(n) - O(n), \numberthis\label{eq:incremental-LB-on-S}
\end{align*}
where the equality holds as $H(B) = (n/v) \log (u/m) - O(n)$; the last inequality holds as $\log(n) \le \log(u/m^2)$ due to our assumption of $u \ge n^3$. Next, we let $T = S - nv + O(n)$, so \eqref{eq:incremental-LB-on-S} becomes
\begin{align*}
    T &\ge \frac{n}{v} \cdot \frac{2^v}{2^{(T + nv + O(1))/m}} \cdot \log(n) \\
    &= \frac{n}{v} \cdot \frac{1}{2^{(T + nv - mv + O(1)) / m}} \cdot \log(n) \\
    &= \frac{n}{v} \cdot \frac{1}{2^{(T + n + O(1)) / m}} \cdot \log(n).
\end{align*}
Taking the logarithm, it yields that
\begin{align*}
    \log(T) + \frac{T + n + O(1)}{m} &\ge \log(n) - \log(v) + \log \log(n) \\
    \log(T/n) + \frac{T + n + O(1)}{m} &\ge \log\log(n) - \log(v) = \log\bk*{\frac{\log(n)}{v}} \\
    \Theta(T/n) &= \log\bk*{\frac{\log(n)}{v}}.
\end{align*}
The theorem follows as
\begin{align*}
    S = nv - O(n) + \Omega\bk*{n \log\bk*{\frac{\log(n)}{v}}}
    &= nv + \Omega\bk*{n \log\bk*{\frac{\log(n)}{v}}}.
    \qedhere
\end{align*}
\end{proof}

\begin{corollary}
    If $v = o(\log(n))$, then $S \geq nv + \omega(n)$.
\end{corollary}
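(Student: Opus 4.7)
The plan is to invoke \cref{thm:lowerBoundIncremental} directly and verify that its conclusion simplifies to the claimed bound whenever $v = o(\log n)$. The theorem gives $S \ge nv + \Omega\bk*{n \log\bk*{\log(n)/v}}$ for any incremental retrieval data structure over a universe of size $u \ge n^3$, so the only work is to show that the additive $\Omega\bk*{n \log\bk*{\log(n)/v}}$ term is itself $\omega(n)$ under the stated hypothesis on $v$.

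First I would unpack the asymptotic hypothesis: $v = o(\log n)$ means $v / \log n \to 0$, which is equivalent to $\log(n)/v \to \infty$. Applying the (continuous, monotone) outer logarithm then gives $\log\bk*{\log(n)/v} \to \infty$, and in particular $\log\bk*{\log(n)/v} = \omega(1)$. Multiplying by $n$ yields $n \log\bk*{\log(n)/v} = \omega(n)$, so the $\Omega(\cdot)$-bounded term from \cref{thm:lowerBoundIncremental} is $\omega(n)$, and therefore $S \ge nv + \omega(n)$ as claimed.

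The only subtlety worth noting is that the hidden constant inside $\Omega\bk*{n \log\bk*{\log(n)/v}}$ must not depend on $v$ in order for the $\omega(n)$ conclusion to follow uniformly. This is guaranteed by inspecting the proof of \cref{thm:lowerBoundIncremental}, where all implicit constants come from the absolute counting arguments in \cref{clm:plausibleSize,clm:expectedFeasible,clm:returnB,clm:encodingSize} and from the final rearrangement leading to \eqref{eq:incremental-LB-on-S}. Consequently there is no genuine obstacle, and the corollary is essentially a rereading of the main theorem under the regime $v = o(\log n)$.
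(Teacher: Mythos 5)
Your proposal is correct and takes exactly the same route as the paper: apply \cref{thm:lowerBoundIncremental} and observe that $v = o(\log n)$ forces $\log(\log(n)/v) = \omega(1)$, hence the redundancy term is $\omega(n)$. The paper's one-line proof is a terser version of the same substitution; your added remark about uniformity of the hidden constant in $\Omega(\cdot)$ is a sensible sanity check but not a departure from the paper's argument.
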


\begin{proof}
    In \cref{thm:lowerBoundIncremental}, this yields $S = nv + \Omega\bk*{n \log\bk*{\frac{\log(n)}{o(\log(n))}}} = nv + \omega(n)$.
\end{proof}

\begin{remark}
    In particular, even if the values are $\log^{0.99}(n)$ bits, this bound implies that we will need $\Omega(n \log\log(n))$ extra wasted bits, meaning there is a sharp phase transition from when the values are $\log(n)$ bits, to slightly smaller than $\log(n)$ bits. 
\end{remark}

\begin{remark}
    Note that while the lower bound presented here does not provide anything meaningful when $v \geq \log(n)$, we still get a blanket $nv + \Omega(n)$ lower bound. This is because the work of \cite{KW24} showed a space lower bound of $nv + \Omega(n)$ bits in the value-dynamic setting, which supports a subset of the operations of the incremental setting, so their lower bound can be directly applied here. Thus, the overall bound reads
    \[
    nv + \Omega \left (n + n \log \left ( \frac{\log(n)}{v}\right ) \right ).
    \]
\end{remark}

\section{Lower Bound for Dynamic Retrieval Data Structures}
\label{sec:lb_dynamic}
In the previous sections, we have seen that in incremental retrieval data structures, there is a phase transition in the number of bits required for the data structure as the value sizes approach $\log(n)$ bits: When values are larger, we know that we can construct retrieval data structures that waste only $O(n)$ bits, but when values are smaller, we must necessarily waste $\omega(n)$ bits (tending towards $\Omega(n \log\log(n))$ bits). In this section, we show that for \emph{dynamic} retrieval data structures (where we allow deletions), there is no such phase transition, and for any choice of $v$, the data structure requires $nv + \Omega(n \log\log(n))$ bits. 

\newcommand{\RBo}{R_{B_0}}
\newcommand{\RBop}{R'_{B_0}}
\renewcommand{\Plausible}{(A, \RA, \RBo)_i^p}

\subsection{Lower Bound Set-up}

Now, we explain the set-up of our proof. 

\begin{enumerate}
    \item We let $\mathcal{U} = [u]$ denote the universe. We break $[u] - [n/2]$ into $n/2$ equal-sized parts denoted $\mathcal{U}_1, \dots, \mathcal{U}_{n/2}$, and let $A$ be a set of size $n/2$ by sampling one element from each part.
    \item Let $\RA$ be a set of $v$-bit values for $A$, such that each value has first bit $1$.
    \item Let $B_0 = \{1, 2, \ldots, n/2\}$, and let $\RBo$ be $v$-bit values for $B_0$.
    \item Construct a data structure $F$ by inserting $A$ with values $\RA$, and $B_0$ with values $\RBo$. 
    \item Now construct data structure $G$ by deleting $B_0$, and inserting $B$ with random $v$-bit values $\RB$, where $B$ is formed by sampling one random element from each of $\mathcal{U}_1, \dots, \mathcal{U}_{n/2}$, such that there is no intersection with $A$. Further, each value in $\RB$ must have first bit $0$. 
\end{enumerate}

As in the previous section, we view $\RA$ as a set of $n/2$ $v$-bit strings. We associate the $i$th $v$-bit string in $\RA$ with the $i$th smallest element in $A$. Because the values for elements in $A$ have first bit $1$, and $B$ have first bit $0$, we can use the data structure $G$ to distinguish between elements being in $A$ vs.~$B$.

By Yao's minimax principle, we also assume that the algorithms constructing $F$ and $G$ are \emph{deterministic}, denoted by functions $F = F(A, \RA, \RBo)$ and $G = G(F, B, \RB)$. Similar to the previous section, we still use $H(X)$ to denote the entropy of an object $X$, and thus $H(F)$ and $H(G)$ form lower bounds of the size of the retrieval data structure. To start the proof, we define a one-way communication protocol to lower bound the entropy of $G$.

\begin{claim}\label{clm:dynamicEncodingSetup}
    We must have $H(B \mid F, G) + H(G) \geq nv + H(B) - O(n)$.
\end{claim}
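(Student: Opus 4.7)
The plan is to mimic the incremental-case proof of \cref{clm:protocolBound}, adapted to the dynamic setting where the intermediate data structure $F = F(A, \RA, \RBo)$ now depends on the deleted values $\RBo$ as well. We design a one-way communication protocol: Alice, who holds $(A, \RA, \RBo, B, \RB)$, sends a message from which Bob must deterministically recover the entire tuple. The expected message length is then a lower bound on the joint entropy $H(A, \RA, \RBo, B, \RB)$, from which the claim follows after cancellation.

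Concretely, Alice sends three parts: the data structure $G$ (contributing at most $H(G)$ bits in expectation), the pair $(A, \RBo)$ (contributing $H(A) + nv/2$ bits), and an optimal encoding of $B$ conditioned on $(F, G)$ (contributing $H(B \mid F, G)$ bits). Bob decodes in the following order. First, using $G$ and each element of $A$, he queries $G$ to recover $\RA$; the first-bit convention ($\RA$ starts with $1$) ensures that $G$ still returns the correct $v$-bit value for every key in $A$. With $A$, $\RA$, and $\RBo$ in hand (and $B_0 = \{1, \ldots, n/2\}$ being a fixed set), Bob deterministically recomputes $F = F(A, \RA, \RBo)$. Now knowing both $F$ and $G$, he decodes $B$ from the third part of the message, and finally queries $G$ on each element of $B$ to recover $\RB$ (first-bit $0$).

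Since Bob deterministically reconstructs $(A, \RA, \RBo, B, \RB)$, the expected message length is at least $H(A, \RA, \RBo, B, \RB)$. The three value sets are mutually independent and independent of the keys, so they contribute $H(\RA) + H(\RBo) + H(\RB) = (n/2)(v-1) + nv/2 + (n/2)(v-1) = \tfrac{3}{2}nv - O(n)$ bits. Exactly as in the incremental argument, the joint law of $(A, B)$ gives $H(B \mid A) \geq H(B) - O(n)$, because conditioning on $A$ removes one element from each part $\mathcal{U}_i$, a loss of at most $O(1/|\mathcal{U}_i|)$ bits of entropy per coordinate summed over $n/2$ coordinates. Combining these,
\[
H(G) + H(A) + \tfrac{nv}{2} + H(B \mid F, G) \;\geq\; H(A) + H(B) + \tfrac{3}{2}nv - O(n),
\]
and cancelling $H(A) + \tfrac{nv}{2}$ from both sides yields $H(G) + H(B \mid F, G) \geq nv + H(B) - O(n)$, as desired.

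The only conceptually new step compared to the incremental proof is that Alice must transmit $\RBo$ explicitly, since these values are no longer recoverable from $G$ after the deletion of $B_0$; this is the main (mild) obstacle. Fortunately, the $nv/2$ bits Alice pays for $\RBo$ are exactly offset by the extra $nv/2$ bits that $\RBo$ contributes to the recovered entropy, so the bookkeeping goes through cleanly and the lower bound matches the incremental one.
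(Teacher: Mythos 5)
Your proof is correct and follows essentially the same approach as the paper: the same one-way protocol where Alice sends $A$, $\RBo$, $G$, and an encoding of $(B \mid F, G)$, the same decoding order for Bob, and the same entropy bookkeeping. The only cosmetic difference is that you substitute $H(\RBo) = nv/2$ numerically where the paper keeps $H(\RBo)$ symbolic and cancels it at the end.
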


\begin{proof}
Consider the following one-way communication game. Alice wishes to send a single message to Bob, such that Bob can recover all of $A, \RA, B, \RB, \RBo$. The protocol is as follows.

\begin{enumerate}
\item Alice sends a single message consisting of $M = \BK{A, \RBo, G, (B \mid F, G)}$.
\item Given $G$ and $A$, Bob can recover $\RA$, as $A$ is still in the key set of the retrieval data structure $G$.
\item Based on the knowledge of $B_0$, $\RBo$, $A$, and $\RA$, Bob reconstructs the data structure $F$, which depends only on these parameters.
\item Now that Bob has $F$ and $G$, he can reconstruct $B$ with the help of $(B \mid F, G)$.
\item Finally, because Bob has $G$, and the set $B$ has been inserted into $G$ with associated values $\RB$, Bob can recover the values $\RB$.
\end{enumerate}

The entropy of the information that Bob recovers is at least 
\[
H(A, \RA, B, \RB, \RBo) = H(\RA) + H(\RB) + H(\RBo) + H(A) + H(B \mid A).
\]
For the terms on the right-hand side, we have $H(\RA) \geq (n/2)(v-1)$, $H(\RB) \geq (n/2)(v-1)$, and $H(B \mid A) \geq H(B) - O(n)$,
where the final equality holds because every element in $(B \mid A)$ is uniformly random from $(\frac{u}{n/2} - 2)$ possibilities, lowering the entropy by at most 1 bit compared to $B$ without conditioning on $A$. Thus, we see that the information Bob recovers has at least $nv + H(\RBo) + H(A) + H(B) - O(n)$ bits of entropy.

Finally, observe that the message Alice sends contains only $ A, \RBo, G, (B \mid F, G)$. Thus, it must be the case that 
\[
H(A) + H(\RBo) + H(G) + H(B \mid F, G) \geq nv + H(\RBo) + H(B) + H(A) - O(n),
\]
which means that 
\[
H(G) + H(B \mid F, G) \geq nv + H(B) - O(n).
\qedhere
\]
\end{proof}

\begin{remark}
    It suffices to prove that $H(B \mid F, G) \leq H(B) - \Omega(n \log\log(n))$, as this then immediately implies that $H(G) \geq nv + \Omega(n \log\log(n))$.
\end{remark}

To show that $H(B \mid F, G)$ is small, we will create an explicit encoding of $(B \mid F, G)$ which has a small expected size. This will immediately yield a bound on the entropy.

\subsection{Encoding Argument for \texorpdfstring{$(B \mid F, G)$}{(B | F, G)}}

In this subsection, we present a sequence of claims and definitions that we will use to conclude that $H(B \mid F, G) \leq H(B) - \Omega(n \log\log(n))$.
Similar to the incremental lower bound in \cref{sec:lb_incremental}, we start by defining the \emph{plausible sets} of a data structure $F$.

\begin{definition}
    We say that a tuple $(x, r_x, r_i)$ is in $(A, \RA, \RBo)_i^p$ if there exists $A', R'_{A'}, R'_{B_0}$ such that
    \begin{enumerate}
        \item $F(A, \RA, \RBo) = F(A', R'_{A'}, R'_{B_0})$,
        \item $x$ is the $i$th element of $A'$ with associated value $r_x \in R'_{A'}$, and
        \item $r_i \in R'_{B_0}$ is the $i$th value in $R'_{B_0}$.
    \end{enumerate}
\end{definition}

To clarify, when we say that $r_x \in R'_{A'}$, this means that the value associated to $x$ in $R'_{A'}$ is $r_x$.

\begin{claim}\label{clm:plausible_size_dynamic}
    Let $S \geq H(F(A, \RA, \RBo))$. Then,
    \[
    \E_{i, A, \RA, \RBo}\Bk[\big]{|(A, \RA, \RBo)_i^p|} \geq \frac{u - n/2}{n/2} \cdot \frac{1}{2^{(2S - 2nv)/n + O(1)}}.
    \]
\end{claim}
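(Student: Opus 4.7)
The plan is to mimic the encoding argument from \cref{clm:plausibleSize} in the incremental case, extended to account for the additional randomness in $\RBo$. Consider the one-way communication game in which Alice draws $(A, \RA, \RBo)$ from the prescribed distribution, deterministically computes $F = F(A, \RA, \RBo)$, and transmits a message from which Bob can reconstruct the whole triple. Alice's message has two parts: (i) the data structure $F$, encoded in at most $S + O(1)$ bits in expectation (using a Shannon-optimal prefix code), and (ii) for each index $i \in [n/2]$, the index of the true tuple $(a_i, r_{a_i}, r_i^{\RBo})$ inside the plausible set $\Plausible$, where $a_i$ is the $i$-th element of $A$, $r_{a_i}$ its value in $\RA$, and $r_i^{\RBo}$ the $i$-th value in $\RBo$. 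Bob recovers $\Plausible$ from $F$ alone (by enumerating consistent triples $(A', R'_{A'}, R'_{B_0})$), then uses the indices to read off $A$, $\RA$, and $\RBo$.

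Next I would account for the entropy Alice's message must convey. By independence of the three draws, $H(A, \RA, \RBo) = H(A) + H(\RA) + H(\RBo)$. Since $A$ chooses one element uniformly from each of the $n/2$ equal parts of $\U \setminus [n/2]$, each of size $(u-n/2)/(n/2)$, we get $H(A) = (n/2) \log((u - n/2)/(n/2))$. Since each value in $\RA$ has its first bit fixed to $1$, $H(\RA) = (n/2)(v-1)$. Since $\RBo$ is unconstrained, $H(\RBo) = (n/2) v$. Altogether, $H(A, \RA, \RBo) = (n/2) \log((u-n/2)/(n/2)) + nv - n/2$.

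Comparing the expected message length with this entropy yields
\[
S + \sum_{i=1}^{n/2} \E\bigl[\log |\Plausible|\bigr] \;\geq\; (n/2) \log((u-n/2)/(n/2)) + nv - O(n).
\]
Dividing through by $n/2$ gives
\[
\E_{i, A, \RA, \RBo}\bigl[\log |\Plausible|\bigr] \;\geq\; \log\bigl((u-n/2)/(n/2)\bigr) + 2v - 2S/n - O(1),
\]
and Jensen's inequality (applied to the concave function $\log$) promotes this to the claim,
\[
\E\bigl[|\Plausible|\bigr] \;\geq\; \frac{u - n/2}{n/2} \cdot \frac{1}{2^{(2S - 2nv)/n + O(1)}}.
\]

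No serious obstacle is expected, since the template is essentially identical to \cref{clm:plausibleSize}. The only subtle point worth flagging in the write-up is the appearance of the coefficient $2$ in the exponent $(2S - 2nv)/n$: here we amortize over only $n/2$ indices (rather than $m \approx n$ as in the incremental setup), while the combined entropy of $\RA$ and $\RBo$ is roughly $nv$ bits; per index, this gives an information density of $2v$ bits and an overhead per index of $2S/n$, which together yield the doubled constants relative to the incremental analogue.
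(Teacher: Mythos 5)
Your proposal is correct and follows essentially the same encoding argument as the paper: Alice sends $F$ plus, for each $i \in [n/2]$, the index of the true tuple within the plausible set $\Plausible$, and one compares the expected message length to $H(A, \RA, \RBo)$ before dividing by $n/2$ and applying Jensen. The entropy bookkeeping (including the $nv - O(n)$ contribution of $\RA$ and $\RBo$ and the origin of the factor $2$ in the exponent) matches the paper's computation.
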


\begin{proof}
We prove this via an encoding argument. Suppose Alice selects a random set $A$, values $\RA$ and $\RBo$, creates the data structure $F(A, \RA, \RBo)$, and then sends this data structure $F$ to Bob. Once Bob receives $F$, he can compute the plausible set $(A, \RA, \RBo)_i^p$ (since this depends only on the data structure $F$). So, in order for Bob to recover $A, \RA, \RBo$, it suffices for Alice to additionally send the index of $(a_i, r_{a_i}, r_i)$ among all tuples in $(A, \RA, \RBo)_i^p$ for each $i \in [n/2]$, where $a_i$ denotes the $i$th element in $A$, $r_{a_i}$ denotes $a_i$'s associated value in $\RA$, and $r_i$ denotes the $i$th value in $\RBo$.

In this process, the total expected size of the message Alice sends is
\[
S + \sum_{i = 1}^{n/2} \E_{A, \RA, \RBo}\Bk[\big]{\log|(A, \RA, \RBo)_i^p|}. \numberthis \label{eq:alice_message}
\]
Bob recovers $A, \RA, \RBo$, which has entropy
\[
H(A, \RA, \RBo) = H(A) + H(\RA) + H(\RBo) \geq \frac{n}{2} \cdot \log \left ( \frac{u - n/2}{n/2} \right ) + nv - O(n). \numberthis\label{eq:bob_info}
\]
The expected length of Alice's message should be larger than the entropy, so \eqref{eq:alice_message} $\ge$ \eqref{eq:bob_info}. Dividing by $n/2$, we get
    \[
    \E_{i, A, \RA, \RBo}\Bk[\big]{\log |(A, \RA, \RBo)_i^p|} \geq \log \left ( \frac{u - n/2}{n/2} \right ) + 2v - 2S/n - O(1),
    \]
    and by Jensen's inequality,
    \[
     \E_{i, A, \RA, \RBo}\Bk[\big]{|(A, \RA, \RBo)_i^p|} \geq \frac{u - n/2}{n/2} \cdot \frac{1}{2^{(2S - 2nv)/n + O(1)}}.
     \qedhere
    \]
\end{proof}

Similar to before, the next step is to introduce a notion of feasibility which captures tuples that remain plausible for $A$ after $B$ has been revealed. 

\begin{definition}\label{def:feasible_dynamic}
    We say that a tuple $(x, r_x, r_i)$ \emph{remains feasible} for $(A, \RA, \RBo)_i^p$ after $B$, if there exists $A', R'_{A'}, R'_{B_0}$ such that
    \begin{enumerate}
        \item $F(A, \RA, \RBo) = F(A', R'_{A'}, R'_{B_0})$,
        \item $x$ is the $i$th element of $A'$ with associated value $r_x \in R'_{A'}$,
        \item $r_i \in R'_{B_0}$ is the $i$th value in $R'_{B_0}$, and
        \item $A' \cap B = \emptyset$.
    \end{enumerate}
\end{definition}

For a random choice of $B$, we show that in expectation, most elements in $(A, \RA, R_{B_0})_i^p$ remain feasible after $B$.

\begin{claim}\label{clm:dynamicProbabilityFeasible}
    Let $A, \RA, \RBo$ be fixed, and consider an element $(x, r_x, r_i) \in (A, \RA, \RBo)_i^p$. Over a random choice of the set $B$, $(x, r_x, r_i)$ remains feasible for $(A, \RA, \RBo)_i^p$ with probability
    $1 - \frac{n^2}{2u}$.
\end{claim}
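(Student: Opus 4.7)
The plan is to mirror the argument of \cref{clm:probabilityFeasible} from the incremental lower bound, adapted to the current sampling model. Since $(x, r_x, r_i) \in \Plausible$, we may fix witnesses $A', R'_{A'}, R'_{B_0}$ satisfying the first three conditions of \cref{def:feasible_dynamic}. Whether $(x, r_x, r_i)$ remains feasible after $B$ then reduces entirely to whether those particular witnesses also satisfy the fourth condition $A' \cap B = \emptyset$, so it suffices to upper bound $\Pr[A' \cap B \ne \emptyset]$ over the random choice of $B$.

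The main step is a union bound over the $n/2$ parts $\U_1, \ldots, \U_{n/2}$. Both $A'$ and $B$ contain exactly one element from each part $\U_j$; call these $a'_j$ and $b_j$ respectively. Recall that $b_j$ is sampled uniformly at random from $\U_j \setminus A$, which has size $(u - n/2)/(n/2) - 1$, since $A$ has exactly one element in every part. There are then two cases for each $j$: if $a'_j \in A$ then $b_j \in \U_j \setminus A$ simply cannot equal $a'_j$; otherwise the collision probability $\Pr[b_j = a'_j]$ is exactly $1/|\U_j \setminus A|$. Summing over all $n/2$ parts and using $u \ge n^3$ to clean up the constants gives
\[
\Pr[A' \cap B \ne \emptyset] \;\le\; \frac{n}{2} \cdot \frac{1}{(u-n/2)/(n/2) - 1} \;\le\; \frac{n^2}{2u},
\]
so $(x, r_x, r_i)$ remains feasible with probability at least $1 - n^2/(2u)$, as claimed.

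There is essentially no serious obstacle here: the only subtlety is that $B$ is sampled subject to $B \cap A = \emptyset$ (rather than $B \cap A' = \emptyset$), so one has to use the correct denominator $|\U_j \setminus A|$ rather than $|\U_j|$ when bounding $\Pr[b_j = a'_j]$. Because $A$ is fixed and the parts $\U_j$ are disjoint, the $b_j$'s across different parts are independent, which cleanly supports the union bound. I expect exactly the same argument to yield the strengthened variant (conditioning on a fixed $b_i \in \U_i$ appearing in $B$) that will be needed in subsequent claims, since such conditioning merely fixes one of the $n/2$ coordinates of $B$'s product distribution and leaves the other $n/2 - 1$ coordinates unaffected.
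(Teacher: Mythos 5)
Your proof is correct and follows essentially the same approach as the paper: fix the witness $A'$ from the plausibility condition, union-bound the event $A' \cap B \ne \emptyset$ over the $n/2$ parts, and note that each $b_j$ is uniform over $\U_j \setminus A$, which has $\frac{2u}{n} - 2$ elements, so each per-part collision probability is at most $\big(\frac{2u}{n} - 2\big)^{-1}$. Your treatment of the edge case $a'_j \in A$ and the observation about which conditional variant will later be needed are both consistent with what the paper does.
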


\begin{proof}
    $(x, r_x, r_i) \in (A, \RA, \RBo)_i^p$ indicates that there exists $A', R'_{A'}, R'_{B_0}$ satisfying the first three conditions in \cref{def:feasible_dynamic}; if $A' \cap B = \emptyset$, this suffices to conclude that $(x, r_x, r_i)$ remains feasible. To bound the probability that $A' \cap B \neq \emptyset$, we see that $B$ randomly samples elements from the $n/2$ parts of the domain. Thus, the probability that $B$ intersects with $A'$ is at most
    \[
    \frac{n}{2} \cdot \frac{1}{\frac{u}{n / 2} - 2} 
    \le \frac{n}{2} \cdot \frac{n}{u}
    = \frac{n^2}{2u}.
    \qedhere
    \]
\end{proof}

Similar to \cref{sec:lb_incremental}, \cref{clm:dynamicProbabilityFeasible} also holds conditioned on that the $i$th element of $B$ equals a given element $b_i \ne x$; when $b_i = x$, it is impossible to let the tuple $(x, r_x, r_i)$ remain feasible after $B$.

\begin{claim}\label{clm:dynamicExpectedFeasible}
    Let $A, \RA, \RBo$ be fixed. The expected number of tuples $(x, r_x, r_i) \in (A,\RA, \RBo)_i^p$ which do not remain feasible for $A$ after a random choice of $B$ is at most $n + 1$.
    Furthermore, this holds even if the $i$th element in $B$ is arbitrarily given while other elements in $B$ are sampled randomly.
\end{claim}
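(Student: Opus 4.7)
The plan is to mimic the proof of \cref{clm:expectedFeasible} from the incremental case, with adjustments for the extra third coordinate of a plausible tuple. First I would argue that tuples in $\Plausible$ are essentially indexed by their first coordinate $x$ alone: the retrieval property of $F$ forces querying $F$ at $x$ to return a unique value $r_x$, so there cannot be two tuples $(x, r_x, r_i)$ and $(x, r_x', r_i)$ with $r_x \ne r_x'$; and the third coordinate $r_i$ is similarly pinned down because $B_0 = \{1, 2, \ldots, n/2\}$ is fixed in the setup, $i \in B_0$, so $r_i$ must equal the unique answer of $F$ on input $i$. Since $x$ is constrained to lie in $\mathcal{U}_i$, this yields $|\Plausible| \le |\mathcal{U}_i| \le 2u/n$.

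With this cap on the number of tuples in hand, I would invoke linearity of expectation together with \cref{clm:dynamicProbabilityFeasible}: each tuple fails to remain feasible with probability at most $n^2/(2u)$, so the expected number of non-feasible tuples is at most $(2u/n) \cdot (n^2/(2u)) = n$, which already establishes the unconditional half of the claim.

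For the conditional version, I would condition on the $i$th element of $B$ being some fixed $b_i \in \mathcal{U}_i$. The (at most one) tuple with $x = b_i$ automatically fails to remain feasible, since any witnessing $A'$ contains $x = b_i \in B$, contributing at most $1$ to the expectation. For each of the remaining at most $|\mathcal{U}_i| - 1$ tuples with $x \ne b_i$, the conditional extension of \cref{clm:dynamicProbabilityFeasible} (exactly analogous to the conditional extension of \cref{clm:probabilityFeasible} noted in Section 3) still gives per-tuple failure probability at most $n^2/(2u)$, so the expected total is at most $1 + (2u/n - 1) \cdot n^2/(2u) \le n + 1$, matching the claimed bound.

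The only thing that really requires care here is the uniqueness of $r_i$ across the plausible set: one must notice that because $B_0$ is part of the fixed setup rather than being reconstructed along with $A'$ from $F$, the value $r_i$ cannot vary between different witnesses $(A', R'_{A'}, R'_{B_0})$ that realize the same $F$, since querying $F$ at $i$ commits to a single answer. Once that observation is recorded, the rest is a straightforward transcription of the incremental-case argument from \cref{clm:expectedFeasible}.
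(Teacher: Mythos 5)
Your proposal is correct and follows essentially the same route as the paper: bound $|\Plausible|$ by $|\mathcal{U}_i|$ via the observation that $r_x$ and $r_i$ are determined by querying $F$ at $x$ and at $i$ respectively (so each $x$ appears in at most one tuple), then apply linearity of expectation with \cref{clm:dynamicProbabilityFeasible} for the unconditional bound, and handle the conditional version by splitting off the single tuple with $x = b_i$ (contributing $1$) from the rest (each failing with probability at most $n^2/(2u)$). Your added remark that $r_i$ is pinned down because $B_0$ is fixed and $i \in B_0$ is a small elaboration of the paper's more terse statement, but the underlying argument is identical.
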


\begin{proof}
First, note that there cannot be two tuples sharing the same $x$ in $(A,\RA, \RBo)_i^p$, because querying the data structure $F$ with $x$ and $i$ will determine the associated values $r_x \in \RA$ and $r_i \in \RBo$ uniquely.
This implies that the number of possible tuples in $(A, \RA, \RBo)_i^p$ is at most $|\mathcal{U}_i| = \frac{u - n/2}{n/2} = \frac{2u}{n} - 1$.

Then, over a random choice of $B$, we know that each such tuple remains feasible for $(A, \RA, \RBo)_i^p$ after $B$ with probability $1 - \frac{n^2}{2u}$ due to \cref{clm:dynamicProbabilityFeasible}. Thus, the number of tuples which \emph{do not} remain feasible for $(A, \RA, \RBo)_i^p$ is bounded in expectation by $\bk[\big]{\frac{2u}{n} - 1} \cdot \frac{n^2}{2u} \le n$.

When the $i$th element $b_i \in B$ is arbitrarily given, there are two cases: For tuples $(x, r_x, r_i)$ with $x \ne b_i$, the probability of not remaining feasible is still at most $\frac{n^2}{2u}$; the tuple with $x = b_i$ cannot remain feasible, but there is only one such tuple. Combined, the expected number of tuples not remaining feasible is at most $\bk[\big]{\frac{2u}{n} - 1} \cdot \frac{n^2}{2u} + 1 \le n + 1$.
\end{proof}

\begin{claim}\label{clm:dynamicMembershipFeasibility}
    If a tuple $(x, r_x, r_i)$ is in $(A, \RA, \RBo)_i^p$ and remains feasible after $B$,
    then $G$ \emph{must} report that $x \in A$.
\end{claim}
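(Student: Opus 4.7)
The plan is to mimic the argument used for Claim~\ref{clm:returnA} in the incremental setting, adapted to the dynamic setting where $B_0$ is deleted before $B$ is inserted. The key observation is that $G$ is a deterministic function of the triple $(F, B, \RB)$, so two different ``worlds'' that produce the same intermediate data structure $F$ must produce the same $G$, and hence the same answer to any query.

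First I would unpack the hypotheses. Since $(x, r_x, r_i) \in (A, \RA, \RBo)_i^p$ and remains feasible after $B$, the definition of feasibility (Definition~\ref{def:feasible_dynamic}) guarantees the existence of an alternative triple $(A', R'_{A'}, R'_{B_0})$ such that $F(A', R'_{A'}, R'_{B_0}) = F(A, \RA, \RBo)$, the element $x$ is the $i$th element of $A'$ with associated value $r_x$ in $R'_{A'}$, and crucially $A' \cap B = \emptyset$. So the ``alternative world'' $(A', R'_{A'}, R'_{B_0}, B, \RB)$ is a legal input to the random process: $A'$ is a valid sampled set, the values satisfy the first-bit constraints, and $B$ is disjoint from $A'$.

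Next I would invoke determinism. By Yao's minimax principle we have assumed that the constructor $G(\cdot,\cdot,\cdot)$ is a deterministic function of $F$, $B$, and $\RB$. Since the intermediate data structure $F$ is identical in the true world and the alternative world, we have
\[
G\bigl(F(A,\RA,\RBo),\, B,\, \RB\bigr) \;=\; G\bigl(F(A',R'_{A'},R'_{B_0}),\, B,\, \RB\bigr),
\]
so the output of any query on $G$ must agree in the two worlds. In the alternative world, after deleting $B_0$ from $F$ and inserting $B$ with values $\RB$, the key set of $G$ is exactly $A' \cup B$, and $x \in A'$ is associated with the value $r_x$. By the setup, every value in $R'_{A'}$ has first bit $1$, so the correctness of the retrieval data structure forces $G$ to return a value whose first bit is $1$ when queried on $x$, which by our convention is reported as ``$x \in A$.''

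No real obstacle is expected here: the argument is essentially a verification that the reduction/feasibility machinery interacts correctly with the deletion step. The only subtle point to be careful about is that the deletion of $B_0 = \{1,\ldots,n/2\}$ is a fixed operation depending only on $F$ (not on $\RBo$), which is what lets the two worlds produce identical post-deletion states; this is guaranteed by the equality $F(A,\RA,\RBo) = F(A',R'_{A'},R'_{B_0})$.
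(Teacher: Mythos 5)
Your argument is correct and matches the paper's proof: both rely on the determinism of $G$ as a function of $(F, B, \RB)$, use the witness triple $(A', R'_{A'}, R'_{B_0})$ guaranteed by feasibility to construct an indistinguishable alternative world where $x \in A'$ with a value whose first bit is $1$, and conclude that $G$ must report $x \in A$. The closing remark about the deletion of $B_0$ is already subsumed by the fact that $G$ is a deterministic function of the intermediate state $F$, so it adds nothing but also introduces no error.
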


\begin{proof}
    Observe that we could have initialized the data structure with the $A', R'_{A'}, R'_{B_0}$ that witness $(x, r_x, r_i) \in (A, \RA, \RBo)_i^p$.
    By definition, the result data structure has the same encoding as $F$, and still maintains the property of disjointness between $A$ and $B$, as required by our set-up. Because this case is indistinguishable to $G$ from the case when we initialized $F$ with $(A, \RA, \RBo)$, $G$ must treat both cases as possible, and therefore respond $A$ to any query from $x$ where $(x, r_x, r_i)$ is a tuple which remains feasible.
\end{proof}

Combining \cref{clm:dynamicExpectedFeasible,clm:dynamicMembershipFeasibility}, we know that the expected number of elements $(x, r_x, r_i) \in \Plausible$ for which $G$ answers $x \in B$ is at most $n + 1$.
This also holds conditioned on that a given element $b_i$ is the $i$th element of $B$; further, by the law of total probability, it holds conditioned on the $i$th element $b_i$ of $B$ being in the plausible set.
Based on this fact, we create an encoding of the set $B$ conditioning on $F$ and $G$ as shown below.

\begin{claim}\label{clm:dynamicEncoding}
    Let $S \geq \max\BK{H(F), H(G)}$. Then, there exists an encoding of $B$ given $F, G$ of expected size
    \[
    H(B) - \frac{n}{2} \cdot \frac{\log(n)}{2^{(2S - 2nv)/n +  O(1)}} + O(n).
    \]
\end{claim}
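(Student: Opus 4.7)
The plan is to construct an explicit encoding of $B$ given $F$ and $G$, in direct analogy with \cref{clm:encodingSize}. For each $i \in [n/2]$, I encode the $i$th element $b_i \in B$ as follows. First I send one bit indicating whether there exists a tuple of the form $(b_i, *, *)$ in $\Plausible$ (the decoder can compute this set from $F$ alone). In the ``yes'' case, I send the index of $b_i$ among all $x \in \U_i$ such that $(x, *, *) \in \Plausible$ \emph{and} $G$ reports $x \in B$ when queried on $x$; in the ``no'' case, I send $b_i$ explicitly using $\lceil \log|\U_i|\rceil$ bits. Bob, holding $F$ and $G$, can compute the plausible set and query $G$ on each candidate, so either encoding is decodable.

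To analyze the expected length, I combine \cref{clm:dynamicExpectedFeasible} and \cref{clm:dynamicMembershipFeasibility}: every tuple that remains feasible after $B$ forces $G$ to answer $A$, so the index in the ``yes'' case ranges only over non-feasible tuples, of which there are at most $n+1$ in expectation --- and, by the law of total probability remark following \cref{clm:dynamicExpectedFeasible}, this also holds conditioned on the event that $b_i$ itself lies in the plausible set. Hence the ``yes'' case costs $\log(n+1) + O(1) = O(\log n)$ bits, while the ``no'' case costs $\log|\U_i| + O(1)$ bits. By \cref{clm:plausible_size_dynamic} and the fact that each first coordinate $x$ witnesses a unique tuple in $\Plausible$, the probability (averaged over $i, A, \RA, \RBo$) that $b_i$ lies in the plausible set is at least $p := 2^{-(2S - 2nv)/n - O(1)}$, where the $-1$ correction for the fact that $a_i$ always lies in the plausible set only changes constants. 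Summing the per-index expected cost $\log|\U_i| - p\cdot(\log|\U_i| - O(\log n)) + O(1)$ over $i \in [n/2]$ and using $u \geq n^3$ (which gives $\log|\U_i| - O(\log n) \ge \log n$ after choosing the constant in the ``yes''-case cost appropriately) and $H(B) = (n/2)\log|\U_i| - O(n)$ yields
\[
\mathbb{E}[\text{encoding length}] \;\le\; H(B) \;-\; \frac{n}{2} \cdot \frac{\log n}{2^{(2S-2nv)/n + O(1)}} \;+\; O(n),
\]
which is the claimed bound.

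The main obstacle I anticipate is the bookkeeping around conditional expectations: $p$ is an average over $(i, A, \RA, \RBo)$, and we need the averaged ``yes''-probability to combine cleanly with the per-index expected index cost coming from \cref{clm:dynamicExpectedFeasible}. Both are ultimately linearity-of-expectation statements, and the step that requires the most care is verifying that conditioning on $b_i$ being in the plausible set does not blow up the $(n+1)$-bound on the number of non-feasible tuples --- this is handled precisely by the ``even conditioned on $b_i$'' extension noted after \cref{clm:dynamicExpectedFeasible}. The minor wrinkle that $a_i$ always contributes one trivial plausible tuple (via the original witness triple) only affects the probability by a $(|\Plausible|-1)/|\Plausible|$ factor absorbed into the $O(1)$ in the exponent; the rest is the routine computation above.
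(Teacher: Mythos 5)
Your proposal matches the paper's proof essentially verbatim: the same per-index encoding (a bit flagging whether $b_i$ lies in the plausible set, then either the index of $b_i$ among plausible-set elements on which $G$ answers $B$, or an explicit $\log|\U_i|$-bit description), and the same analysis combining \cref{clm:plausible_size_dynamic}, \cref{clm:dynamicExpectedFeasible}, and \cref{clm:dynamicMembershipFeasibility} with the law-of-total-probability remark. The only difference is cosmetic bookkeeping, which you have handled correctly.
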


\begin{proof}
Let the set $B$ be sampled at random under the established rules. Then our encoding is as follows:
First, we send $O(n)$ bits indicating for each $b_i \in B$ if there exists $r_{b_i}, r_i$ such that $(b_i, r_{b_i}, r_i) \in (A, \RA, \RBo)_i^p$ (i.e., has the same encoding as $F$).
When this occurs, we also send the index of $b_i$ among all such elements in $(A, \RA, \RBo)_i^p$ for which $G$ returns $B$.
For the other remaining $b_i$, we simply send the description of $b_i$ exactly.

Now, let us calculate the expected size of this encoding. The first part of the message indicating whether each $b_i$ is in the plausible set takes $O(n)$ bits to send. Next, recall \cref{clm:plausible_size_dynamic} that the size of the plausible set satisfies
\[\E_{i, A,\RA, \RBo}\Bk[\big]{|(A,\RA, \RBo)_i^p|} \geq \frac{u - n/2}{n/2} \cdot \frac{1}{2^{(2S - 2nv)/n + O(1)}}.\]
Thus, for a random choice of element $b_i$, the probability that it lies in the plausible set is at least
$1/2^{(2S - 2nv)/n + O(1)}$. When the element $b_i$ is in the plausible set, we send the index among all $x$ which lie in the plausible set yet $G$ responds with ``$B$'' when queried with $x$.
As pointed out above, the number of such elements is bounded in expectation by $n + 1$, conditioned on that $b_i$ is in the plausible set.
Thus, the index for each $b_i$ in the plausible set takes $\log(n + 1)$ bits to send. Finally, for each $b_i$ not in the plausible set, we send it explicitly, spending $\log\bk[\big]{\frac{u - n/2}{n/2}}$ bits each. The total encoding length is
\begin{align*}
    &\phantom{{}={}} O(n) + \frac{n}{2} \cdot \frac{1}{2^{(2S - 2nv) / n + O(1)}} \cdot \log(n + 1) + \frac{n}{2} \cdot \bk*{1 - \frac{1}{2^{(2S - 2nv) / n + O(1)}}} \cdot \log\bk*{\frac{u - n/2}{n/2}} \\
    &= O(n) + \frac{n}{2} \log\bk*{\frac{u - n/2}{n/2}} - \frac{n}{2} \cdot \frac{1}{2^{(2S - 2nv) / n + O(1)}} \cdot \log \bk*{\frac{u - n/2}{(n/2) \cdot (n + 1)}} \\
    &\le O(n) + H(B) - \frac{n}{2} \cdot \frac{1}{2^{(2S - 2nv) / n + O(1)}} \cdot \log(n),
\end{align*}
where the last step is because $\log\bk[\big]{\frac{u - n/2}{(n/2) \cdot (n + 1)}} \ge \log\bk[\big]{\frac{u}{n^2}} - 1 \ge \log(n) - 1$ due to our assumption of $u \ge n^3$.
\end{proof}

Rearranging the terms in \cref{clm:dynamicEncoding}, we know
\[
H(B) - H(B \mid F, G) \geq \frac{n}{2} \cdot \frac{\log(n)}{2^{(2S - 2nv)/n + O(1)}} - O(n).
\numberthis \label{eq:dynamicEncodingCor}
\]
This combined with \cref{clm:dynamicEncodingSetup} leads to our final theorem.

\subsection{Final Statement}

\begin{theorem}
    Let $S = \max\BK{H(F), H(G)}$, and let $v$ be arbitrary. Then, $S = nv + \Omega(n\log\log(n))$.
\end{theorem}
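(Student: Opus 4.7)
The plan is to combine the two headline inequalities already established, namely \cref{clm:dynamicEncodingSetup} and the consequence \eqref{eq:dynamicEncodingCor} of \cref{clm:dynamicEncoding}, and then solve the resulting self-referential inequality on $S$.

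First, I would rearrange \cref{clm:dynamicEncodingSetup} as $H(G) \ge nv + H(B) - H(B \mid F, G) - O(n)$ and substitute in the bound from \eqref{eq:dynamicEncodingCor}. Since $S \ge H(G)$ by assumption, this yields an inequality of the form
\[
S \;\ge\; nv \;+\; \frac{n}{2} \cdot \frac{\log(n)}{2^{(2S - 2nv)/n + O(1)}} \;-\; O(n).
\]
The key feature is that $S$ appears on both sides, but on the right it only appears inside the exponent of a denominator, so the inequality forces $S - nv$ to be reasonably large.

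Next, I would set $T \defeq S - nv + O(n)$ (absorbing the additive $O(n)$ into a change of variable), so the inequality becomes
\[
T \;\ge\; \Omega\!\bk*{\frac{n \log(n)}{2^{2T/n}}}.
\]
Taking logarithms of both sides gives $\log T + 2T/n \ge \log\log(n) - O(1)$, equivalently $\log(T/n) + \Theta(T/n) \ge \log\log(n) - O(1)$. Either $T/n$ is already $\Omega(\log\log(n))$, or else $\log(T/n)$ alone must absorb $\log\log(n)$, which is impossible since $\log(T/n) \le \log(T/n)$ cannot exceed roughly $\log\log(n)$ without already making $T/n = \Omega(\log\log(n))$. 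Either way, $T = \Omega(n \log\log(n))$, which translates back to $S = nv + \Omega(n \log\log(n))$.

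The main obstacle is not really the algebraic manipulation — it is purely bookkeeping at this point, since \cref{clm:dynamicEncodingSetup,clm:dynamicEncoding} have already done the work. The only subtlety I would double-check is that the $O(n)$ additive slack introduced when passing from $S$ to $T$ does not swamp the $\frac{n \log n}{2^{2T/n}}$ term in the regime where $T = o(n \log\log(n))$; this is where assuming (for contradiction) $T \ll n \log\log(n)$ makes the exponential denominator $2^{2T/n}$ much smaller than $\log(n)$, so the right-hand side dominates $O(n)$ and yields the desired contradiction. This mirrors exactly the analogous calculation at the end of \cref{thm:lowerBoundIncremental}, just with parameters tuned to the dynamic setting (where the $\log(n)$ factor in the numerator, as opposed to the weaker $\log(\log(n)/v)$-style bound in the incremental case, is what produces the $\log\log(n)$ final answer independent of $v$).
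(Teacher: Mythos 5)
Your proposal matches the paper's proof essentially line for line: combine \cref{clm:dynamicEncodingSetup} with \eqref{eq:dynamicEncodingCor}, substitute $T = S - nv + O(n)$, take logarithms, and conclude $T/n = \Omega(\log\log n)$. The only blemish is the intermediate line ``$\log T + 2T/n \ge \log\log(n) - O(1)$,'' which drops a $\log n$ term, but you immediately rewrite it correctly as $\log(T/n) + \Theta(T/n) \ge \log\log(n) - O(1)$, so the argument is sound.
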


\begin{proof}
    We know from \cref{clm:dynamicEncodingSetup} that $S \geq nv + H(B) - H(B\mid F, G)$. From \cref{clm:dynamicEncoding}, we know \eqref{eq:dynamicEncodingCor}.
    Combined,
    \[
    S \geq nv + \frac{n}{2} \cdot \frac{1}{2^{(2S - 2nv)/n + O(1)}} \log(n) - O(n).
    \]
    Setting $T = S - nv + O(n)$, this means that
    \[
    T \geq \frac{n}{2} \cdot \frac{1}{2^{2T/n + O(1)}} \log(n).
    \]
    Taking the logarithm on both sides, we get
    \[
    \frac{2T}{n} + \log(2T/n) \geq \log\log(n) - O(1),
    \]
    which means that
    \[
    O(T/n) \geq \log\log(n),
    \]
    therefore $T = \Omega(n\log\log(n))$. Equivalently, $S = nv + \Omega(n\log\log(n))$.
\end{proof}

Because we have a lower bound on the maximum entropy of $F, G$ of $nv + \Omega(n\log\log(n))$ bits, this means that the retrieval data structure requires this many bits of space in expectation as well, giving us our desired space lower bound.

\section{Open Questions}
\label{sec:open}
We conclude the paper with several related open questions.

\paragraph{Retrieval with dynamic resizing.} The main open question left by \cite{DHPP06} was whether one can support dynamic resizing, meaning that the space bound on the data structure depends on the current size of the key set, instead of a fixed upper bound of the size. To the best of our knowledge, this resizing question remains completely open in both the dynamic and incremental cases.

For the incremental setting, one further question is whether resizable retrieval exhibits the same type of phase transition our in the current paper. Note that the incremental algorithm in this paper heavily relies on the fixed upper bound $n$ on the number of keys, and thus cannot be directly extended to support dynamic resizing. 

\paragraph{Practical algorithms for retrieval.} The second question is to design dynamic retrieval data structures that can be efficiently implemented for practical applications.
Despite the theoretical progress on retrieval data structures, there have not yet (as far as we know) been any practical implementations of dynamic retrieval. It seems likely that the known algorithms are not simple enough to be efficient in practice. If one could make dynamic retrieval practical, it is likely that it would find many applications to real-world space-efficient data-structure design.

\bibliographystyle{alpha}
\bibliography{ref}

\appendix
\section{Insertion-Only Linear Probing}
\label{sec:insertion_only_linear_probing}
In this section, we construct an efficient insertion-only linear-probing hash table that was used in our upper bound result in \cref{sec:hashtable}. We first recall \cref{thm:linearProbing}.

\LinearProbing*

\begin{proof}[Proof of \cref{thm:linearProbing}]

    We allocate an array of $M = m \bk[\big]{1 + \frac{1}{\log^2(n)}}$ slots, each capable of storing a $v$-bit value.
    When a key-value pair $(k, y)$ is inserted, we use a \emph{5-wise independent} hash function $h$ to map $k$ to a slot $h(k) \in [M]$, and follow the linear-probing policy to find the first empty slot after $h(k)$ to store the value $y$, then return the offset $p_k$.

    Assume the $i$th inserted key-value pair is $(k_i, y_i)$ for $i \in [m]$.
    For the $i$th inserted key $k_i$, according to \cite{pagh2011linear}, the expected offset $p_{k_i}$ is at most
    \[
    \E[p_{k_i}] = O\bk*{\bk*{\frac{M}{M - i}}^{2.5}}.
    \numberthis \label{eq:offset_expectation}
    \]
    As a loose bound, for all $i \le m$, there is $\E[p_{k_i}] \le O(\log^{5}(n))$, as the maximum load factor is set to $1 - \Theta(1 / \log^2(n))$.
    Taking a summation over $i \le m$, we obtain
    \[
    \E\Bk*{\sum_{i=1}^m p_{k_i}} \le O(m \log^5(n)).
    \]
    By Markov's inequality, we know that the expected number of offsets exceeding $\log^{100}(n)$ is bounded by $O(m / \log^{95}(n))$, satisfying one requirement of \cref{thm:linearProbing}.

    On the other hand, taking the logarithm of \eqref{eq:offset_expectation} and taking a summation over $i \le m$ will give
    \begin{align*}
        \E\Bk*{\sum_{i=1}^{m} \log p_{k_i}} &= O\bk*{\sum_{i=1}^{m} (-2.5) \log\bk*{1 - \frac{i}{M}}}
        = O\bk*{m \int_{\log^{-2}(m)}^{1} (- \ln t) \d t}
        = O(m),
    \end{align*}
    as desired.

    Next, we show how to achieve constant-expected-amortized-time operations. We maintain a bit vector of length $M$ to represent if each slot is empty. In addition, we divide the bit vector into blocks of size $C \defeq \log^{10}(n)$; for each block, we build a B-tree with branching factor $\sqrt{\log(n)}$, supporting the following two operations:
    \begin{enumerate}
    \item Given a position $s$ within the block, find the first 1 at or after $s$ (or report that there is no more 1 within the block).
    \item Update a bit from 1 to 0.
    \end{enumerate}
    The B-tree only has $O(1)$ levels, and the information on each node is just a bit vector of length $O(\sqrt{\log(n)})$ indicating if there are 1's in each child. By standard bit operations, one can see that both operations can be performed in worst-case $O(1)$ time. The B-tree only consumes constant bits for each bit it maintains, i.e., $O(1)$ bits per slot in the linear-probing hash table, which sums up to $O(M)$ and is acceptable.

    With the help of the B-trees, one can detect in constant time whether there are empty slots within an interval of $O(\log^{10}(n))$ slots; if there are empty slots, one can also find the first empty slot after $h(k)$ -- the slot that key $k$ hashes to -- within $O(1)$ time, thus completing the insertion to the hash table.
    When an insertion produces an offset $p_{k_i}$, the above insertion algorithm takes $O\bk[\big]{1 + \frac{p_{k_i}}{\log^{10}(n)}}$ time. Combined with the fact that $\E[p_{k_i}] \le O(\log^5(n))$, each insertion takes constant time in expectation.

    Finally, this linear-probing hash table is space-efficient: The $M$ slots consume $Mv = mv + o(m)$ bits of space, while the B-trees maintaining bit vectors cost $O(M) = O(m)$ bits of space. The hash function, which is only 5-wise independent, consumes $O(\log \U)$ bits which is negligible. They fit in the space requirement of the theorem.
\end{proof}

\end{document}